\pgfplotsset{compat=1.14}
\def \R{\mathbb R}
\newcommand{\sset}[1]{\left\{ #1\right\}}
\newcommand{\ssets}[1]{\{ #1\}}
\newcommand{\fwh}[1]{\; \left| \; #1 \right.}
\newcommand{\card}[1]{\left| #1 \right|}
\DeclareMathOperator*{\argmin}{argmin}
\newcommand{\algoname}[1]{\ensuremath{\text{\rm\sc #1}}}
\newcommand{\union}{\cup}
\newcommand{\map}{\longrightarrow}
\newcommand{\ifif}{\Longleftrightarrow}
\newcommand{\inters}{\cap}
\newcommand{\then}{\Longrightarrow}
\DeclareMathOperator*{\expect}{\mathbb E}
\newcommand{\vecc}[1]{\ensuremath{\mathbf{#1}}}
\newcommand{\opt}{\ensuremath{\mathrm{OPT}}}
\newcommand{\nashset}[1]{\ensuremath{\mathrm{NE}(#1)}}
\newcommand{\scsum}{C}
\newcommand{\pos}{\mathrm{PoS}}
\title{The Price of Stability of Weighted Congestion Games\thanks{A preliminary
version of this paper appeared in ICALP'18~\citep{cggs2018}.\funding{Supported
by the Alexander von Humboldt Foundation with funds from the German Federal
Ministry of Education and Research (BMBF), and by EPSRC grants
EP/M008118/1 and EP/L011018/1.}}}
\author{George Christodoulou\thanks{Department of Computer Science, University of Liverpool
  (\email{G.Christodoulou@liverpool.ac.uk}, \email{gairing@liverpool.ac.uk}).}
\and Martin Gairing\footnotemark[2]
\and Yiannis Giannakopoulos\thanks{Chair of Operations Research, TU Munich (\email{yiannis.giannakopoulos@tum.de}).}
\and Paul G. Spirakis\thanks{Department of Computer Science, University of Liverpool; Computer Engineering and Informatics Department, University of Patras  (\email{P.Spirakis@liverpool.ac.uk}).}
}
\begin{document}

\maketitle
\begin{abstract}
We give exponential lower bounds on the Price of Stability (PoS) of weighted
congestion games with polynomial cost functions. In particular, for any positive
integer $d$ we construct rather simple games with cost functions of degree at most
$d$ which have a PoS of at least $\varOmega(\Phi_d)^{d+1}$, where $\Phi_d\sim d/\ln
d$ is the unique positive root of equation $x^{d+1}=(x+1)^d$. This almost closes the
huge gap between $\varTheta(d)$ and $\Phi_d^{d+1}$. Our bound extends also to
network congestion games. We further show that the PoS remains exponential even for
singleton games. More generally, we provide a lower bound of
$\varOmega((1+1/\alpha)^d/d)$ on the PoS of $\alpha$-approximate Nash equilibria for
singleton games. All our lower bounds hold for mixed and correlated equilibria as
well.

On the positive side, we give a general upper bound on the PoS of
$\alpha$-approximate Nash equilibria, which is sensitive to the range $W$ of the
player weights and the approximation parameter $\alpha$. We do this by explicitly
constructing a novel approximate potential function, based on Faulhaber's formula,
that generalizes Rosenthal's potential in a continuous, analytic way. From the
general theorem, we deduce two interesting corollaries. First, we derive the
existence of an approximate pure Nash equilibrium with PoS at most $(d+3)/2$; the
equilibrium's approximation parameter ranges from $\varTheta(1)$ to $d+1$ in a
smooth way with respect to $W$. Secondly, we show that for unweighted congestion
games, the PoS of $\alpha$-approximate Nash equilibria is at most $(d+1)/\alpha$.
\end{abstract}

\begin{keywords}
  congestion games, price of stability, Nash equilibrium, approximate equilibrium, potential games
\end{keywords}

\begin{AMS}
68Q99, 91A10, 91A43, 90B20, 90B18
\end{AMS}

\section{Introduction}

In the last 20 years, a central strand of research within Algorithmic
Game Theory has focused on understanding and quantifying the
inefficiency of equilibria compared to centralized, optimal
solutions. There are two standard concepts that measure this
inefficiency. The Price of Anarchy (PoA)~\cite{KP99} which takes the
worst-case perspective, compares the worst-case equilibrium with the
system optimum. It is a very robust measure of performance. On the
other hand, the Price of Stability (PoS)~\citep{Schulz2003,ADKTWR04}, which is
also the focus of this work, takes an optimistic perspective, and uses
the best-case equilibrium for this comparison. The PoS is an appropriate
concept to analyse the ideal solution that we would like our protocols
to produce.

The initial set of problems that arose from the Price of Anarchy
theory have now been resolved. The most rich and well-studied among
these models are, arguably, the atomic and non-atomic variants of
congestion games~(see \citep[Ch.~18]{2007a} for a
detailed discussion). This class of games is very descriptive and
captures a large variety of scenarios where users compete for
resources, most prominently routing games. The seminal work of
Roughgarden and Tardos \citep{RT02,RT04} gave the answer for the
non-atomic variant, where each player controls a negligible amount of
traffic.  \citet{AAE05,CK05a} resolved the Price of Anarchy for atomic
congestion games with affine latencies, generalized by
\citet{Aland2011} to polynomials; this led to the development of
Roughgarden's Smoothness Framework~\cite{Rou09} which extended the
bounds to general cost functions, but also distilled and formulated
previous ideas to bound the Price of Anarchy in an elegant, unified
framework. At the computational complexity front, we know that even for simple congestion
games, finding a (pure) Nash equilibrium is a PLS-complete problem~\citep{Fabrikant2004a,Ackermann2008}.

Allowing the players to have different loads, gives rise to the class
of \emph{weighted} congestion games~\citep{Rosenthal1973b}; this is a natural and very
important generalization of congestion games, with numerous
applications in routing and scheduling. Unfortunately though, an
immediate dichotomy between weighted and unweighted congestion games
occurs: the former may \emph{not} even have pure Nash equilibria~\citep{Libman2001,Fotakis2005a,Goemans2005,Harks2012a}; as a matter of fact, it is a strongly NP-hard problem to even determine if that is the case~\citep{Dunkel2008}.
Moreover, in such games
there does not, in general, exist a potential function~\cite{Monderer:1996sp,Harks2011},
which is the main tool for proving equilibrium existence in the
unweighted case.

As a result, a sharp contrast with respect to our understanding of the
two aforementioned inefficiency notions arises.  The Price of Anarchy
has been studied in depth and general techniques for providing tight
bounds are known.  Moreover, the asymptotic behaviour of weighted and
unweighted congestion games with respect to the Price of Anarchy is
identical; it is $\Theta(d/\log d)^{d}$ for both classes when
latencies are polynomials of degree at most $d$~\cite{Aland2011}.

The situation for the Price of Stability though, is completely different. For
unweighted games we have a good understanding\footnote{Much work has been also done
on the PoS for \emph{network design games}, which is though not so closely related
to ours; in such games the cost of using an edge is split equally among players, and
thus cost functions are decreasing, as opposed to our model of congestion games with
nondecreasing latencies. This problem was first studied by \citet{ADKTWR04} who
showed a tight bound of $H_n$, the harmonic number of the number of players $n$, for
directed networks. Finding tight bounds on undirected networks is still a
long-standing open problem (see, e.g., \citep{Fiat2006,Bilo2013,Lee2013}). Recently,
\citet{Bilo2014} (asymptotically) resolved the question for broadcast networks. For
the weighted variant of this problem, \citet{Albers2009} showed a lower bound of
$\varOmega(\log W/\log \log W)$, where $W$ is the sum of the players' weights, while
\citet{Chen2008} an upper bound of $O(\log W/\alpha)$ for $\alpha$-approximate
equilibria (the latter is similar in spirit to our results in
\cref{sec:upper_bounds}). See \cite{Bilo2014,Albers2009} and references therein for
a thorough discussion of those results.} and the values are much lower than the
Price of Anarchy values, and also {\em tight}; approximately 1.577 for affine
functions~\cite{CK05b,Caragiannis2010a}, and $\Theta(d)$~\citep{Christodoulou2015}
for polynomials. For weighted games though there is a huge gap; the current state of
the art lower bound is $\varTheta(d)$ and the upper bound is $\varTheta(d/\ln d)^d$.
These previous results are summarized at the left of \cref{table:results_previous}.

The main focus of this work is precisely to deal with this lack of
understanding, and to determine the Price of Stability of weighted
congestion games.
What makes this problem challenging is that the only general known technique for
showing upper bounds for the Price of Stability is the potential method, which is
applicable only to potential games. In a nutshell, the idea of this method is to use
the global minimizer of Rosenthal's potential \cite{Rosenthal1973a} as an
equilibrium refinement. This equilibrium is also a pure Nash equilibrium and can
serve as an upper bound of the Price of Stability.  Interestingly, it turns out
that, for several classes of potential games, this technique actually provides the
tight answer (see for example~\citep{ADKTWR04, CK05b, Caragiannis2010a,
Christodoulou2015}). However, as already mentioned above, unlike their unweighted
counterparts, weighted congestion games are not potential games;\footnote{For the
special case of weighted congestion games with linear latency functions, a potential
does exist~\citep{Fotakis2005a} and this was used by~\citep{Bilo2017} to provide a
PoS upper bound of $2$.} so, a completely fresh approach is required.
One way to override the aforementioned limitations of non-existence of pure Nash
equilibria, but also their computational hardness, is to consider \emph{approximate}
equilibria. In this direction, \citet{Hansknecht2014} have shown that
$(d+1)$-approximate pure Nash equilibria always exist in weighted congestion games
with polynomial latencies of maximum degree $d$, while, in the negative side, there
exist games that do not have $1.153$-approximate pure Nash equilibria.  Notice here,
that these results do not take into account computational complexity considerations;
if we insist in polynomial-time algorithms for actually finding those equilibria,
then the currently best approximation parameter becomes $d^{O(d)}$
\citep{Caragiannis2011,Caragiannis2015a,Feldotto2017}.

\subsection{Our Results}

We provide lower and upper bounds on the Price of Stability for the
class of weighted congestion games with polynomial latencies
with nonnegative coefficients. We consider both exact and approximate
equilibria.
\renewcommand{\arraystretch}{1.6}
\begin{table}[t]
{\footnotesize
\captionsetup{position=top} 
\caption{The Price of Anarchy and
Stability for unweighted and weighted congestion games, with polynomial latency
functions of maximum degree $d$. $\Phi_d$ is the unique positive solution of
$(x+1)^d=x^{d+1}$ and $\Phi_d=\varTheta(d/\log d)$. Tight answers were known for all
settings, except for the Price of Stability of the weighted case were only trivial
bounds existed. In this paper we almost close this gap by
showing a lower bound of $\varOmega(\Phi_d)^{d+1}$ (\cref{th:PoSLower_weighted}), which remains exponential even for singleton games (\cref{th:PoS_lower_approx}).}
\label{table:results}
\begin{center}
\subfloat[Previous results]{\label{table:results_previous}
\begin{tabular}[b]{r | c | c|}
 \multicolumn{1}{r}{}
 &  \multicolumn{1}{c}{PoA}
 & \multicolumn{1}{c}{PoS} \\
\cline{2-3}
unweighted & $\lfloor \Phi_d\rfloor^{d+1}$ \citep{Aland2011} & $\varTheta(d)$ \citep{Christodoulou2015}\\
\cline{2-3}
weighted & $\Phi_d^{d+1}$ \citep{Aland2011} & $[\varTheta(d),\Phi_d^{d+1}]$\\
\cline{2-3}
\end{tabular}
}
\subfloat[This paper]{\label{table:results_ours}
\begin{tabular}[b]{r| c|}
 \multicolumn{1}{r}{}
 &  \multicolumn{1}{c}{PoS lower bound}\\
 \cline{2-2}
 general & $\varOmega(\Phi_d)^{d+1}$\\
 \cline{2-2}
 singleton & $\varOmega(2^d/d)$\\
  \cline{2-2}
  \parbox[c]{2.5cm}{\raggedleft$\alpha$-approximate equilibria} & $\varOmega((1+1/\alpha)^d/d)$\\
  \cline{2-2}
\end{tabular}
} \end{center}
}
\end{table}
Our lower bounds are summarized at \cref{table:results_ours}.

\paragraph{Lower Bound for Weighted Congestion Games}
In our main result in \cref{th:PoSLower_weighted}, we resolve a
long-standing open problem by providing almost tight bounds
for the Price of Stability of weighted congestion games with
polynomial latency functions. We construct an instance having a Price
of Stability of $\varOmega(\Phi_d)^{d+1}$, where $d$ is the maximum
degree of the latencies and $\Phi_d\sim\frac{d}{\ln d}$ is the unique
positive solution of equation $(x+1)^d=x^{d+1}$.

This bound almost closes the previously huge gap between $\varTheta(d)$ and
$\Phi_d^{d+1}$ for the PoS of weighted congestion games. The previously best lower
and upper bounds were rather trivial: the lower bound corresponds to the PoS results
of \citet{Christodoulou2015} for the unweighted case (and thus, it is also a valid
lower bound for the general weighted case as well) and the upper bound comes from
the Price of Anarchy results of \citet{Aland2011} (PoA, by definition, upper-bounds
PoS). It is important to make clear here that our lower bound still leaves an open
gap for future work: the constant within the base of $\varOmega(\Phi_d)$ in
\cref{th:PoSLower_weighted} produces a lower bound of $(\frac{1}{2} \Phi_d)^{d+1}$,
which is formally a factor of $2^{d+1}$ away from the $\Phi_d^{d+1}$ PoA upper
bound.

Although as mentioned before, weighted congestion
games do not always possess pure equilibria, our lower bound
construction involves a \emph{unique} equilibrium occurring by
iteratively eliminating strongly dominated strategies. As a result,
this lower bound holds not only for pure, but mixed and correlated
equilibria as well.

\paragraph{Singleton Games} Next we switch to the class of singleton
congestion games, where a pure strategy for each player is a {\em
  single} resource. This class is very well-studied as, on one hand,
it abstracts scheduling environments, and on the other, it has very
attractive equilibrium properties; unlike general weighted congestion
games, there exists an (ordinal) lexicographic potential~\citep{Fotakis2009a,Harks2012}, thus
implying the existence of {\em pure} Nash equilibria. It is important
to note that the tight lower bounds for the Price of Anarchy of
general weighted congestion games hold also for the class of
singleton games~\citep{Caragiannis2010a,Bhawalkar2014a,Bilo2017a}.

Even for this special class, we show in
\cref{th:PoS_lower_approx} an exponential lower bound of
$\varOmega(2^d/d)$. The previous best lower and upper bounds were the
same as those of the general case, namely $\varTheta(d)$ and $\Phi_d^{d+1}$,
respectively.
As a matter of fact, this new lower bound comes as a corollary of a
more general result that we show in \cref{th:PoS_lower_approx},
that extends to approximate equilibria and gives a lower bound of
$\varOmega((1+1/\alpha)^d/d)$ on the PoS of $\alpha$-approximate
equilibria, for any (multiplicative) approximation parameter
$\alpha\in[1,d)$.
Setting $\alpha=1$ we recover the special case of exact
equilibria and the aforementioned exponential lower bound on the
standard, exact notion of the PoS.
Notice here that, as we show in \cref{th:PoS_OPT}, the optimal solution (which, in general, is not an equilibrium) itself constitutes a $(d+1)$-approximate equilibrium with a (trivially) optimal PoS of $1$.

\paragraph{Positive Results for Approximate Equilibria}
In light of the above results, in \cref{sec:upper_bounds}, we
turn our attention to identifying environments with more structure or
flexibility with respect to the underlying solution concept, for which
we can hope for improved quality of equilibria. Both our lower bound
constructions discussed above
use players' weights that form a
geometric sequence. In particular the ratio $W$ of the largest over the
smallest weight is equal to $w^n$ (for some $w>1$),
which grows very large as the number of players $n\to\infty$. On the other hand, for games
where the players have equal weights, i.e. $W=1$, we know that the PoS is at most
$d+1$. It is therefore natural to ask how the performance of the good
equilibria captured by the notion of PoS varies with respect to $W$. In \cref{th:PoS_upper_general}, we are able to give a general upper bound for $\alpha$-approximate
equilibria which is sensitive to this parameter $W$ and to $\alpha$.
This general theorem has two immediate, interesting corollaries.

Firstly (\cref{th:PoS_upper_general_exact_only}), by allowing the ratio $W$ to range in $[1,\infty)$, we derive the existence of an
$\alpha$-approximate pure Nash equilibrium with PoS at most $(d+3)/2$; the equilibrium's approximation parameter $\alpha$ ranges from $\varTheta(1)$ to $d+1$ in a smooth way with respect to $W$.
This is
of particular importance in settings where player weights are not very
far away from each other (that is, $W$ is small).
Secondly (\cref{th:PoS_upper_general_unweighted}), by setting $W=1$ and allowing
$\alpha$ to range up to $d+1$, we get an upper bound of $\frac{d+1}{\alpha}$ for the
$\alpha$-approximate PoS of {\em unweighted} congestion games which, to the best of
our knowledge, was not known before, degrading gracefully from $d+1$ (which is the
actual PoS of exact equilibria in the unweighted case~\citep{Christodoulou2015})
down to the optimal value of $1$ if we allow $(d+1)$-approximate equilibria (which
in fact can be achieved by the optimum solution itself; see \cref{th:PoS_OPT}).

\paragraph{Our Techniques}
An advantage of our main lower bound (\cref{th:PoSLower_weighted}) is the simplicity of the underlying
construction, as well as its straightforward adaptation to network games (see
\cref{sec:network})). However, fine-tuning the parameters of the game
(player weights and latency functions), to ensure uniqueness of the
equilibrium at the ``bad'' instance, was a technically involved task.
This was in part due to
the fact that, in order to guarantee uniqueness (via iteratively
dominant strategies), each player interacts with a window of $\mu$
other players. This $\mu$ depends on $d$ in a delicate way (see \cref{fig:plot_b,lemma:def_c}); it has
to be an integer but, at the same time, needs also to balance nicely with the algebraic properties of
$\Phi_d$.

Moreover we needed to provide deeper insights on the asymptotic, analytic behaviour
of $\Phi_d$, and to explore some new algebraic characteristics of $\Phi_d$ (see,
e.g., \cref{lemma:lower_bound_phi_asymptotic}). It is important to keep in mind that
asymptotically $\Phi_d\sim \frac{d}{\ln d}$ (see~\eqref{eq:phiasy}). The fact that
$\Phi_d=\varTheta\left(\frac{d}{\ln d}\right)$ was already known by the work
of~\citet{Aland2011}; here we provide a more refined characterization of $\Phi_d$'s
growth, using the Lambert-W function (see~\eqref{eq:asymptotic_phi_2}).

In order to derive our upper bounds, we need to define a novel {\em
  approximate potential
  function}~\citep{Chen2008,Christodoulou2011a,Hansknecht2014}.
First, in \cref{lemma:approximate_PoS_ratios}, we identify clear algebraic
sufficient conditions for the existence of approximate equilibria with good
social-cost guarantees, and then explicitly define (see
\eqref{eq:potential_mono_def} and \eqref{eq:pot_def_poly} in the proof of
\cref{th:PoS_upper_general}) a function that satisfies them. This continuous
function, which is defined in the entire space of positive reals, essentially
generalizes that of Rosenthal's in a smooth way: by setting $W=\alpha=1$, we recover
exactly the first significant terms of the well known Rosenthal
potential~\citep{Rosenthal1973a} polynomial, with which one can demonstrate the
usual PoS results for the unweighted case (see, e.g.~\citep{CK05b}). The simple,
analytic way in which this function is defined, is the very reason why we can handle
both the approximation parameter $\alpha$ of the equilibrium and the ratio $W$ of
the weights in a smooth manner while at the same time providing good PoS guarantees.

It is important to stress that, by
the purely analytical way in which our approximate potential function
is defined, in principle it can also incorporate more general cost functions than
polynomials; so, we believe that this technique may be of independent
interest. We point towards that direction in \cref{sec:EulerMaclaurin}.

\section{Model and Notation}
Let $\R$ denote the set of real numbers, $\R_{\geq 0}=[0,\infty)$ and $\R_{>0}=(0,\infty)$.

\paragraph{Weighted Congestion Games}
A \emph{weighted congestion game}
consists of a finite, nonempty set of players $N$ and resources (or facilities) $E$.
Each player $i\in N$ has a \emph{weight} $w_i\in\R_{>0}$ and a \emph{strategy set}
$S_i\subseteq 2^E$. Associated with each resource $e\in E$ is a \emph{cost} (or
\emph{latency}) \emph{function} $c_e: \R_{>0}\map \R_{\geq 0}$. In this paper we
mainly focus on polynomial cost functions with maximum degree $d\geq 0$ and
nonnegative coefficients; that is, every cost function is of the form
$c_e(x)=\sum_{j=0}^d a_{e,j} \cdot x^j$, with $a_{e,j}\ge 0$ for all $j$. In the
following, whenever we refer to polynomial cost functions we mean cost functions of
this particular form.

A \emph{pure strategy profile} is a choice of strategies $\vecc{s}= (s_1, s_2,...,
s_n)\in {S}={S}_1\times \cdots \times {S}_n$ by the players. We use the standard
game-theoretic notation $\vecc{s}_{-i}=(s_1,\ldots, s_{i-1},\allowbreak
s_{i+1},\allowbreak \ldots s_n)$, ${S}_{-i}={S}_1\times \cdots \times S_{i-1} \times
S_{i+1} \times \cdots \times S_n$, such that $\vecc{s}=(s_i,\vecc{s}_{-i})$. Given a
pure strategy profile $\vecc{s}$, we define the \emph{load} $x_e(\vecc{s})$ of
resource $e\in E$ as the total weight of players that use resource $e$ on $\vecc s$,
i.e., $x_e(\vecc{s})=\sum_{i\in N: e\in s_i} w_i$. The \emph{cost} player $i$ is
defined by $C_i(\vecc{s})=\sum_{e\in s_i} c_e(x_e(\vecc{s}))$.

A \emph{singleton} weighted congestion game is a special form of congestion games
where the strategies of all players consist only of single resources; that is, for
all players $i\in N$, $\card{s_i}=1$ for all $s_i\in S_i$. In a weighted
\emph{network} congestion games the resources $E$ are given as the edge set of some
directed graph $G=(V,E)$, and each player $i\in N$ has a source $o_i\in V$ and
destination $t_i\in V$ node; then, the strategy set $S_i$ of each player is
implicitly given as the edge sets of all directed $o_i\to t_i$ paths in $G$.

\paragraph{Nash Equilibria} A pure strategy profile $\vecc{s}$ is a pure \emph{Nash
equilibrium} if and only if for every player $i\in N$ and for all $s'_i\in {S}_i$,
we have $C_i(\vecc{s})\leq C_i(s'_i,\vecc{s}_{-i})$. Similarly a strategy profile is
an \emph{$\alpha$-approximate pure Nash equilibrium}, for $\alpha\geq 1$, if
$C_i(\vecc{s})\leq \alpha \cdot C_i(s'_i,\vecc{s}_{-i})$ for all players $i\in N$
and $s'_i\in S_i$. As discussed in the introduction, weighted congestion games do
not always admit pure Nash equilibria. However, by Nash's theorem they have mixed
Nash equilibria. A tuple ${\vecc \sigma}=( \sigma_1, \cdots, \sigma_N)$ of
independent probability distributions over players' strategy sets is a \emph{mixed
Nash equilibrium} if
\[ \expect_{\vecc{s}\sim \sigma}[C_i(\vecc{s})] \leq \expect_{\vecc{s}_{-i} \sim{
  \sigma}_{-i}}[C_i(s_{i}',\vecc{s}_{-i} )] \] holds for every $i\in N$ and $s_i'
\in S_i$. Here $\sigma_{-i}$ is a product distribution of all $\sigma_j$'s with
$j\neq i$, and $\vecc{s}_{-i}$ denotes a strategy profile drawn from this
distribution. We use $\nashset{G}$ to denote the set of all mixed Nash equilibria of
a game $G$.

\paragraph{Social Cost and Price of Stability}
Fix a weighted congestion game $G$. The \emph{social cost} of a pure strategy
profile $\vecc{s}$ is the weighted sum of the players' costs
$$\scsum(\vecc{s})=\sum_{i\in N}w_i \cdot C_i(\vecc{s}) = \sum_{e\in E}
x_e(\vecc{s}) \cdot c_e(x_e(\vecc{s})).$$
Denote by $\opt(G)=\min_{\vecc{s}\in S} \scsum(\vecc{s})$ the \emph{optimum social
cost} over all strategy profiles $\vecc{s}\in {S}$.
Then, the \emph{Price of Stability (PoS)} of $G$ is the social cost of the best-case
Nash equilibrium over the optimum social cost: $$ \pos (G) =
\min_{\sigma\in\nashset{G}}\frac{\expect_{\vecc{s}\sim
\sigma}[\scsum(\vecc{s})]}{\opt(G)}.$$ The Price of Stability of
$\alpha$-approximate Nash equilibria is defined accordingly. The PoS for a class
$\mathcal G$ of games is the worst (i.e., largest) PoS among all games in the class,
that is, $\pos(\mathcal G)=\sup_{G\in \mathcal G} \pos(G)$. For example, our focus
in this paper is determining the Price of Stability for the class $\mathcal G$ of
weighted congestion games with polynomial cost functions.

For brevity, we will sometimes abuse our formal terminology and refer to the ``PoS
of $\vecc s $'' for a specific (approximate) equilibrium $\vecc s $ of a game $G$
(see, e.g., \cref{th:PoS_upper_general}); by that we will mean the approximation
ratio of the social cost of $\vecc s$ to the optimum, i.e., $\frac{C(\vecc
s)}{\opt(G)}$. Clearly, the PoS of \emph{any} such equilibrium $\vecc s$ is a valid
upper bound to the PoS of the entire game $G$.

Finally, notice that, by using a straightforward scaling argument, it is without
loss with respect to the PoS metric to analyse games with player weights in
$[1,\infty)$; if not, divide all $w_i$'s with $\min_i w_i$ and scale cost functions
accordingly.

\section{Lower Bounds}
\label{sec:lower_bounds}
In this section, we present our lower bound constructions. In
\cref{sec:lower_general} we present the general lower bound and then
in \cref{sec:lower_singleton} the lower bound for singleton games.

\subsection{General Congestion Games}
\label{sec:lower_general}
The next theorem presents our main negative result on the Price of
Stability of weighted congestion games with polynomial latencies of
degree $d$, that almost matches the Price of Anarchy upper
bound of $\Phi_{d}^{d+1}$ from \citet{Aland2011}. Our result shows a
strong separation for the Price of Stability between weighted and unweighted
congestion games; the Price of Stability of the latter is at most
$d+1$~\cite{Christodoulou2015}. This is in sharp contrast to the Price of Anarchy
of these two classes, where the respective bounds are essentially the same.

To state our result, we first introduce some notation. Let $\Phi_d\sim \frac{d}{\ln
d}$ be the unique positive root of equation $(x+1)^d=x^{d+1}$ and let $\beta_d$ be a
parameter with $\beta_d\geq 0.38$ for any $d$,
$\lim_{d\to\infty}\beta_d=\frac{1}{2}$ (formally, $\beta_d$ is defined in
\eqref{eq:b_def} below, and a plot of its values can be seen in \cref{fig:plot_b}).

\begin{theorem}
\label{th:PoSLower_weighted} The Price of Stability of weighted congestion games
with polynomial latency functions of degree at most $d\geq 9$ is at least
$(\beta_d\Phi_d)^{d+1}$.
\end{theorem}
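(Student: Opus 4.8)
The plan is to construct, for every integer $d\ge 9$, a weighted congestion game --- in fact a family of them, indexed by a large number $n$ of players --- with monomial latency functions $c_e(x)=a_e x^{d}$, in which the players' weights form a geometric progression $w_i=w^{i}$ with a ratio $w$ of order $\Phi_d$, each player controls only two strategies (a ``cheap'' one $A_i$ and a ``bad'' one $B_i$), and the resources are arranged along a chain whose coefficients $a_e$ decay geometrically, so that the all-cheap profile $\vecc a=(A_1,\dots,A_n)$ is inexpensive while the all-bad profile $\vecc b=(B_1,\dots,B_n)$ is \emph{forced} as the unique outcome of iterated elimination of strictly dominated strategies. The crucial feature of the chain is that each player's best response, once the heavier players are decided, effectively depends only on a \emph{window} of $\mu$ nearby players; this localization is what makes the uniqueness argument go through, and the integer $\mu$ --- together with $w$ and the coefficient ratio --- is what one must balance against the identity $(\Phi_d+1)^{d}=\Phi_d^{d+1}$, the unavoidable loss in this balancing being exactly the factor $\beta_d$ of~\eqref{eq:b_def} (see~\cref{fig:plot_b}), which tends to $\tfrac12$.

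First I would bound $\opt$ from above simply by evaluating $\scsum(\vecc a)$: under $\vecc a$ each cheap resource carries essentially a single player's weight (the geometric progression makes the other contributions lower-order), so its cost is of order $a_e w_i^{d+1}$ and summing the geometric series leaves an $\opt$-estimate dominated by the heaviest-player term.

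The heart of the proof is showing that $\vecc b$ is the \emph{unique} profile surviving iterated elimination of strictly dominated strategies, run in order of decreasing weight. For the heaviest player its own weight dominates the total weight $\sum_{j<n}w^{j}=O(w^{n}/(w-1))$ of everyone else, so $B_n$ is unconditionally strictly dominant; inductively, once all heavier players are pinned to their bad strategies, the load player $i$ faces under either of its strategies is, up to a geometrically small perturbation, a fixed function of $w_i$ and of the already-fixed strategies of the $\mu$ players in its window --- everyone lighter contributing only $O(w_i/(w-1))$, and everyone farther than $\mu$ away contributing negligibly through the decayed coefficients. Plugging the geometric weights and coefficients into the best-response inequality $C_i(B_i,\cdot)<C_i(A_i,\cdot)$ (required for every completion by the lighter players) and cancelling the common $w_i^{d}$ reduces it to a single algebraic inequality in $w,\mu,d$, which has to be made compatible with the reverse-type inequalities guaranteeing that $\vecc a$ stays cheap. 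Making all of these hold at once is the step I expect to be the main obstacle: $\mu$ must be an integer, yet the partial geometric sums that appear (such as $\sum_{k=1}^{\mu}w^{-k}$) must also line up with $(\Phi_d+1)^{d}=\Phi_d^{d+1}$, and it is exactly this tension that forbids using the full base $\Phi_d$ and produces the $\beta_d$ loss (hence the residual $2^{d+1}$ gap to the Price of Anarchy). Carrying this out requires the refined facts on $\Phi_d$ developed earlier --- $\Phi_d=\varTheta(d/\ln d)$, $\Phi_d\sim d/\ln d$ by~\eqref{eq:phiasy}, the Lambert-$W$ estimate~\eqref{eq:asymptotic_phi_2}, and in particular~\cref{lemma:lower_bound_phi_asymptotic} together with the choice of $\mu$ in~\cref{lemma:def_c} --- plus the hypothesis $d\ge 9$, which is what makes $\beta_d\ge 0.38$ and the integer window length fit together.

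Finally I would bound $\scsum(\vecc b)$ from below: by construction the loads on the relevant resources at $\vecc b$ exceed the corresponding optimal loads by a factor tuned to be $\approx\Phi_d$, and since the social cost of a monomial resource is the $(d+1)$-st power of its load, this produces a per-term ratio $\approx\Phi_d^{d+1}$; the geometric growth of the weights lets the heaviest terms dominate both $\scsum(\vecc b)$ and $\opt$, so, taking $n$ large and absorbing the $\beta_d$ loss, $\scsum(\vecc b)/\opt\ge(\beta_d\Phi_d)^{d+1}$. Because $\vecc b$ is the only profile surviving iterated \emph{strict} dominance, it is the only correlated equilibrium --- hence the only mixed, hence the only pure Nash equilibrium --- so $\pos\ge(\beta_d\Phi_d)^{d+1}$ holds for pure, mixed and correlated equilibria alike, as claimed.
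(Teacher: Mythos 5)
Your proposal has the right outline (two-strategy players, geometric weights, iterated strict dominance forcing a unique ``bad'' profile, a window of width $\mu$, letting $n\to\infty$), but several of the concrete claims that make the argument close do not hold and are in fact opposite to what the paper's construction requires.

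First, the geometric ratio of the weights is \emph{not} of order $\Phi_d$. In the paper $w=1+1/\Phi_d$, so $w\to1$ as $d\to\infty$ while $w^d=\Phi_d$ and $w^{d+1}=\Phi_d+1$. This near-unit ratio is what makes a window of $\mu=c_d\,d=\varTheta(d)$ consecutive players carry total weight $\alpha w^i=\beta_d\Phi_d\cdot w^i$ on a single facility --- i.e.\ roughly $\Phi_d$ times a single player's weight --- which is exactly the per-resource load inflation needed to squeeze a factor $\Phi_d^{d+1}$ out of a degree-$d$ monomial. With $w$ of order $\Phi_d$ the window would be vacuous and one would only recover a $2^{\varTheta(d)}$ bound, as in the singleton construction of~\cref{th:PoS_lower_approx}.

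Second, your iterated elimination runs in the wrong direction and rests on a false inequality. You claim the heaviest player's strategy $B_n$ is \emph{unconditionally} dominant because $w^{n}$ dominates $\sum_{j<n}w^{j}$; but with $w=1+1/\Phi_d$ we have $\sum_{j<n}w^{j}=\Phi_d(w^{n}-w)\approx\Phi_d\,w^{n-1}\gg w^{n}$, so the opposite holds. The paper eliminates from the \emph{lightest} end: the first $\mu$ players are forced to deviate because the first $\mu$ facilities are given \emph{constant} (not monomial) latencies, tuned so that $s_i^*$ is always worse; then players $\mu+1,\mu+2,\dots$ are pinned one by one \emph{given} that all lighter players have already been forced. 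Your proposal omits the constant-cost boundary facilities entirely (you assert all latencies are monomials $a_e x^d$), and the induction base therefore fails.

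Third, the cost accounting is wrong. You bound $\opt$ by claiming the geometric series is ``dominated by the heaviest-player term.'' In the actual construction the monomial coefficients $w^{-j(d+1)}$ are chosen precisely so that each facility $j>\mu$ contributes \emph{exactly} $1$ to $C(\vecc s^*)$ and exactly $\alpha^{d+1}$ to $C(\tilde{\vecc s})$; the bound comes from $n$ such equal terms, with $\opt\approx n$ and Nash cost $\approx\alpha^{d+1}n$, and it is essential that both grow linearly in $n$ so that the additive $O((\alpha+1)^d\Phi_d^2)$ contribution of the boundary facilities becomes negligible as $n\to\infty$. If the sums were dominated by a single term, sending $n\to\infty$ would not help and the boundary terms would not wash out.

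Finally, the part of the argument you flag as ``the main obstacle'' --- establishing the conditional dominance $C_i(\tilde s_i,\vecc s_{-i})<C_i(s_i^*,\vecc s_{-i})$ uniformly over all responses of the players $i+1,\dots,i+\mu$ --- is indeed where the paper does the real work (\cref{claim:best_s_i^*} and \cref{app:PoS_lower_8_proof}, via a non-obvious reduction to a profile with at most one ``defector'' in the window, together with the monotonicity properties in \cref{lemma:ratio_monotone,lemma:lower_bound_phi_asymptotic}); your sketch names the right ingredients but does not supply the reduction, and with the incorrect weight ratio and elimination order, the algebraic inequality it reduces to would not be the one \cref{lemma:def_c} is designed to make true.
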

We must mention here that the restriction of $d\geq 9$ is without loss: for
polynomial latencies of smaller degrees $d\leq 8$ we can instead apply the simpler
lower-bound instance for singleton games given in \cref{sec:lower_singleton}. To
prove the theorem, we will need the following technical lemma. Its proof can be
found in \cref{app:def_c_proof}.
\begin{lemma}
\label{lemma:def_c} For any positive integer $d$ define
\begin{equation}
\label{eq:c_def} c_d=\frac{1}{d}\left\lfloor
d\frac{\ln(2\cdot \Phi_d+1)-\ln(\Phi_d+1)}{\ln \Phi_d}
\right\rfloor
\end{equation} and
\begin{equation}
\label{eq:b_def}
\beta_d = 1-\Phi_d^{-c_d}.
\end{equation}
Then
\begin{equation}
\label{eq:c_def_lemma_1}
\Phi_d^{d+2}\leq \left(\Phi_d+\frac{1}{\beta_d} \right)^d,
\end{equation} and for all $d\geq 9$,
\begin{equation}
\label{eq:c_def_lemma_2} d\cdot c_d \geq 3,
\qquad 0.38\leq \beta_d \leq \frac{1}{2}
\qquad \text{and}
\qquad
\lim_{d\to\infty} \beta_d=\frac{1}{2}.
\end{equation}
Plots of parameters $c_d$ and $\beta_d$ can be found in \cref{fig:plot_b}.
\begin{figure}
\scalebox{0.84}{
\begin{tikzpicture}
\begin{axis}[width=9cm, title={Plot of $\beta_b$},
    xmin=9, xmax=100, ymin=0.38, ymax=0.5, xtick={10,20,30,40,50,60,70,80,90,100},
    ytick={0.38,0.4,0.45,0.5}, ymajorgrids=true, xmajorgrids=true, grid
    style=dashed, ]
\addplot[ color=blue, mark=*, ] coordinates {
    (4.,0.43016)(5.,0.394577)(6.,0.365517)(7.,0.341221)(8.,0.439911)(9.,0.417652)(10.,0.397938)(11.,0.38033)(12.,0.453611)(13.,0.437096)(14.,0.421947)(15.,0.40799)(16.,0.466516)(17.,0.453274)(18.,0.440886)(19.,0.429266)(20.,0.418342)(21.,0.466979)(22.,0.45646)(23.,0.446478)(24.,0.43699)(25.,0.427959)(26.,0.469646)(27.,0.460881)(28.,0.452485)(29.,0.444436)(30.,0.43671)(31.,0.473231)(32.,0.465695)(33.,0.45843)(34.,0.451421)(35.,0.444653)(36.,0.477177)(37.,0.470555)(38.,0.464139)(39.,0.45792)(40.,0.451888)(41.,0.481223)(42.,0.475307)(43.,0.469554)(44.,0.463958)(45.,0.458511)(46.,0.453207)(47.,0.479888)(48.,0.474668)(49.,0.469575)(50.,0.464604)(51.,0.459752)(52.,0.484271)(53.,0.479489)(54.,0.474813)(55.,0.470239)(56.,0.465764)(57.,0.461383)(58.,0.484038)(59.,0.479713)(60.,0.475475)(61.,0.471319)(62.,0.467245)(63.,0.463249)(64.,0.48431)(65.,0.480359)(66.,0.476479)(67.,0.47267)(68.,0.468928)(69.,0.488635)(70.,0.484933)(71.,0.481294)(72.,0.477716)(73.,0.474196)(74.,0.470734)(75.,0.489233)(76.,0.485805)(77.,0.48243)(78.,0.479107)(79.,0.475835)(80.,0.472612)(81.,0.490045)(82.,0.486851)(83.,0.483703)(84.,0.480601)(85.,0.477542)(86.,0.474527)(87.,0.471553)(88.,0.488022)(89.,0.485071)(90.,0.48216)(91.,0.479288)(92.,0.476454)(93.,0.473656)(94.,0.48928)(95.,0.486502)(96.,0.48376)(97.,0.481051)(98.,0.478376)(99.,0.475734)(100.,0.490597)
    };
\addplot[ color=red ] coordinates {
    (4.,0.245122)(5.,0.277202)(6.,0.301201)(7.,0.319928)(8.,0.335006)(9.,0.347444)(10.,0.357906)(11.,0.366846)(12.,0.374585)(13.,0.38136)(14.,0.387348)(15.,0.392684)(16.,0.397473)(17.,0.401798)(18.,0.405727)(19.,0.409313)(20.,0.412602)(21.,0.415631)(22.,0.41843)(23.,0.421026)(24.,0.423441)(25.,0.425694)(26.,0.427802)(27.,0.429779)(28.,0.431636)(29.,0.433386)(30.,0.435037)(31.,0.436599)(32.,0.438077)(33.,0.43948)(34.,0.440813)(35.,0.442081)(36.,0.44329)(37.,0.444443)(38.,0.445544)(39.,0.446597)(40.,0.447605)(41.,0.448572)(42.,0.449499)(43.,0.450389)(44.,0.451244)(45.,0.452067)(46.,0.45286)(47.,0.453623)(48.,0.454359)(49.,0.45507)(50.,0.455756)(51.,0.456419)(52.,0.45706)(53.,0.45768)(54.,0.458281)(55.,0.458862)(56.,0.459426)(57.,0.459973)(58.,0.460503)(59.,0.461018)(60.,0.461518)(61.,0.462004)(62.,0.462477)(63.,0.462936)(64.,0.463383)(65.,0.463819)(66.,0.464243)(67.,0.464655)(68.,0.465058)(69.,0.46545)(70.,0.465832)(71.,0.466206)(72.,0.46657)(73.,0.466925)(74.,0.467273)(75.,0.467612)(76.,0.467943)(77.,0.468267)(78.,0.468584)(79.,0.468894)(80.,0.469197)(81.,0.469493)(82.,0.469784)(83.,0.470068)(84.,0.470346)(85.,0.470619)(86.,0.470886)(87.,0.471147)(88.,0.471404)(89.,0.471655)(90.,0.471902)(91.,0.472144)(92.,0.472381)(93.,0.472614)(94.,0.472843)(95.,0.473067)(96.,0.473287)(97.,0.473504)(98.,0.473716)(99.,0.473925)(100.,0.47413)
    };
\addplot[color=red] coordinates
{(4.,0.43016)(5.,0.437599)(6.,0.443376)(7.,0.448018)(8.,0.451845)(9.,0.455064)(10.,0.457818)(11.,0.460204)(12.,0.462295)(13.,0.464146)(14.,0.465798)(15.,0.467282)(16.,0.468625)(17.,0.469846)(18.,0.470963)(19.,0.471988)(20.,0.472933)(21.,0.473808)(22.,0.47462)(23.,0.475376)(24.,0.476083)(25.,0.476744)(26.,0.477365)(27.,0.477949)(28.,0.4785)(29.,0.479021)(30.,0.479513)(31.,0.47998)(32.,0.480423)(33.,0.480844)(34.,0.481246)(35.,0.481628)(36.,0.481994)(37.,0.482343)(38.,0.482677)(39.,0.482997)(40.,0.483304)(41.,0.483599)(42.,0.483882)(43.,0.484154)(44.,0.484416)(45.,0.484669)(46.,0.484912)(47.,0.485147)(48.,0.485374)(49.,0.485593)(50.,0.485805)(51.,0.48601)(52.,0.486208)(53.,0.4864)(54.,0.486587)(55.,0.486767)(56.,0.486942)(57.,0.487112)(58.,0.487277)(59.,0.487438)(60.,0.487594)(61.,0.487745)(62.,0.487893)(63.,0.488036)(64.,0.488176)(65.,0.488313)(66.,0.488445)(67.,0.488575)(68.,0.488701)(69.,0.488824)(70.,0.488944)(71.,0.489061)(72.,0.489176)(73.,0.489288)(74.,0.489397)(75.,0.489504)(76.,0.489609)(77.,0.489711)(78.,0.489811)(79.,0.489908)(80.,0.490004)(81.,0.490098)(82.,0.49019)(83.,0.49028)(84.,0.490368)(85.,0.490454)(86.,0.490539)(87.,0.490621)(88.,0.490703)(89.,0.490783)(90.,0.490861)(91.,0.490938)(92.,0.491013)(93.,0.491087)(94.,0.49116)(95.,0.491231)(96.,0.491301)(97.,0.49137)(98.,0.491438)(99.,0.491504)(100.,0.491569)};
\end{axis}
\end{tikzpicture}
\begin{tikzpicture}
\begin{axis}[width=7.2cm, title={Plot of $\mu=d\cdot c_d$},
    xmin=9, xmax=100, ymin=0, ymax=20, xtick={10,20,30,40,50,60,70,80,90,100},
    ytick={3,5,10,15,20}, ymajorgrids=true, xmajorgrids=true, grid style=dashed, ]
\addplot[color=blue] coordinates {
    (4.,2)(5.,2.)(6.,2.)(7.,2.)(8.,3.)(9.,3.)(10.,3.)(11.,3.)(12.,4.)(13.,4.)(14.,4.)(15.,4.)(16.,5.)(17.,5.)(18.,5.)(19.,5.)(20.,5.)(21.,6.)(22.,6.)(23.,6.)(24.,6.)(25.,6.)(26.,7.)(27.,7.)(28.,7.)(29.,7.)(30.,7.)(31.,8.)(32.,8.)(33.,8.)(34.,8.)(35.,8.)(36.,9.)(37.,9.)(38.,9.)(39.,9.)(40.,9.)(41.,10.)(42.,10.)(43.,10.)(44.,10.)(45.,10.)(46.,10.)(47.,11.)(48.,11.)(49.,11.)(50.,11.)(51.,11.)(52.,12.)(53.,12.)(54.,12.)(55.,12.)(56.,12.)(57.,12.)(58.,13.)(59.,13.)(60.,13.)(61.,13.)(62.,13.)(63.,13.)(64.,14.)(65.,14.)(66.,14.)(67.,14.)(68.,14.)(69.,15.)(70.,15.)(71.,15.)(72.,15.)(73.,15.)(74.,15.)(75.,16.)(76.,16.)(77.,16.)(78.,16.)(79.,16.)(80.,16.)(81.,17.)(82.,17.)(83.,17.)(84.,17.)(85.,17.)(86.,17.)(87.,17.)(88.,18.)(89.,18.)(90.,18.)(91.,18.)(92.,18.)(93.,18.)(94.,19.)(95.,19.)(96.,19.)(97.,19.)(98.,19.)(99.,19.)(100.,20.)
    };
\end{axis}
\end{tikzpicture}}
\caption{The values of parameters $\beta_b$ and $c_d$ in \cref{lemma:def_c} and
\cref{th:PoSLower_weighted}, for $d=9,10,\dots,100$.}
\label{fig:plot_b}
\end{figure}
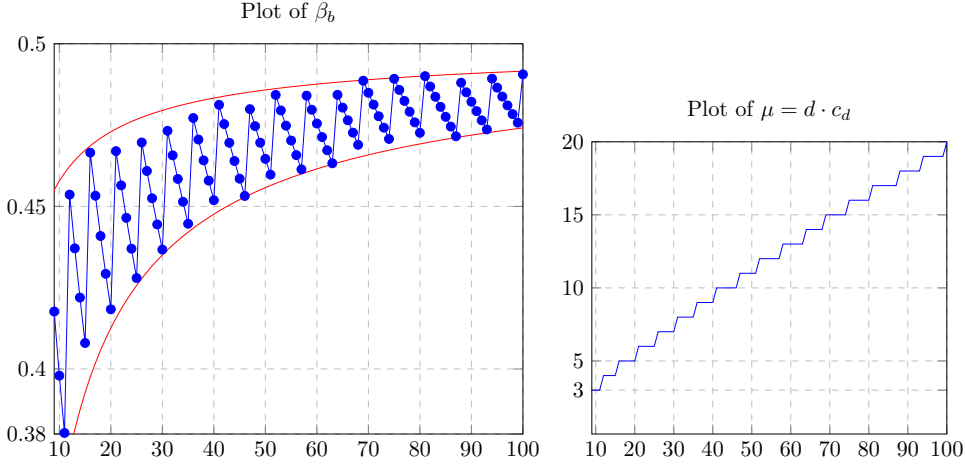
\end{lemma}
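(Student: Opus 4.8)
\emph{Proof plan.}
The plan is to strip the parameter $d$ out of the definitions and reduce the statement to a short one-variable analysis of $\Phi_d$. First I would record the elementary properties of the defining equation $(\phi+1)^d=\phi^{d+1}$: the function $g_d(x):=(1+1/x)^d-x$ is strictly decreasing with $g_d(1)=2^d-1>0$, so the positive root $\Phi_d>1$ is well defined; from $g_{d+1}(\Phi_d)=(1+1/\Phi_d)^{d+1}-\Phi_d>(1+1/\Phi_d)^d-\Phi_d=0$ the sequence $(\Phi_d)_d$ is strictly increasing; it is unbounded, since $\Phi_d\le M$ for all $d$ would force $(1+1/M)^d\le\Phi_d\le M$, which is absurd; and taking $d$-th roots in $(\Phi_d+1)^d=\Phi_d^{d+1}$ yields the identity $\Phi_d^{1/d}=1+1/\Phi_d$, equivalently $\ln\Phi_d=d\ln(1+1/\Phi_d)$. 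Writing $\gamma_d:=\frac{\ln(2\Phi_d+1)-\ln(\Phi_d+1)}{\ln\Phi_d}$, so that $c_d=\frac1d\lfloor d\gamma_d\rfloor$ and $\beta_d=1-\Phi_d^{-c_d}$, this identity makes the product $d\gamma_d$ depend on $d$ only through $\Phi_d$:
\[
 d\gamma_d=h(\Phi_d),\qquad\text{where}\qquad h(\phi):=\frac{\ln\frac{2\phi+1}{\phi+1}}{\ln(1+1/\phi)},
\]
so that $d\,c_d=\lfloor h(\Phi_d)\rfloor$ and $\Phi_d^{-c_d}=(1+1/\Phi_d)^{-\lfloor h(\Phi_d)\rfloor}$. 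The map $h$ is strictly increasing on $(1,\infty)$, because its numerator $\ln(2-\frac1{\phi+1})$ is positive and increasing while its denominator is positive and decreasing; in particular $d\,c_d$ is a non-decreasing, integer-valued function of $d$, consistent with the step plot of $\mu=d\,c_d$ in \cref{fig:plot_b}.

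All assertions except the lower bound $\beta_d\ge 0.38$ then follow quickly. Since $c_d\le\gamma_d$ and $\log_{\Phi_d}$ is increasing, $\Phi_d^{-c_d}\ge\Phi_d^{-\gamma_d}=\frac{\Phi_d+1}{2\Phi_d+1}$, whence $\beta_d\le\frac{\Phi_d}{2\Phi_d+1}<\frac12$, the upper bound in \eqref{eq:c_def_lemma_2}. For \eqref{eq:c_def_lemma_1}, observe $\Phi_d^{(d+2)/d}=\Phi_d(\Phi_d^{1/d})^2=\Phi_d(1+1/\Phi_d)^2=(\Phi_d+1)^2/\Phi_d$, so after taking $d$-th roots the desired inequality becomes $(\Phi_d+1)^2/\Phi_d\le\Phi_d+1/\beta_d$, i.e.\ $\beta_d\le\frac{\Phi_d}{2\Phi_d+1}$ --- exactly what was just proved. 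For $d\,c_d\ge 3$, by monotonicity of $h$ and of $(\Phi_d)_d$ it suffices to check $h(\Phi_9)\ge 3$; this holds because $\Phi_9>5$ (since $(6/5)^9=1.728^3>5$) and $h(5)>3$ (equivalently $(2\cdot5+1)\cdot 5^3=1375>1296=6^4$). For the limit, the upper bound $\beta_d\le\frac{\Phi_d}{2\Phi_d+1}$ is matched from below by $\beta_d\ge\psi(\Phi_d):=\frac{\Phi_d^2-\Phi_d-1}{\Phi_d(2\Phi_d+1)}$, a consequence of $\lfloor h(\Phi_d)\rfloor\ge h(\Phi_d)-1$ together with $(1+1/\Phi_d)^{h(\Phi_d)}=\frac{2\Phi_d+1}{\Phi_d+1}$; since $\Phi_d\to\infty$ both bounds tend to $\frac12$, so $\lim_{d\to\infty}\beta_d=\frac12$.

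The one genuinely delicate point is $\beta_d\ge 0.38$. It is essentially tight --- numerically $\min_{d\ge 9}\beta_d=\beta_{11}\approx 0.3804$ --- so it cannot be derived from a single smooth estimate, and I would split into two ranges. For $d\ge 16$ the lower bound $\psi(\Phi_d)$ already suffices: $\psi$ is increasing (write $\psi(\phi)=\frac12-\frac{3\phi+2}{2\phi(2\phi+1)}$, whose subtracted term is decreasing) and $\Phi_d\ge\Phi_{16}>7$ (since $(8/7)^{16}=((8/7)^4)^4>1.7^4>7$), hence $\beta_d\ge\psi(\Phi_d)\ge\psi(7)=\frac{41}{105}>0.38$. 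For the remaining cases $9\le d\le 15$ I would verify the inequality directly: from the signs of $g_d$ at suitable rationals obtain rigorous enclosures $\underline{\phi}_d\le\Phi_d\le\overline{\phi}_d$, use monotonicity of $h$ to pin down the integer $m_d=d\,c_d$ exactly, and then, since $\phi\mapsto 1-(1+1/\phi)^{-m_d}$ is decreasing, conclude $\beta_d\ge 1-(1+1/\overline{\phi}_d)^{-m_d}\ge 0.38$.

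I expect the main obstacle to be precisely this finite check, and within it the case $d=11$ (where $m_{11}=3$ and $\Phi_{11}\approx 5.782$): there the margin over $0.38$ is only about $10^{-3}$, so the enclosure of $\Phi_{11}$ must be sharpened to roughly four decimal places before the inequality becomes visible. Everything else --- the identity $\Phi_d^{1/d}=1+1/\Phi_d$, the monotonicity of $h$ and of $\psi$, and the reductions of \eqref{eq:c_def_lemma_1} and of the limit --- is soft by comparison.
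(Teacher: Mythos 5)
Your proposal follows the same overall blueprint as the paper's appendix proof: derive the equivalence $\eqref{eq:c_def_lemma_1}\iff\beta_d\le\Phi_d/(2\Phi_d+1)$ via the identity $\Phi_d^{1/d}=1+1/\Phi_d$, obtain the upper bound $\beta_d\le 1/2$ by dropping the floor in $c_d$, obtain the matching lower bound $\beta_d\ge 1-\frac{(\Phi_d+1)^2}{\Phi_d(2\Phi_d+1)}$ by the relaxation $\lfloor x\rfloor\ge x-1$, send $d\to\infty$ for the limit, and fall back on a finite numerical check for the tight constant~$0.38$. (Your $\psi(\Phi_d)$ is algebraically identical to the paper's $1-\frac{\Phi_d+1}{2\Phi_d+1}(1+\frac{1}{\Phi_d})$.)

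Where you genuinely improve on the paper is in the bookkeeping around $d\,c_d\ge 3$. The paper asserts that checking $d=9$ suffices ``by the resulting monotonicity,'' but never isolates or proves the fact that $d\,c_d$ is non-decreasing in $d$ (and indeed $\beta_d$ itself is \emph{not} monotone, as \cref{fig:plot_b} shows, so the intended monotonicity claim is left implicit). Your observation that $\ln\Phi_d=d\ln(1+1/\Phi_d)$ collapses $d\gamma_d$ to the $d$-free function $h(\Phi_d)$, and that $h$ and $(\Phi_d)_d$ are both increasing, supplies exactly the missing monotonicity and makes the reduction to $h(\Phi_9)\ge 3$ airtight. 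Likewise, rewriting the lower bound as the increasing function $\psi$ and anchoring it at $\psi(7)=41/105>0.38$ for $d\ge 16$ gives a self-contained analytic argument for the tail, leaving only the finite check $9\le d\le 15$, whereas the paper simply points to a plot for $d=9,\dots,100$ without saying why the estimate takes over beyond the plotted range. Your flagging of $d=11$ as the near-tight case (margin $\approx 3\times 10^{-4}$, so the enclosure of $\Phi_{11}$ must be fairly sharp) is accurate and is a useful warning that neither a sloppy enclosure nor a purely asymptotic estimate will close that case.

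One cosmetic slip: you describe the $d=11$ margin as ``about $10^{-3}$''; it is closer to $3\times 10^{-4}$, which makes the required precision on $\Phi_{11}$ slightly more demanding than you suggest, though your plan (tighten to four decimals) already accommodates it.
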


\begin{proof}[Proof of \cref{th:PoSLower_weighted}]
Fix some
integer $d\geq 9$.
Our lower bound instance consists of $n+\mu$ players and $n+\mu+1$ facilities, where
$\mu {:=} c\cdot d$ for $c=c_d$ defined as in~\eqref{eq:c_def}. In particular then,
due to \eqref{eq:c_def_lemma_2} of \cref{lemma:def_c}, $\mu\geq 3$ is an integer.
You can think of $n$ as a very large integer, since at the end we will take
$n\to\infty$. Every player $i=1,2,\dots,n+\mu$ has a weight of $w_i = w^i$, where $w
= 1+\frac{1}{\Phi_d}$.

It will be useful for subsequent computations to notice that
$$w^d=\left(1+\frac{1}{\Phi_d}\right)^d=\frac{(\Phi_d+1)^d}{\Phi_d^d}=\frac{\Phi_d^{d+1}}{\Phi_d^d}=\Phi_d,$$
$$w^{d+1}=w^d\cdot w=\Phi_d\left(1+\frac{1}{\Phi_d}\right)=\Phi_d+1.$$ Let us also
define
$$
\alpha=\alpha(\mu)
{:=}\sum_{j=1}^{\mu}w^{-j} =\frac{1-w^{-\mu}}{w-1} =\frac{1-(w^{d})^{-c}}{w-1}
=\frac{1-\Phi_d^{-c}}{1+\frac{1}{\Phi_d}-1} =\Phi_d\left(1-\Phi_d^{-c}\right) =\beta
\Phi_d,
$$
where $\beta=\beta_d$ is defined as in \eqref{eq:b_def}. In the following we will
make extensive use of the observation that
\begin{equation*}
w^{-\mu}=\left(w^d\right)^{-c}=\Phi_d^{-c}=1-\beta.
\end{equation*}
Furthermore, for every $i\geq \mu+1$
$$
\sum_{j=i-\mu}^{i-1}w_j=\sum_{j=1}^{\mu}w^{i-j}=\alpha\cdot w^i
\qquad\text{and}\qquad
\sum_{j=i-\mu}^{i}w_j=(\alpha+1)\cdot w^i,
$$ and
$$
\sum_{\ell=1}^\infty w^{-\ell}=\frac{1}{w-1}=\frac{1}{1+\frac{1}{\Phi_d}-1}=\Phi_d.
$$

The facilities have latency functions
\begin{align*} 
c_{j}(t) &= \Phi_d(1-\beta)(\alpha+1)^d, &&\text{if}\;\;j=1,\dots,\mu, \\ 
c_{j}(t) &= w^{-j(d+1)}t^d, && \text{if}\;\; j=\mu+1,\dots, \mu+n,\\ 
c_{n+\mu+1}(t) &= 0.
\end{align*}

Every player $i$ has two available strategies, $s_i^*$ and $\tilde
s_i$. Eventually we will show that the profile $\vecc s^*$ corresponds
to the optimal solution, while $\tilde{\vecc s}$ corresponds to the
{\em unique} Nash equilibrium of the game.  Informally, at the former
the player chooses to stay at her ``own'' $i$-th facility, while at
the latter she chooses to deviate and play the $\mu$ following
facilities $i+1,\dots,i+\mu$. However, special care shall be taken for
the boundary cases of the first $\mu$ and last $\mu$ players, so for
any player $i$ we formally define $S_i=\sset{s_i^*,\tilde s_i}$ where
$s_i^*=\ssets{i}$ and
$$
\tilde s_i=
\begin{cases}
\ssets{\mu+1,\dots,\mu+i}, & \text{if}\;\; i=1,\dots,\mu,\\
\ssets{i+1,\dots,i+\mu}, & \text{if}\;\; i=\mu+1,\dots,n,\\
\ssets{i+1,\dots,n+\mu+1}, & \text{if}\;\; i=n+1,\dots,n+\mu.
\end{cases}
$$
These two outcomes, $\vecc s^*$ and $\tilde{\vecc s}$, are shown in
\cref{fig:lower_bound_weighted}.
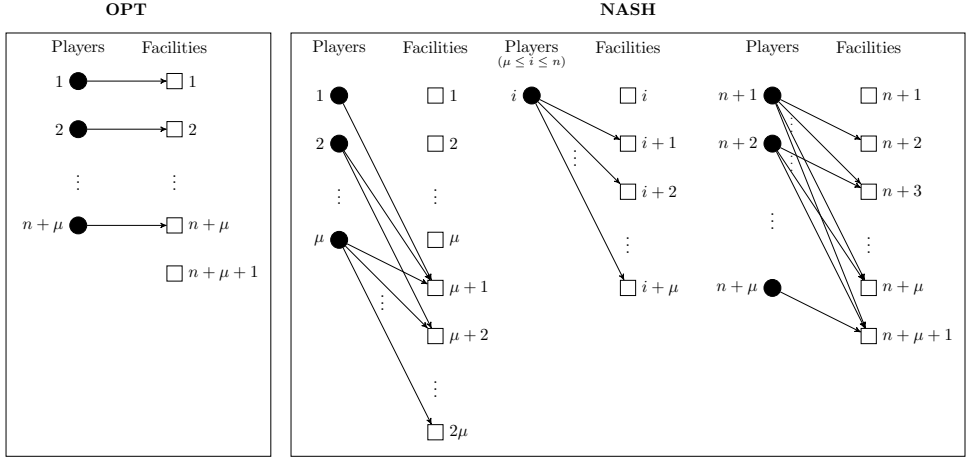
\begin{figure}
\centering
\scalebox{0.64}{
\begin{tikzpicture}[baseline=(current bounding box.north)]
\path[draw] (-1.5,1) rectangle (4,-7.8);
\node[font=\bfseries] at (1,1.5) {OPT};
\node at (0,0.7) {Players};
\node at (2,0.7) {Facilities};
\node[circle,draw,fill,label=left:$1$,minimum size=10pt] (p1) at (0,0) {};
\node[circle,draw,fill,label=left:$2$,minimum size=10pt] (p2) at (0,-1) {};
\node at (0,-2) {$\vdots$};
\node[circle,draw,fill,label=left:$n+\mu$,minimum size=10pt] (pn) at (0,-3) {};
\node[draw,label=right:$1$,minimum size=9pt] (e1) at (2,0) {};
\node[draw,label=right:$2$,minimum size=9pt] (e2) at (2,-1) {};
\node at (2,-2) {$\vdots$};
\node[draw,label=right:$n+\mu$,minimum size=9pt] (en) at (2,-3) {};
\node[draw,label=right:$n+\mu+1$,minimum size=9pt] (en1) at (2,-4) {};
\draw[->,>=stealth'] (p1) -- (e1);
\draw[->,>=stealth'] (p2) -- (e2);
\draw[->,>=stealth'] (pn) -- (en);
\end{tikzpicture}
}
\scalebox{0.64}{
\begin{tikzpicture}[baseline=(current bounding box.north)]
\path[draw] (-1,1.3) rectangle (13,-7.5);
\node at (0,1) {Players};
\node at (2,1) {Facilities};
\node[circle,draw,fill,label=left:$1$,minimum size=10pt] (p1) at (0,0) {};
\node[circle,draw,fill,label=left:$2$,minimum size=10pt] (p2) at (0,-1) {};
\node at (0,-2) {$\vdots$};
\node[circle,draw,fill,label=left:$\mu$,minimum size=10pt] (pmu) at (0,-3) {};
\node[draw,label=right:$1$,minimum size=9pt] (e1) at (2,0) {};
\node[draw,label=right:$2$,minimum size=9pt] (e2) at (2,-1) {};
\node at (2,-2) {$\vdots$};
\node[draw,label=right:$\mu$,minimum size=9pt] (emu) at (2,-3) {};
\node[draw,label=right:$\mu+1$,minimum size=9pt] (emu1) at (2,-4) {};
\node[draw,label=right:$\mu+2$,minimum size=9pt] (emu2) at (2,-5) {};
\node at (2,-6) {$\vdots$};
\node[draw,label=right:$2\mu$,minimum size=9pt] (e2mu) at (2,-7) {};
\draw[->,>=stealth'] (p1) -- (emu1);
\draw[->,>=stealth'] (p2) -- (emu1);
\draw[->,>=stealth'] (p2) -- (emu2);
\draw[->,>=stealth'] (pmu) -- (emu1);
\draw[->,>=stealth'] (pmu) -- (emu2);
\node at (0.9,-4.2) {$\vdots$};
\draw[->,>=stealth'] (pmu) -- (e2mu);
\node[font=\bfseries] at (1+4+1,-1.2+3) {NASH};
\node at (0+4,-2+3) {Players};
\node[font=\scriptsize] at (0+4,-2.3+3) {($\mu\leq i \leq n$)};
\node at (2+4,-2+3) {Facilities};
\node[circle,draw,fill,label=left:$i$,minimum size=10pt] (pi) at (0+4,-3+3) {};
\node[draw,label=right:$i$,minimum size=9pt] (ei) at (2+4,-3+3) {};
\node[draw,label=right:$i+1$,minimum size=9pt] (ei1) at (2+4,-4+3) {};
\node[draw,label=right:$i+2$,minimum size=9pt] (ei2) at (2+4,-5+3) {};
\node at (2+4,-6+3) {$\vdots$};
\node[draw,label=right:$i+\mu$,minimum size=9pt] (eimu) at (2+4,-7+3) {};
\draw[->,>=stealth'] (pi) -- (ei1);
\draw[->,>=stealth'] (pi) -- (ei2);
\node at (0.9+4,-4.2+3) {$\vdots$};
\draw[->,>=stealth'] (pi) -- (eimu);
\node at (0+9,1) {Players};
\node at (2+9,1) {Facilities};
\node[circle,draw,fill,label=left:$n+1$,minimum size=10pt] (pn1) at (0+9,0) {};
\node[circle,draw,fill,label=left:$n+2$,minimum size=10pt] (pn2) at (0+9,-1) {};
\node at (0+9,-2.5) {$\vdots$};
\node[circle,draw,fill,label=left:$n+\mu$,minimum size=10pt] (pnmu) at (0+9,-4) {};
\node[draw,label=right:$n+1$,minimum size=9pt] (en1) at (2+9,0) {};
\node[draw,label=right:$n+2$,minimum size=9pt] (en2) at (2+9,-1) {};
\node[draw,label=right:$n+3$,minimum size=9pt] (en3) at (2+9,-2) {};
\node at (2+9,-3) {$\vdots$};
\node[draw,label=right:$n+\mu$,minimum size=9pt] (enmu) at (2+9,-4) {};
\node[draw,label=right:$n+\mu+1$,minimum size=9pt] (enmu1) at (2+9,-5) {};
\draw[->,>=stealth'] (pn1) -- (en2);
\draw[->,>=stealth'] (pn1) -- (en3);
\node[font=\scriptsize] at (0.4+9,-0.5) {$\vdots$};
\draw[->,>=stealth'] (pn1) -- (enmu);
\draw[->,>=stealth'] (pn1) -- (enmu1);
\draw[->,>=stealth'] (pn2) -- (en3);
\node[font=\scriptsize] at (0.4+9,-1.3) {$\vdots$};
\draw[->,>=stealth'] (pn2) -- (enmu);
\draw[->,>=stealth'] (pn2) -- (enmu1);
\draw[->,>=stealth'] (pnmu) -- (enmu1);
\end{tikzpicture}
}
\caption{The social optimum $\vecc s^*$ and the unique Nash equilibrium
$\tilde{\vecc s}$ in the lower bound construction of \cref{th:PoSLower_weighted} for
general weighted congestion games.}
\label{fig:lower_bound_weighted}
\end{figure}

Notice here that any facility $j$ cannot get a load greater than the sum of the
weights of the previous $\mu$ players plus the weight of the $j$-th player. So, for
and any strategy profile $\vecc s$:
\begin{equation}
\label{eq:PoS_lower_load_bound_trivial} x_j(\vecc s) \leq \sum_{\ell=j-\mu}^{j}
w_{\ell}=(\alpha+1)w^j
\qquad\text{for all}\;\; j\geq \mu+1
\end{equation}

Next we will show that the strategy profile $\tilde{\vecc s}=(\tilde
s_1,\dots,\tilde s_{n+\mu})$ is the \emph{unique} Nash equilibrium of our congestion
game. We do that by proving that

\begin{enumerate}
\item It is a strongly \emph{dominant} strategy for any player
  $i=1,\dots,\mu$ to play $\tilde s_i$.

\item For any $i=\mu+1,\dots,n+\mu$, given that every player $k<i$ has chosen to play
$\tilde s_k$, then it is a strongly \emph{dominant} strategy for player $i$ to
deviate to $\tilde s_i$ as well.
\end{enumerate}

For the first condition, fix some player $i\leq  \mu$ and a strategy profile $\vecc
s_{-i}$ for the other players and observe that by choosing $\tilde s_i$, player $i$
incurs a cost of at most
\begin{align*}
C_i(\tilde s_i,\vecc s_{-i}) &=\sum_{j\in \tilde s_i}c_j(x_j(\tilde s_i))
\leq \sum_{\ell=\mu+1}^{\mu+i} c_{\ell}\left((\alpha+1) w^\ell\right)\\
&=
\sum_{\ell=\mu+1}^{\mu+i} w^{-\ell(d+1)}(\alpha+1)^d w^{\ell d} = (\alpha+1)^d
\sum_{\ell=\mu+1}^{\mu+i} w^{-\ell}\\ 
&< (\alpha+1)^d \cdot w^{-\mu}\cdot\sum_{\ell=1}^\infty
w^{-\ell} = (\alpha+1)^d \cdot (1-\beta)\cdot\Phi_d\\
 &= C_i(s_i^*,\vecc s_{-i}),
\end{align*} where in the first inequality we used the bound from
\eqref{eq:PoS_lower_load_bound_trivial}.

For the second condition, we will consider the deviations of the remaining players.
Fix now some $i=\mu+1,\dots,n$ and assume a strategy profile $\vecc s_{-i}=(\tilde
s_1,\dots, \tilde s_{i-1},\allowbreak s_{i+1},\allowbreak \dots,s_{n+\mu})$ for the remaining
players\footnote{For the remaining last $\mu$ players $i=n+1,\dots,n+\mu$ the proof
is similar, and as a matter of fact easier, since when these players
deviate to $\tilde s_i$ they also use the final ``dummy'' facility $n+\mu+1$ that
has zero cost.}. If player $i$ chooses strategy $s_i^*$ she will experience a cost
of
$$
C_i(s^*_i,\vecc s_{-i})=c_i\left(\sum_{\ell=i-\mu}^i
w_{\ell}\right)=c_i\left((\alpha+1)
w^i\right)=w^{-i(d+1)}(\alpha+1)^dw^{id}=(\alpha+1)^d w^{-i}.
$$
It remains to show that
\begin{equation}
\label{eq:Nash_condition_dev}
C_i(\tilde s_i,\vecc s_{-i}) < C_i(s^*_i,\vecc s_{-i})=(\alpha+1)^d
w^{-i}.
\end{equation}

The cost $C_i(\tilde s_i,\vecc s_{-i})$ is complicated to bound
immediately, for any profile $\vecc s_{-i}$. Instead, we will resort
to the following claim which characterizes the profile $\vecc s_{-i}$
where this cost is maximized, as shown
in~\cref{fig:dominate_format}.
Its proof can be found in \cref{app:PoS_lower_6_proof}.

\begin{claim}
\label{claim:best_s_i^*}
There exists a profile
$\vecc s'_{-i}$ with
\begin{enumerate}
\item $s_j'=s_j$ for all $j<i$ and $j>i+\mu$ \label{prop:dominating_prof_1}
\item $s_{i+\mu}'=s^*_{i+\mu}$ \label{prop:dominating_prof_2}
\item there exists some $k\in\sset{i+1,\dots,i+\mu-1}$ such that
\label{prop:dominating_prof_3}
$$
s_{j}'=\tilde s_j\qquad\text{for all}\;\;
j\in\sset{i+1,\dots,i+\mu-1}\setminus\ssets{k},
$$
\end{enumerate} that dominates $\vecc s_{-i}$, i.e.\
\begin{equation}
\label{eq:helper_PoS_lower_6}
C_i(\tilde s_i,\vecc s_{-i}) \leq C_i(\tilde s_i,\vecc s'_{-i}).
\end{equation}
\begin{figure}
\small
\centering
\scalebox{0.85}{
\begin{tikzpicture}[scale=1.5,nash/.style={circle,draw,fill=white,minimum
size=15pt,inner sep=0pt,path picture={\draw[->,very thick,>=stealth] (path picture
bounding box.west) -- (path picture bounding
box.east);}},opt/.style={circle,draw,fill=white,minimum size=15pt,inner sep=0pt,path
picture={\draw[->,very thick,>=stealth] (path picture bounding box.north) -- (path
picture bounding box.south);}},either/.style={circle,draw,fill=white,minimum
size=15pt,inner sep=0pt,path picture={\draw[->,very thick,>=stealth] (path picture
bounding box.west) -- (path picture bounding box.east);\draw[->,very
thick,>=stealth] (path picture bounding box.north) -- (path picture bounding
box.south);}}]
\draw[fill,gray!30] (1-0.2,0) rectangle (7+0.2,0.4);
\coordinate[nash,label=below:$i$] (i) at (0,0);
\coordinate[nash,label=below:$i+1$] (i+1) at (1,0);
\coordinate[nash,label=below:$k-1$] (k-1) at (2.5,0);
\coordinate[either,label=below:$k$] (k) at (3.5,0);
\coordinate[nash,label=below:$k+1$] (k+1) at (4.5,0);
\coordinate[nash,label=below:$i+\mu-1$] (i+n-1) at (6,0);
\coordinate[opt,label=below:$i+\mu$] (i+n) at (7,0);
\draw (i) -- (i+1);
\draw[dotted,thick] (i+1) -- (k-1);
\draw  (k-1) -- (k) -- (k+1);
\draw[dotted,thick] (k+1) -- (i+n-1);
\draw  (i+n-1) -- (i+n);
\draw[->,dotted,thick] (i+n) -- (8.5,0);
\draw[dotted,thick] (-1.5,0) -- (i);
\coordinate[nash,label=right: \quad{\sf NASH}] (nash) at (-0.5,2);
\coordinate[opt,label=right: \quad{\sf OPT}] (opt) at (2.5,2);
\coordinate[either,label=right: \quad{\sf NASH} or {\sf OPT}] (either) at (5.2,2);
\end{tikzpicture}}
\caption{The format of profile $\vecc s'_{-i}$ described in \cref{claim:best_s_i^*} and returned as output from Procedure
$\algoname{Dominate}(\vecc s_{-i},i)$ (see \cref{app:PoS_lower_6_proof}). All players $i+1,\dots,i+\mu$ (i.e., those who lie
within the window of interest of player $i$, depicted in grey) play according to the Nash equilibrium
$\tilde{\vecc s}$, except the last player $i+\mu$ (that plays according to the
optimal profile $\vecc s^*$) and at most one other $k$ (that may play either
$\tilde s_k$ or $s^*_k$).}
\label{fig:dominate_format}
\end{figure}
\end{claim}

By use of \cref{claim:best_s_i^*}, it remains to show
\begin{equation}
\label{eq:helper_PoS_lower_8}
 C_i(\tilde s_i,\vecc s_{-i}') < (\alpha+1)^d w^{-i},
\end{equation} just for the special case of profiles $\vecc s'$ that are described in \cref{claim:best_s_i^*} and also shown in \cref{fig:dominate_format}.
We do this in \cref{app:PoS_lower_8_proof}.

Summarizing, we proved that indeed $\tilde{\vecc s}$ is the \emph{unique} Nash
equilibrium of our congestion game. Finally, to conclude with lower-bounding the
Price of Stability, let us compute the social cost on profiles $\tilde{\vecc s}$ and
$\vecc s^*$. On $\vecc s^*$, any facility $j$ (except the last one) gets a load
equal to the weight of player $j$, so
\begin{align*} 
C(\vecc s^*) &=\sum_{j=1}^{n+\mu}w_j c_j(w_j)\\ 
&=\sum_{j=1}^{\mu}w^j
\Phi_d(1-\beta)(\alpha+1)^d + \sum_{j=\mu+1}^{n+\mu}w^{j}w^{-j(d+1)}(w^j)^{d}\\
&=\Phi_d(1-\beta)(\alpha+1)^d\sum_{j=1}^\mu w^j+\sum_{j=\mu+1}^{\mu+n} 1\\ 
&=\Phi_d(1-\beta)(\alpha+1)^d
w\frac{w^{\mu}-1}{w-1}+n\\ 
&=\Phi_d(1-\beta)(\alpha+1)^d
\left(1+\frac{1}{\Phi_d}\right)\frac{\frac{1}{1-\beta}-1}{1+\frac{1}{\Phi_d}-1}+n\\
&=\Phi_d(1-\beta)(\alpha+1)^d (\Phi_d+1)\frac{\beta}{1-\beta} +n\\ 
&\leq n+
\Phi_d(\Phi_d+1)\beta(\alpha+1)^{d}.
\end{align*}
On the other hand, at the unique Nash equilibrium $\tilde{\vecc s}$
each facility $j\geq \mu+1$ receives a load equal to the sum of the weights of the
previous $\mu$ players, i.e.
$$
x_j(\tilde{\vecc s})=\sum_{\ell=j-\mu}^{j-1} w_{\ell}=\alpha w^j
$$
so
$$
C(\tilde{\vecc s})
\geq\sum_{j=\mu+1}^{n+\mu}x_j(\tilde{\vecc s})c_j(x_j(\tilde{\vecc s}))
=\sum_{j=\mu+1}^{n+\mu} w^{-j(d+1)}\left(\alpha w^j\right)^{d+1}
=\alpha^{d+1}\sum_{j=\mu+1}^{\mu+n}1=\alpha^{d+1} n.
$$
By taking $n$ arbitrarily large we get a lower bound on the Price of Stability of
$$
\lim_{n\to\infty}\frac{C(\tilde{\vecc s})}{C(\vecc s^*) }
\geq 
\lim_{n\to\infty}\frac{\alpha^{d+1} n}{n+\Phi_d(\Phi_d+1)\beta(\alpha+1)^{d}} 
=\alpha^{d+1}=(\beta\Phi_d)^{d+1},
$$
where from \cref{lemma:def_c} we know that $\frac{1}{3}\leq \beta=\frac{1}{2}-o(1)$.
\end{proof}

\subsubsection{Network Games}
\label{sec:network}
Due to the rather simple structure of the players' strategy sets in the lower bound
construction of \cref{th:PoSLower_weighted}, it can be readily extended to network
games as well:
\begin{proposition}
\label{cor:network_lower}
\cref{th:PoSLower_weighted} applies also to \emph{network} weighted congestion games.
\end{proposition}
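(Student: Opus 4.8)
The plan is to realize each player's two strategies in the instance of \cref{th:PoSLower_weighted} as the two $o_i\to t_i$ routes in a suitable directed graph, exploiting that $s_i^*=\ssets{i}$ is a single facility while every $\tilde s_i$ is a block of \emph{consecutive} facilities. First I would lay the non-constant facilities $\mu+1,\dots,n+\mu$ together with the zero-cost dummy $n+\mu+1$ along a directed ``spine'': introduce vertices $v_{\mu+1},\dots,v_{n+\mu+2}$ and let facility $j$ be the edge $v_j\to v_{j+1}$. Then for a middle player $\mu+1\le i\le n$ the block $\tilde s_i=\ssets{i+1,\dots,i+\mu}$ is exactly the spine sub-path from $v_{i+1}$ to $v_{i+\mu+1}$, and the boundary blocks are initial/terminal spine sub-paths. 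Give each player $i$ a private source $\sigma_i$ and sink $\tau_i$, linked to the spine by zero-cost edges so that one $\sigma_i\to\tau_i$ route (``Nash route'') uses precisely the facilities of $\tilde s_i$ and the other (``OPT route'') passes through the single spine edge $e_i$, flanked by zero-cost connectors; the constant-cost facilities $1,\dots,\mu$, each used by only one player, become private edges $\sigma_i\to\tau_i$ carrying the constant latency $\Phi_d(1-\beta)(\alpha+1)^d$ from the construction. Under this embedding $\vecc s^*$ becomes ``everybody on her OPT route'' and $\tilde{\vecc s}$ becomes ``everybody on her Nash route'', with the same facility loads — and hence the same social costs — as in \cref{th:PoSLower_weighted}.

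The one delicate point is making these the essentially only $\sigma_i\to\tau_i$ routes: naively the zero-cost connectors create a cost-free route for each middle player (enter the spine at $v_{i+1}$, leave immediately to $\tau_i$). I would fix this in the standard way, by splitting every spine vertex $v_j$ into an ``in'' and an ``out'' copy joined by a single zero-cost edge (equivalently, inserting a zero-cost buffer edge after each spine edge), so that the head of $e_i$ and the tail of $e_{i+1}$ become distinct and, the spine being acyclic, no cost-free shortcut survives. This has exactly one side effect: each player $i$ on the spine gains a single extra route, the ``union route'' traversing $e_i$ and then all of $e_{i+1},\dots,e_{i+\mu}$, i.e.\ using the facility set $s_i^*\cup\tilde s_i$.

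It remains to check — and this is the only nontrivial step — that the union route never interferes with the uniqueness argument. For any strategies of the other players, the cost of player $i$'s union route equals the cost of her OPT route plus $\sum_{j\in\tilde s_i}c_j(x_j)$ evaluated along the union route; on that route player $i$ herself puts weight on every $j\in\tilde s_i$, so $x_j\ge w_i>0$, and since each non-constant latency $c_j(t)=w^{-j(d+1)}t^d$ is strictly positive for $t>0$, this extra term is strictly positive whenever $\tilde s_i$ contains a non-dummy facility — that is, for every player except the last one, $i=n+\mu$. Hence for $i<n+\mu$ the union route is strongly dominated by the OPT route, while for $i=n+\mu$ both the union route and the OPT route are strongly dominated by the Nash route (which has cost $0$); all these dominations are unconditional, so one round of elimination of strongly dominated strategies deletes every union route and leaves precisely the game of \cref{th:PoSLower_weighted}. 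Its iterated-elimination argument then applies verbatim, so $\tilde{\vecc s}$ remains the unique Nash equilibrium of the network game, $\vecc s^*$ is still optimal, and the Price of Stability is again at least $(\beta_d\Phi_d)^{d+1}$; by uniqueness of the equilibrium this also bounds the Price of Stability over mixed and correlated equilibria. A final routine check confirms that the resulting graph is finite and simple — it adds only $O(n+\mu)$ vertices and edges — that every added edge has cost zero and so changes neither players' costs nor the social cost, and that all latencies remain polynomials of degree at most $d$ with nonnegative coefficients.
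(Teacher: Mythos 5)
Your proof is correct, but it realizes the reduction through a genuinely different construction than the paper's, and the difference is worth noting. Both approaches lay facilities $\mu+1,\dots,n+\mu+1$ along a directed spine, but they diverge in how they handle the single-facility optimum strategy $s_i^*=\{i\}$. The paper sets the source $o_i=u_{i+1}$ (a spine vertex) and connects $t_i$ back to both $u_i$ and $u_{i+\mu+1}$, so that the optimum path traverses the spine edge $(u_i,u_{i+1})$ \emph{backward} while other players traverse it forward; the paper resolves this directionality clash by replacing each spine edge with a small gadget whose four extra vertices merge both directions of traffic onto a single directed cost edge. A pleasant side effect of that gadget is that each player then has \emph{exactly} the two intended simple $o_i\to t_i$ paths, so the iterated-elimination argument of \cref{th:PoSLower_weighted} lifts with no further work. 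You instead give each player a private source $\sigma_i$ and sink $\tau_i$, keep the spine one-directional, and split spine vertices with zero-cost buffer edges to kill cost-free shortcuts; this avoids the gadget entirely, but, as you correctly identify, creates one extra "union" route per inner player, using the facility set $s_i^*\cup\tilde s_i$. Your dominance argument for this extra route is sound and does the work the gadget does in the paper: for $i<n+\mu$ the union route costs strictly more than the OPT route for every $\vecc s_{-i}$, since player $i$ adds weight $w_i>0$ to each non-dummy $j\in\tilde s_i$ and those latencies are strictly positive, while for $i=n+\mu$ the union route (like the OPT route) is strictly beaten by the zero-cost Nash route. Since these dominations are unconditional, one preliminary elimination round collapses the network game to the exact two-strategy game of \cref{th:PoSLower_weighted}, and the rest — including the passage to mixed and correlated equilibria via uniqueness — carries over as you say. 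In short: the paper's gadget is more economical (no spurious routes to rule out), while your private-terminal spine is more elementary (no gadget needed) at the price of the extra dominance check.
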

\begin{proof}
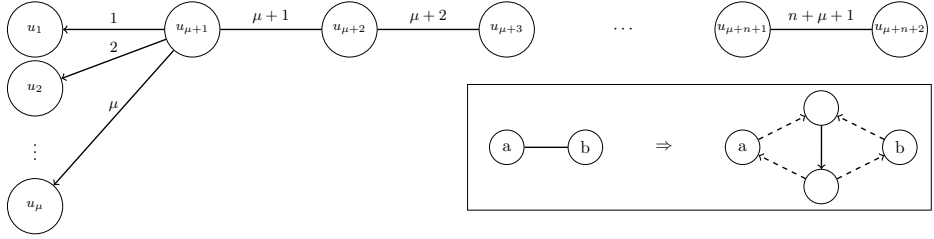
\begin{figure}
\centering
\tikzstyle{vertex}=[circle,minimum size=20pt,inner sep=0pt]
\tikzstyle{gadgetvertex}=[circle,draw,minimum size=20pt,inner sep=0pt]
\tikzstyle{edge} = [draw,thick,-]
\tikzstyle{weight} = [above,font=\small]
\usetikzlibrary{arrows, petri, topaths, graphs, graphs.standard}
\tikzset{vertex/.style = {shape=circle, draw, minimum size=32pt,inner sep=0pt,font=\footnotesize}}
\resizebox*{.95\textwidth}{!}{%
\begin{tikzpicture}[scale=1.6, auto,swap]
        \node[vertex] (A) at (0,0) {$u_1$};
        \node[vertex] (B) at (0,-0.75) {$u_2$};
        \node[vertex] (C) at (0,-2.25) {$u_\mu$};

        \node[vertex] (D) at (2,0) {$u_{\mu+1}$};
        \node[vertex] (E) at (4,0) {$u_{\mu+2}$};
        \node[vertex] (F) at (6,0) {$u_{\mu+3}$};
        \node[vertex] (G) at (9,0) {$u_{\mu+n+1}$};
        \node[vertex] (H) at (11,0) {$u_{\mu+n+2}$};

        \path[edge,<-] (A) -- node[weight]{$1$} (D);
        \path[edge,<-] (B) -- node[weight]{$2$} (D);
        \path[edge,<-] (C) -- node[weight]{$\mu$} (D);
        \path[edge] (D) -- node[weight]{$\mu+1$} (E);
        \path[edge] (E) -- node[weight]{$\mu+2$} (F);
        \path[edge] (G) -- node[weight]{$n+\mu+1$} (H);

        \node at (7.5,0) {$\cdots$};
        \node at (0,-1.5) {$\vdots$};

        \node[gadgetvertex] (L1) at (4+2,-1.5) {a};
        \node[gadgetvertex] (L2) at (5+2,-1.5) {b};
        \path[edge] (L1) --  (L2);
        \node at (6+2,-1.5) {$\Rightarrow$};
         \node[gadgetvertex] (R1) at (7+2,-1.5) {a};
        \node[gadgetvertex] (R4) at (9+2,-1.5) {b};
         \node[gadgetvertex] (R2) at (8+2,-1.0) {$ $};
        \node[gadgetvertex] (R3) at (8+2,-2) {$ $};
        \path[draw,thick,dashed,->] (R1) -- (R2);
        \path[draw,thick,dashed,->] (R3) -- (R1);
        \path[draw,thick,dashed,->] (R4) -- (R2);
        \path[draw,thick,dashed,->] (R3) -- (R4);
       \path[draw,thick,->] (R2) -- (R3);
       \path[draw] (L1)+(-0.5,+0.8) rectangle (9.4+2,-2.3);
\end{tikzpicture}
}
\caption{Transformation of the lower bound instance of \cref{th:PoSLower_weighted}
for general weighted congestion games to a network game, as described in
\cref{cor:network_lower}.}
\label{fig:ncg}
\end{figure}
We arrange the resources from the instance in the proof of
\cref{th:PoSLower_weighted} as edges in a graph as depicted in \cref{fig:ncg}. In
particular, for all $j=\mu+1,\dots,n+\mu+1$, resource $j$ from
\cref{th:PoSLower_weighted} corresponds to edge $(u_{j},u_{j+1})$. For the special
case of the first $\mu$ resources, for $j=1,\dots,\mu$, we represent resource $j$ by
the directed edge $(u_{\mu+1},u_j)$.

Regarding strategies, recall from the instance used in the proof of
\cref{th:PoSLower_weighted} that, for each $i=\mu,\dots,n$, player $i$ has two
available strategies: resource $\sset{i}$ or resources $\sset{i+1,\dots,i+\mu}$. To
map this to our network instance, we set the source node $o_i$ of player $i$ to be
$o_i=u_{i+1}$ and introduce a new destination node $t_i$ connected to the rest of
the graph by zero-latency directed edges $(u_{i},t_{i})$ and $(u_{i+\mu+1},t_{i})$.
In that way, strategy $\sset{i}$ of \cref{th:PoSLower_weighted} corresponds to the
path $o_i=u_{i+1}\to u_{i} \to t_i$ of our graph, while strategy
$\sset{i+1,\dots,i+\mu}$ to path $o_i=u_{i+1}\to u_{i+2}\to \dots \to u_{i+\mu+1}
\to t_{i}$. To avoid clutter, these destination nodes $t_i$ and the corresponding
zero-latency edges are not depicted in \cref{fig:ncg}. In an analogous way, we can
set the sources and destinations of the remaining first $i=1,\dots,\mu-1$ and last
$i=n+1,\dots,n+\mu+1$ players, taking into consideration their specially restricted
strategy sets in the construction of the proof of~\cref{th:PoSLower_weighted}.

Summarizing, each player $i\in[1,n+\mu]$ has to route its traffic from $o_i$
to $t_i$, where $$ o_i=
\begin{cases}
u_{\mu+1}, & \text{if}\;\; i=1,\dots,\mu,\\
u_{i+1}, & \text{if}\;\; i=\mu+1,\dots,n+\mu,
\end{cases}
$$
and nodes $t_i$ are connected with \emph{zero latency} edges as follows:
\begin{itemize}
\item For each $i\in[1,n+\mu]$ there is a directed zero cost edge from $u_i$ to $t_i$.
\item For each $i\in[1,n]$ there is a directed zero cost edge from $u_{\mu+1+i}$ to $t_i$.
\item For each $i\in[n+1,n+\mu]$ there is a directed zero cost edge from $u_{\mu+n+2}$ to $t_i$.
\end{itemize}
Then, by construction, each player $i$ has two available $o_i\to t_i$ paths, which
correspond directly to strategy sets $s_i^*$ and $\tilde{s_i}$ used in the proof of
\cref{th:PoSLower_weighted}.

There is one issue left to complete our network game construction: we have not yet
set a \emph{direction} on some of our edges, namely $(u_{i},u_{i+1})$ for
$i=\mu+1,\dots,n+\mu+1$ which are depicted also as undirected edges in
\cref{fig:ncg}. This is due to the fact that, by our construction so far, these
edges can be used in \emph{both} directions: by player $i$ (left-to-right) or player
$i+1$ (right-to-left). Thus, to turn our instance to a valid \emph{directed}
network, we need to replace such edges with a ``gadget'' that essentially forces
both players, no matter from which direction they enter the edge, to use it in the
same direction and \emph{both} contribute to its load. This can be achieved by using
the structure depicted in the bottom right corner of \cref{fig:ncg}.
\end{proof}

\subsection{Singleton Games}
\label{sec:lower_singleton}
In this section we give an exponential lower bound for {\em singleton}
weighted congestion games with polynomial latency functions.
The following theorem handles also approximate equilibria and provides
a lower bound on the Price of Stability in a very strong sense; even
if one allows for the best approximate equilibrium with approximation
factor $\alpha=o\left(\frac{d}{\ln d}\right)$, then its cost is
lower-bounded by $\omega(\text{poly}(d))$ times the optimal cost.\footnote{\label{foot:1} To see
this, just take any upper bound of $\frac{d+1}{c\ln (d+1)}$ on
$\alpha$, for a constant $c>2$. Then, the lower bound in
\eqref{eq:Pos_approx_lower_parallel_1} becomes $\Omega(d^{c-1})$.}
In other words, in order to achieve polynomial (with respect to $d$) guarantees on the Price of
Stability, one has to consider $\varOmega\left(\frac{d}{\ln d}\right)$-approximate
equilibria---almost linear in $d$; this shows that our positive result in
\cref{th:PoS_upper_general_exact_only}, of the following
\cref{sec:upper_bound_therem}, is almost tight. This is furthermore
complemented by \cref{th:PoS_OPT}, where we show that the socially
optimum profile is a $(d+1)$-approximate equilibrium (achieving an
optimal Price of Stability of $1$).

\begin{theorem}
\label{th:PoS_lower_approx}
For any positive integer $d$ and any real $\alpha\in [1,d)$, the
$\alpha$-approximate (mixed) Price of Stability of weighted (singleton) congestion
games with polynomial latencies of degree at most $d$ is at least
\begin{equation}
\frac{1}{e(d+1)}\left(1+\frac{1}{\alpha}\right)^{d+1}.
\label{eq:Pos_approx_lower_parallel_1}
\end{equation}
In particular, for the special case of $\alpha=1$, we derive that the Price of Stability of \emph{exact} equilibria is $\varOmega(2^{d}/d)$.
\end{theorem}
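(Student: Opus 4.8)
The plan is to produce, for every $d$ and every $\alpha\in[1,d)$, a deliberately simple weighted \emph{singleton} game whose \emph{unique} $\alpha$-approximate equilibrium — obtained by iteratively deleting strictly $\alpha$-dominated strategies, so that it is also the unique mixed and correlated $\alpha$-approximate equilibrium — is expensive relative to the optimum. Write $w=1+\frac1\alpha$. I would lay the instance out on a line of resources $e_0,e_1,\dots,e_N$ ($N$ large; we let $N\to\infty$ at the end) carrying latencies $c_{e_j}(t)=a_j\,t^d$ with the coefficients $a_j$ in geometric progression, and introduce for $i=1,\dots,N$ a \emph{mobile} player $i$ of weight $\sim w^{-i}$ whose only two strategies are $\{e_i\}$ and $\{e_{i-1}\}$, together with, on each resource $e_i$, an \emph{anchor} player pinned to $e_i$ (a single available strategy) whose weight is a fixed multiple of $w^{-i}$. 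The anchors are what create genuine congestion: unlike the construction of \cref{th:PoSLower_weighted}, a singleton game cannot manufacture it by a uniform shift of the mobile players alone. The parameters are tuned so that in the \emph{stay} profile $\tilde{\vecc s}$ (mobile player $i$ on $e_i$) the load of each $e_i$ is exactly $w=1+\frac1\alpha$ times its load in the \emph{shift} profile $\vecc s^*$ (mobile player $i$ on $e_{i-1}$).

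For uniqueness I would run an iterated-elimination argument down the line. At the free end the mobile player there has one strategy whose cost is completely determined — the load it sees is fixed, since only it and a single anchor can occupy the relevant resource — and, by the choice of coefficients, smaller by more than a factor $\alpha$ than the least possible cost of its other strategy; hence one of its two resources is strictly $\alpha$-dominated and is deleted. This pins that player's move, which in turn fixes the cost of one strategy of the next player, and so on. Since every deletion is by strict $\alpha$-domination, $\tilde{\vecc s}$ is the only surviving outcome, hence the only $\alpha$-approximate pure/mixed/correlated equilibrium, while $\opt\le C(\vecc s^*)$.

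I would then lower-bound $\pos\ge C(\tilde{\vecc s})/C(\vecc s^*)$ and compute it. Away from the two endpoints both social costs are geometric series, and on every interior resource the ratio of the two contributions is exactly $(1+\frac1\alpha)^{d+1}$ — this is just $w^{d+1}$, the latency being a $d$-th power and the load ratio being $w$. The two ends of the line contribute only an additive overhead; after optimizing the common ratio of the weights (which must stay bounded away from its degenerate value both for the series to converge and for the $\alpha$-domination inequalities to hold) this overhead costs at most a factor $e(d+1)$. Taking $N\to\infty$ removes the decaying tail and leaves $\pos\ge\frac1{e(d+1)}\bigl(1+\frac1\alpha\bigr)^{d+1}$; substituting $\alpha=1$ gives $\pos=\Omega(2^d/d)$.

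The main obstacle is that the first and third steps pull against each other: the inequalities that force $\tilde{\vecc s}$ to be the \emph{unique} surviving outcome constrain the latency coefficients and the weight ratio in terms of $d$ and $\alpha$, and these same constraints cap how large the cost ratio can be made — the tension being sharpest at the boundary resource. Balancing it is exactly what yields the loss $e(d+1)$ and the restriction $\alpha<d$. A secondary technical point is to identify, for each mobile player, the opponent configuration minimizing its deviation cost, so that $\alpha$-domination is verified against the right (worst-for-the-deviator) profile — the analogue of, but considerably easier than, \cref{claim:best_s_i^*}.
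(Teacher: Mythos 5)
There is a genuine gap, and it stems both from a false structural claim and from internally inconsistent parameters.

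First, the assertion that ``a singleton game cannot manufacture congestion by a uniform shift of the mobile players alone'' is not correct, and the anchor players you introduce to fix this are in fact the source of trouble. The paper's construction for \cref{th:PoS_lower_approx} has \emph{no} anchors: it uses $n$ players of weight $w^i$ on $n{+}1$ facilities with strategy sets $\{\{i\},\{i{+}1\}\}$, and the congestion needed for the iterated $\alpha$-domination argument is created by the worst case in which player $i$ and player $i{+}1$ pile onto a shared facility (e.g.\ $\gamma\,c_{i+1}(w_i{+}w_{i+1})\le c_i(w_{i-1}{+}w_i)$). At equilibrium and at the optimum each facility carries exactly one player — no anchors needed. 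Moreover, with your anchors the tuning you describe cannot be met: if mobile player $i$ has weight $m_i$ and the anchor on $e_i$ has weight $b_i$, then requiring the load ratio on every interior $e_i$ to be exactly $w$ means $m_i+b_i=w(m_{i+1}+b_i)$, i.e.\ $(1-w\,m_{i+1}/m_i)\,m_i=(w-1)\,b_i$. With $m_i\sim w^{-i}$ and $b_i$ a fixed multiple of $w^{-i}$ this forces $b_i=0$, so the anchors disappear; any nonzero anchor weight requires the mobile-weight ratio to be strictly smaller than $1/w$, contradicting the stated scaling.

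Second, the choice $w=1+\tfrac1\alpha$ is not what produces the bound. In the paper one sets $w=\gamma\frac{d+1}{d-\gamma}$ with $\gamma\to\alpha^+$ (which is why the restriction $\alpha<d$ appears), and the factor $\frac{1}{e(d+1)}$ falls out of the algebraic identity
\[
\Bigl(\tfrac{w}{\gamma}-1\Bigr)\frac{(1+w)^d}{w^{d+1}}
=\frac{1}{d+1}\Bigl(1-\tfrac{1}{d+1}\Bigr)^d\Bigl(1+\tfrac1\alpha\Bigr)^{d+1}
\ge\frac{1}{e(d+1)}\Bigl(1+\tfrac1\alpha\Bigr)^{d+1},
\]
after taking $n\to\infty$. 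In your sketch the interior cost ratio is asserted to be exactly $(1+\tfrac1\alpha)^{d+1}$ and the subsequent $e(d+1)$ loss is attributed to ``boundary overhead'' without derivation, but nothing in the proposal pins down how the series of interior costs, the boundary latencies, and the $\alpha$-domination inequalities jointly force that specific loss. You would need to specify the latency-coefficient ratio, the mobile-weight ratio, and the anchor weights, verify the domination inequalities against them, and then optimize — and at that point the computation looks no simpler than, and quite different from, the anchor-free argument in the paper.
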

\begin{proof}
Fix a positive integer $d$ and the desired approximation parameter
$\alpha\in[1,d)$. Also, let $\gamma\in(\alpha,d)$ be a parameter
arbitrarily close to $\alpha$.  Our instance consists of $n$ players
with weights $w_i=w^i$, $i=1,2,\dots,n$, where we set
\begin{equation}
\label{eq:lower_singleton_weight}
w=\gamma\frac{d+1}{d-\gamma}>\gamma,
\end{equation}
the inequality holding due to the fact that $d+1>d-\gamma>0$.
At the end of our construction we will take $n\to\infty$, so one can
think of $n$ as a very large integer.
There are $n+1$ facilities with latency
functions
\begin{align*}
c_{1}(t) &= \gamma w^d(w+1)^d, \\
c_{j}(t) &= (\gamma w^d)^{2-j}\cdot t^d, && j=2,\dots, n,\\
c_{{n+1}}(t) &= \gamma^{1-n}w^d(w+1)^d.
\end{align*}

Any player $i$ has exactly two strategies, $s^*_i=\{i\}$ and $\tilde s_i=\{i+1\}$
i.e., $S_i=\sset{\ssets{i},\ssets{i+1}}$ for all $i=1,\dots,n$.  Let $\vecc s^*,
\tilde{\vecc s}$ be the strategy profiles where every player $i$ plays $s_i^*,
\tilde s_i$ respectively. These two outcomes, $\vecc s^*$ and $\tilde{\vecc s}$ are
depicted in \cref{fig:lower_bound_singleton}.
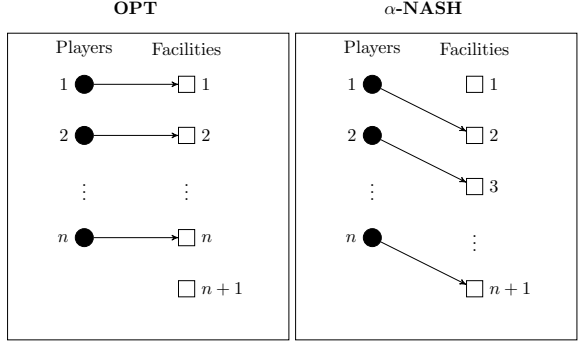
\begin{figure}
\centering
\scalebox{0.68}{
\begin{tikzpicture}[baseline=(current bounding box.north)]
\path[draw] (-1.5,1) rectangle (4,-5);
\node[font=\bfseries] at (1,1.5) {OPT};
\node at (0,0.7) {Players};
\node at (2,0.7) {Facilities};
\node[circle,draw,fill,label=left:$1$,minimum size=10pt] (p1) at (0,0) {};
\node[circle,draw,fill,label=left:$2$,minimum size=10pt] (p2) at (0,-1) {};
\node at (0,-2) {$\vdots$};
\node[circle,draw,fill,label=left:$n$,minimum size=10pt] (pn) at (0,-3) {};
\node[draw,label=right:$1$,minimum size=9pt] (e1) at (2,0) {};
\node[draw,label=right:$2$,minimum size=9pt] (e2) at (2,-1) {};
\node at (2,-2) {$\vdots$};
\node[draw,label=right:$n$,minimum size=9pt] (en) at (2,-3) {};
\node[draw,label=right:$n+1$,minimum size=9pt] (en1) at (2,-4) {};
\draw[->,>=stealth'] (p1) -- (e1);
\draw[->,>=stealth'] (p2) -- (e2);
\draw[->,>=stealth'] (pn) -- (en);
\end{tikzpicture}
\begin{tikzpicture}[baseline=(current bounding box.north)]
\path[draw] (-1.5,1) rectangle (4,-5);
\node[font=\bfseries] at (1,1.5) {$\alpha$-NASH};
\node at (0,0.7) {Players};
\node at (2,0.7) {Facilities};
\node[circle,draw,fill,label=left:$1$,minimum size=10pt] (p1) at (0,0) {};
\node[circle,draw,fill,label=left:$2$,minimum size=10pt] (p2) at (0,-1) {};
\node at (0,-2) {$\vdots$};
\node[circle,draw,fill,label=left:$n$,minimum size=10pt] (pn) at (0,-3) {};
\node[draw,label=right:$1$,minimum size=9pt] (e1) at (2,0) {};
\node[draw,label=right:$2$,minimum size=9pt] (e2) at (2,-1) {};
\node[draw,label=right:$3$,minimum size=9pt] (e3) at (2,-2) {};
\node at (2,-3) {$\vdots$};
\node[draw,label=right:$n+1$,minimum size=9pt] (en1) at (2,-4) {};
\draw[->,>=stealth'] (p1) -- (e2);
\draw[->,>=stealth'] (p2) -- (e3);
\draw[->,>=stealth'] (pn) -- (en1);
\end{tikzpicture}
}
\caption{The social optimum $\vecc s^*$ and the unique $\alpha$-approximate equilibrium $\tilde{\vecc s}$ in the lower bound construction of \cref{th:PoS_lower_approx} for singleton weighted congestion games.}
\label{fig:lower_bound_singleton}
\end{figure}
One should think of
$\vecc s^*$ as the socially optimal profile. We will show that
$\tilde{\vecc s}$ is the \emph{unique} $\alpha$-approximate Nash
equilibrium of our game. To ensure this, it suffices to require the
following, which corresponds to eliminating all other possible
strictly dominated $\alpha$-approximate equilibria:

\begin{enumerate}
\item It is a strictly $\alpha$-dominant strategy for player $1$ to
  use facility $2$, i.e. 
  $$\alpha C_1(\tilde s_1,\vecc s_{-i})<
  C_1(\vecc s)$$ 
  for any profile $\vecc s$.
\item For any $i=2,\dots,n$, if every player $k<i$ has chosen facility $k+1$ then it is a strictly  $\alpha$-dominant strategy for player $i$ to chose facility $i+1$, i.e.
$$
\alpha C_i(\tilde s_1,\dots,\tilde s_{i-1},\tilde s_i, s_{i+1},\dots,s_n) < C_i(\tilde s_1,\dots,\tilde s_{i-1},s_i,s_{i+1},\dots,s_n) $$
 for any strategies $(s_i,s_{i+1},\dots,s_n)\in S_i\times\dots\times S_n$.
\end{enumerate}

For the first condition, since facility $2$ can be used by at most
players $1$ and $2$, and $\gamma>\alpha$, it is enough to show that $\gamma
c_{2}(w_1+w_2)\leq c_{1}(w_1)$. Indeed
$$
\gamma c_{2}(w_1+w_2)=\gamma (\gamma w^d)^{2-2}(w+w^2)^d=\gamma w^d(1+w)^d=c_{1}(w_1).
$$

Similarly, for the second condition, it suffices to show that $
\gamma c_{i+1}(w_i+w_{i+1}) \leq c_{i}(w_{i-1}+w_i) $ for
$i=2,\dots,n-1$, and $ \gamma c_{n+1}(w_n)\leq
c_{n}(w_{n-1}+w_n) $ for the special case of $i=n$. This is
because, facility $i+1$ can be used by at most players $i$ and
$i+1$, while facility $i$ is already being used by player
$i-1$. Indeed, for any $i=2,\dots,n-1$ we see that:
\begin{align*}
\gamma c_{i+1}(w_i+w_{i+1}) &=\gamma (\gamma w^d)^{2-(i+1)}(w^i+w^{i+1})^d
= (\gamma w^d)^{2-i}(w^{i-1}+w^i)^d\\
&= c_{i}(w_{i-1}+w_i),
\end{align*}
while for $i=n$,
\begin{align*}
c_{n}(w_{n-1}+w_n) &= (\gamma w^d)^{2-n}(w^{n-1}+w^n)^d\\
  &= \gamma^{2-n}w^{d(2-n)+d(n-1)}(w+1)^d\\
  &=\gamma\cdot \gamma^{1-n}w^d(w+1)^d\\
  &=\gamma c_{n+1}(w_n).
\end{align*}

The social cost at equilibrium $\tilde{\vecc s}$ is at least the cost of player $n$ at $\tilde{\vecc s}$, that is,
$$
C(\tilde{\vecc s})\geq w_n c_{n+1}(w_n)=w^n\cdot \gamma^{1-n}w^d(1+w)^d=\left(\frac{w}{\gamma}\right)^n \gamma \cdot w^d(1+w)^d
$$
On the other hand, consider the strategy profile $\vecc s^*$ where every player $i$ chooses facility $i$:
\begin{align*}
C(\vecc s^*)
 &=w_1 c_{1}(w_1)+\sum_{i=2}^nw_i c_{i}(w_i)\\
 &= \gamma w^{d+1}(1+w)^d+ \sum_{i=2}^nw^i (\gamma w^d)^{2-i} w^{id}\\
 &=\gamma w^{d+1}(1+w)^d+\gamma^2w^{2d}\sum_{i=2}^n\left(\frac{w}{\gamma}\right)^i\\
 &=\gamma w^{d+1}(1+w)^d+\gamma^2w^{2d}\cdot \left(\frac{w}{\gamma}\right)^2\frac{\left(\frac{w}{\gamma}\right)^{n-1}-1}{\frac{w}{\gamma}-1}\\
 &\leq\gamma w^{d+1}(1+w)^d+\gamma^2w^{2d}\cdot \left(\frac{w}{\gamma}\right)^2\frac{\left(\frac{w}{\gamma}\right)^{n-1}}{\frac{w}{\gamma}-1}\\
 &= \left(\frac{w}{\gamma}\right)^{n}\gamma \cdot
 \left[\left(\frac{w}{\gamma}\right)^{-n}\cdot w^{d+1}(w+1)^d + \frac{w^{2d+1}}{\frac{w}{\gamma}-1}\right]
\end{align*}

Recall now that, from \eqref{eq:lower_singleton_weight}, $\frac{w}{\gamma}>1$, and thus $\lim_{n\to\infty}\left(\frac{w}{\gamma}\right)^{-n}=0$. So, as the number of players $n$ grows large we get the following lower bound on the Price of Stability:
$$
\lim_{n\to\infty}\frac{C(\tilde{\vecc s})}{C(\vecc s^*)}
\geq \lim_{n\to\infty}\frac{w^d(1+w)^d}{\left(\frac{w}{\gamma}\right)^{-n}\cdot w^{d+1}(w+1)^d + \frac{w^{2d+1}}{\frac{w}{\gamma}-1}}=\left(\frac{w}{\gamma}-1\right)\frac{(1+w)^d}{w^{d+1}}.
$$
Since $\gamma$ is chosen arbitrarily close to $\alpha$, deploying
\eqref{eq:lower_singleton_weight} to substitute $w$, the above lower bound can be
written as
\begin{align*}
\lim_{n\to\infty}\frac{C(\tilde{\vecc s})}{C(\vecc s^*)}
& \geq
\left(\frac{d+1}{d-\alpha}-1\right)\left[1+\frac{\alpha(d+1)}{d-\alpha}\right]^d/\left[\frac{\alpha(d+1)}{d-\alpha}\right]^{d+1} \\
&= \left(\frac{\alpha+1}{d-\alpha} \right)\left[\frac{d(\alpha+1)}{d-\alpha}\right]^d/\left[\frac{\alpha(d+1)}{d-\alpha}\right]^{d+1}\\
&= \frac{(\alpha+1)^{d+1} d^d}{\alpha^{d+1} (d+1)^{d+1}}\\
&= \frac{1}{d+1}\left(1-\frac{1}{d+1}\right)^d\left(1+\frac{1}{\alpha}\right)^{d+1} \\
&\geq \frac{1}{e}\frac{1}{d+1}\left(1+\frac{1}{\alpha}\right)^{d+1}.
\end{align*}
\end{proof}

\section{Upper Bounds}
\label{sec:upper_bounds}

The negative results of the previous sections involve constructions
where the ratio $W$ of the largest to smallest weight can be
exponential in $d$. In the main theorem (\cref{th:PoS_upper_general})
of this section we present an analysis which is sensitive to this
parameter $W$, and identify conditions under which the performance of
approximate equilibria can be significantly improved.

Our upper bound approach is based on the design of a suitable
approximate potential function and has three main steps. First, in
\cref{sec:potential_general_method}, we set up a framework for the
definition of this function by identifying conditions that, on the one
hand, certify the existence of an approximate equilibrium and, on the other,
provide guarantees about its efficiency.
Then, in \cref{sec:faulhaber}, by use of the Euler-Maclaurin summation formula we present a
general form of an approximate potential function, which extends
Rosenthal's potential for weighted congestion games (see also \cref{sec:EulerMaclaurin}).
Finally, in \cref{sec:upper_bound_therem}, we deploy this potential for polynomial latencies.  Due to its analytic description, our
potential differs from other extensions of the Rosenthal's potential
that have appeared in previous work, and we believe that this contribution might be
of independent interest, and applied to other classes of
latency functions.

\subsection{The Potential Method}
\label{sec:potential_general_method}

In the next lemma we lay the ground for the design and analysis of
approximate potential functions, by supplying conditions that not only
provide guarantees for the existence of approximate equilibria, but also for their performance with respect to the social optimum.
In the premises of the lemma, we
give conditions on the resource functions $\phi_e$, having in
mind that $\varPhi(\vecc s)=\sum_{e\in E}\phi_e(x_e(\vecc s))$ will eventually
serve as the ``approximate'' potential function.

\begin{lemma}
  \label{lemma:approximate_PoS_ratios}
  Consider a weighted congestion game with latency functions $c_e$,
  for each facility $e\in E$, and player weights $w_i$, for each
  player $i\in N$. If there exist functions $\phi_e:\R_{\geq 0}\map
  \R$ and parameters $\alpha_1,\alpha_2,\beta_1,\beta_2>0$ such that
  for any facility $e$ and player weight $w\in\sset{w_1,\dots,w_n}$
\begin{equation}
\label{eq:cond_approx_eq}
\alpha_1\leq \frac{\phi_e(x+w)-\phi_e(x)}{w\cdot c_e(x+w)} \leq \alpha_2,
\qquad\text{for all}\;\; x\geq 0,
\end{equation}
and
\begin{equation}
\label{eq:cond_approx_cost}
  \beta_1\leq \frac{\phi_e(x)}{x\cdot c_e(x)} \leq \beta_2,
 \qquad\text{for all}\;\; x\geq \min_n w_n,
\end{equation}
then our game has an $\frac{\alpha_2}{\alpha_1}$-approximate pure Nash
equilibrium which, furthermore, has Price of Stability at most
$\frac{\beta_2}{\beta_1}$.
\end{lemma}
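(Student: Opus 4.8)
The plan is to run the familiar potential-minimizer argument with $\varPhi(\vecc s)=\sum_{e\in E}\phi_e(x_e(\vecc s))$ as an approximate potential. Since the strategy space $S$ is finite, I would fix a global minimizer $\vecc s$ of $\varPhi$ over $S$ and show that this one profile is simultaneously an $\frac{\alpha_2}{\alpha_1}$-approximate pure Nash equilibrium and has social cost at most $\frac{\beta_2}{\beta_1}\opt$; both assertions of the lemma then follow. Throughout I would use that, since facility loads are sums of player weights, every realizable facility load lies in $\ssets{0}\cup[\min_n w_n,\infty)$, so that~\eqref{eq:cond_approx_cost} applies to every facility with positive load, and I would take $\phi_e(0)=0$ so that empty facilities contribute nothing to $\varPhi$.

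For the equilibrium claim, fix a player $i$ and an alternative strategy $s_i'\in S_i$, write $x_e=x_e(\vecc s)$, and split the facilities affected by the switch into those in $s_i\setminus s_i'$, whose load drops to $x_e-w_i$, and those in $s_i'\setminus s_i$, whose load rises to $x_e+w_i$ (facilities in $s_i\cap s_i'$, and those outside both strategies, keep their load). Minimality of $\vecc s$ gives $\varPhi(s_i',\vecc s_{-i})-\varPhi(\vecc s)\ge 0$, which after telescoping reads
\begin{equation*}
\sum_{e\in s_i\setminus s_i'}\bigl(\phi_e(x_e)-\phi_e(x_e-w_i)\bigr)\;\le\;\sum_{e\in s_i'\setminus s_i}\bigl(\phi_e(x_e+w_i)-\phi_e(x_e)\bigr).
\end{equation*}
Applying the lower bound in~\eqref{eq:cond_approx_eq} (with $x=x_e-w_i$, $w=w_i$) on the left and the upper bound (with $x=x_e$, $w=w_i$) on the right, then dividing by $w_i>0$, gives $\alpha_1\sum_{e\in s_i\setminus s_i'}c_e(x_e)\le\alpha_2\sum_{e\in s_i'\setminus s_i}c_e(x_e+w_i)$. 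I would then add $\alpha_1\sum_{e\in s_i\cap s_i'}c_e(x_e)$ to the left and $\alpha_2\sum_{e\in s_i\cap s_i'}c_e(x_e)$ to the right; this preserves the inequality because $\alpha_1\le\alpha_2$ (implicit in~\eqref{eq:cond_approx_eq}) and $c_e\ge 0$. Since the loads on $s_i\cap s_i'$ are unchanged by the deviation while those on $s_i'\setminus s_i$ become exactly $x_e+w_i$, the two sides collapse to $\alpha_1 C_i(\vecc s)\le\alpha_2 C_i(s_i',\vecc s_{-i})$, which is the $\frac{\alpha_2}{\alpha_1}$-approximate equilibrium condition.

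For the efficiency claim, combining~\eqref{eq:cond_approx_cost}, the normalization $\phi_e(0)=0$, and the load observation above yields, for \emph{every} profile $\vecc t$, the sandwich $\beta_1\,C(\vecc t)\le\varPhi(\vecc t)\le\beta_2\,C(\vecc t)$ (only facilities with positive load contribute, and there both $x_e c_e(x_e)$ and $\phi_e(x_e)$ are controlled by~\eqref{eq:cond_approx_cost}). Instantiating this at $\vecc s$ and at an optimal profile $\vecc s^*$, and using $\varPhi(\vecc s)\le\varPhi(\vecc s^*)$, I get $\beta_1 C(\vecc s)\le\varPhi(\vecc s)\le\varPhi(\vecc s^*)\le\beta_2 C(\vecc s^*)=\beta_2\opt$, hence $C(\vecc s)\le\frac{\beta_2}{\beta_1}\opt$, i.e.\ $\vecc s$ certifies a Price of Stability of at most $\frac{\beta_2}{\beta_1}$ for $\frac{\alpha_2}{\alpha_1}$-approximate equilibria.

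The argument is mostly bookkeeping, so the two places that need care are: (i) recombining the three groups of facilities so that the common part $s_i\cap s_i'$ does not spoil the approximation ratio --- this is precisely where $\alpha_1\le\alpha_2$ enters; and (ii) handling unused facilities in the cost sandwich, which I deal with via $\phi_e(0)=0$ together with the $\ssets{0}\cup[\min_n w_n,\infty)$ structure of realizable loads. I do not expect any deeper obstacle.
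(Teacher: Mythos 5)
Your proof is correct and follows essentially the same route as the paper's: take a global minimizer of the candidate potential $\varPhi$, derive the $\alpha_2/\alpha_1$-approximate equilibrium property by comparing deviations through condition~\eqref{eq:cond_approx_eq} and recombining the $s_i\cap s_i'$ terms using $\alpha_1\le\alpha_2$, and derive the PoS bound from the two-sided sandwich $\beta_1 C\le\varPhi\le\beta_2 C$ via~\eqref{eq:cond_approx_cost}. You are slightly more careful than the paper on one point: you explicitly invoke the normalization $\phi_e(0)=0$ (together with the fact that realizable positive loads are at least $\min_n w_n$) to make the cost sandwich valid on empty facilities, an assumption the paper leaves implicit but which holds for the concrete $\phi_e$ it later constructs.
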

\begin{proof}
  Denote $\alpha=\frac{\alpha_2}{\alpha_1}$,
  $\beta=\frac{\beta_2}{\beta_1}$. First we will show that the
  function $\varPhi(\vecc s)=\sum_{e\in E}\phi_e(x_e(\vecc s))$
  (defined over all feasible outcomes $\vecc s$) is an
  $\alpha$-approximate potential, i.e.\ for any profile $\vecc s$, any
  player $i$ and strategy $s_i'\in S_i$,
$$
C_i(s_i',\vecc s_{-i})< \frac{1}{\alpha}C_i(\vecc s)
\quad \then \quad
\varPhi(s_i',\vecc s_{-i}) < \varPhi(\vecc s).
$$
This would be sufficient to establish the existence of a pure $\alpha$-approximate equilibrium, since any (local) minimizer of $\varPhi$ will do. So, it is enough to prove that
$$
\varPhi(s_i',\vecc s_{-i})-\varPhi(\vecc s) \leq w_i\alpha_1\left[\alpha \cdot C_i(s_i',\vecc s_{-i}) - C_i(\vecc s) \right].
$$
Indeed, if for simplicity we denote $x_e=x_e(\vecc s)$ and
$x_e'=x_e(s_i',\vecc s_{-i})$ for all facilities $e$, we can compute
\begin{align*}
\varPhi(s_i',\vecc s_{-i})-\varPhi(\vecc s)
&= \sum_{e\in E} \left[ \phi_e(x_e')-\phi_e(x_e)\right]\\
&= \sum_{e\in s_i'\setminus s_i}\left[ \phi_e(x_e+w_i)-\phi_e(x_e)\right] + \sum_{e\in s_i\setminus s_i'}\left[ \phi_e(x_e-w_i)-\phi_e(x_e)\right]\\
&\leq \alpha_2\sum_{e\in s_i'\setminus s_i} w_i c_e(x_e+w_i) - \alpha_1\sum_{e\in s_i\setminus s_i'} w_i c_e(x_e)\\
&\leq  w_i\alpha_2\left(\sum_{e\in s_i'\setminus s_i} c_e(x_e+w_i) +\sum_{e\in s_i'\inters s_i} c_e(x_e)\right)\\ 
&\qquad\qquad\qquad- w_i\alpha_1 \left(\sum_{e\in s_i\setminus s_i'} c_e(x_e) +\sum_{e\in s_i'\inters s_i} c_e(x_e)\right)\\
&= w_i\alpha_1\left[\alpha\left(\sum_{e\in s_i'\setminus s_i}  c_e(x_e+w_i) +\sum_{e\in s_i'\inters s_i} c_e(x_e)\right)\right.\\ 
&\qquad\qquad\qquad\qquad-\left. \left(\sum_{e\in s_i\setminus s_i'} c_e(x_e) +\sum_{e\in s_i'\inters s_i} c_e(x_e)\right) \right]\\
&= w_i\alpha_1\left[\alpha C_i(s_i',\vecc s_{-i}) - C_i(\vecc s) \right].
\end{align*}
where the first inequality holds due to~\eqref{eq:cond_approx_eq}
and the second one because $\alpha_2\geq \alpha_1$.

Next, for the upper bound of $\beta$ on the Price of Stability, it is enough to show that for any profiles $\vecc s$, $\vecc s'$,
$$
\varPhi(\vecc s) \leq \varPhi(\vecc s') \quad \then \quad C(\vecc s) \leq \beta \cdot C(\vecc s'),
$$
because then, if $\vecc s^*\in\argmin_{\vecc s} C(\vecc s)$ is an optimal-cost
profile and $\tilde{\vecc s}\in\argmin_{\vecc s} \varPhi(\vecc s)$ is a
\emph{global} minimizer of $\varPhi$, then $C(\tilde{\vecc s})\leq \beta C(\vecc
s^*)$ (and furthermore, as a minimizer of $\varPhi$, $\tilde{\vecc s}$ is clearly an
$\alpha$-approximate equilibrium as well; see the first part of the current proof).
Indeed, denoting $x_e=x_e(\vecc s)$, $x_e'=x_e(\vecc s')$ for simplicity, we have:
\begin{align*}
\varPhi(\vecc s')-\varPhi(\vecc s) & = \sum_{e\in E}\phi_e(x_e')-\sum_{e\in E}\phi_e(x_e)\\
&\leq \beta_2\sum_{e\in E}x_e'c_e(x_e') -\beta_1\sum_{e\in E}x_ec_e(x_e)\\
&=\beta_2 C(\vecc s') - \beta_1 C(\vecc s)\\
&=\beta_1\left(\beta C(\vecc s') -  C(\vecc s) \right),
\end{align*}
where for the first inequality we deployed~\eqref{eq:cond_approx_cost}.
\end{proof}

\subsection{Faulhaber's Potential}
\label{sec:faulhaber}
In this section we propose an approximate potential function, which is based on the
following classic number-theoretic result, known as Faulhaber's
formula\footnote{See, e.g., \citep[p.~287]{Knuth1993} or
\citep[p.~106]{Conway1996}). Johann Faulhaber~\citep{faulhaber1631} was the first to
discover the formula and express it in a systematic way, up to the power of $m=17$.
Jakob Bernoulli was able to state it in its full generality as his famous
\emph{Summæ Potestatum}~\citep[p.~97]{bernoulli1713}, by introducing what are now
known as \emph{Bernoulli numbers} (see also \cref{foot:bernoulli_nums}). The first
to rigorously prove the formula was Carl Jacobi~\citep{jacobi1834}.}, which states
that for any positive integers $n,m$,
\begin{align}
\sum_{k=1}^n k^m
  &= \frac{1}{m+1}\sum_{j=0}^m(-1)^j\binom{m+1}{j} B_jn^{m+1-j}\notag \\
  &= \frac{1}{m+1}n^{m+1}+\frac{1}{2}n^m+\frac{1}{m+1}\sum_{j=2}^m\binom{m+1}{j} B_jn^{m+1-j} ,\label{eq:faulhaber_formula_expand}
\end{align}
where the coefficients $B_j$ are the usual Bernoulli
numbers.\footnote{\label{foot:bernoulli_nums}See, e.g., \citep[Chapter 6.5]{Graham1989a} or
  \citep[Chapter 23]{Abramowitz1964}. The first Bernoulli numbers are:
  $B_0=1, B_1=-1/2, B_2=1/6, B_3=0, B_4=-1/30, \dots$. Also, we know
  that $B_{j}=0$ for all \emph{odd} integers $j\geq 3$.} In
particular, this shows that the sum of the first $n$ powers with exponent $m$ can
be expressed as a polynomial of $n$ with degree $m+1$.  Furthermore,
this sum corresponds to the well-known potential of Rosenthal~\citep{Rosenthal1973a} for
\emph{unweighted} congestion games when the latency function is the
monomial $x\mapsto x^m$.

Based on the above observation, we go beyond just integer values of
$n$, and generalize this idea to all positive reals; in that way, we
design a ``potential'' function that can handle different player
weights and, furthermore, incorporate in a more powerful, analytically
smooth way, approximation factors with respect to both the Price of
Stability, as well as the approximation parameter of the equilibrium
(in the spirit of \cref{lemma:approximate_PoS_ratios}). A natural way
to do that is to directly generalize
\eqref{eq:faulhaber_formula_expand} and simply define, for any real
$x\geq 0$ and positive integer $m$,
\begin{equation}
\label{eq:potential_mono_def}
S_m(x) {:=} \frac{1}{m+1}x^{m+1}+\frac{1}{2}x^m,
\end{equation}
keeping just the first two significant terms.\footnote{See
  \cref{sec:full_faulhaber_note} for further discussion on this
  choice.}
For the special case of $m=0$ we set $S_0(y){:=} y$.

For any positive integer $m$ we define the function
$A_m:[1,\infty)\map\R_{>0}$ with
\begin{equation}
\label{eq:A_def}
A_m(x){:=}
\left[\frac{S_m(x)}{x^{m+1}}\right]^{-1}
=\left(\frac{1}{m+1}+\frac{1}{2x}\right)^{-1}
=\frac{2(m+1)x}{2x+m+1};
\end{equation}
for $m=0$ in particular, this gives $A_0(x){=} 1$.
Observe that $A_m$ is strictly increasing (in $x$) for all $m\geq 1$,
\begin{equation}
\label{eq:A_def_boundaries}
A_{m}(1)=\frac{2(m+1)}{m+3}\in [1,2),
\qquad\text{and}\qquad
\lim_{x\to\infty}A_m(x)=m+1.
\end{equation}
For the special case of $m=0$ we simply have $A_{0}(x)=1$ for all $x\geq 0$.
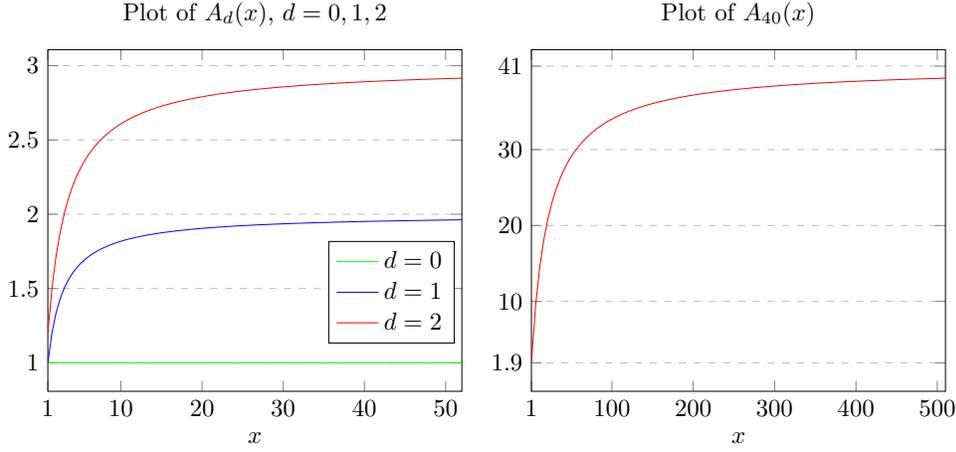
\begin{figure}
\centering
\scalebox{0.93}{
\begin{tikzpicture}
\begin{axis}[width=7.5cm, title={Plot of $A_d(x)$, $d=0,1,2$}, legend style={at={(0.83,0.44)},anchor=north}, xtick={1,10,20,30,40,50}, xmin=1, xmax=52, ytick={1,1.5,2,2.5,3}, ymajorgrids=true, grid style=dashed, ,xlabel={$x$}, legend entries={$d=0$,$d=1$,$d=2$}]
\addplot[color=green,domain=1:52,samples=10]{1};
\addplot[color=blue,domain=1:52,samples=100]{1/(1/(1+1)+1/(2*x))};
\addplot[color=red,domain=1:52,samples=100]{1/(1/(1+2)+1/(2*x))};
\end{axis}
\end{tikzpicture}}
\scalebox{0.93}{
\begin{tikzpicture}
\begin{axis}[width=7.5cm, title={Plot of $A_{40}(x)$}, legend pos=outer north east, xtick={1,100,200,300,400,500}, xmin=1, xmax=510, ytick={1.9,10,20,30,41}, ymajorgrids=true, grid style=dashed,xlabel={$x$}]
\addplot[color=red,domain=1:510,samples=100]{1/(1/(1+40)+1/(2*x))};
\end{axis}
\end{tikzpicture}}
\caption{Plots of functions $A_d$ for $d=0,1,2$ (left) and $d=40$ (right). For
$d\geq 1$ they are strictly increasing, starting at
$A_d(1)=\frac{2(d+1)}{d+3}\in[1,2)$ and going up to $d+1$ at the limit. Here,
$A_0(1)=1$, $A_1(1)=1$, $A_2(1)=6/5=1.2$ and $A_{40}(1)=82/43\approx 1.907$.}
\label{fig:pot_mono}
\end{figure}
\Cref{fig:pot_mono} shows a graph of these functions.
Since $A_m$ is strictly increasing for $m\geq 1$, its inverse function, $A^{-1}_m:[2\frac{m+1}{m+3},m+1)\map [1,\infty)$, is well-defined and also strictly increasing for all $m\geq 1$, with
\begin{equation}
\label{eq:A_inverse}
A^{-1}_m(x)=\frac{(m+1)x}{2(m+1-x)}.
\end{equation}

The following two lemmas (whose proofs can be found in
\cref{app:potential_mono_seq_monotonicity_proof,app:potential_mono_bounds_proof})
describe some useful properties regarding the algebraic behaviour, and the relation
among, functions $A_m$ and $S_m$:

\begin{lemma}
\label{lemma:potential_mono_seq_monotonicity}
Fix any reals $y\geq x\geq 1$. Then the sequences
$\frac{A_m(x)}{m+1}$ and $\frac{A_m(x)}{A_m(y)}$
are decreasing, and sequence $A_m(x)$ is increasing (with respect to $m$).
\end{lemma}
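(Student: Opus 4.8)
The plan is to abandon the quotient form $A_m(x)=\frac{2(m+1)x}{2x+m+1}$ in favour of the reciprocal representation $A_m(x)=\bigl(\frac{1}{m+1}+\frac{1}{2x}\bigr)^{-1}$ recorded in \eqref{eq:A_def}, which holds for every integer $m\ge 1$. In this form all three claims collapse to the single elementary fact that $\frac{1}{m+1}$ strictly decreases as the integer $m$ grows, whereas $\frac{1}{2x}$ and $\frac{1}{2y}$ are fixed positive constants. Only the boundary index $m=0$, where $A_0\equiv 1$ is fixed by convention rather than by the general formula, needs separate attention, and I would dispatch it by a direct comparison of $m=0$ with $m=1$ at the very end.

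For the three monotonicities restricted to $m\ge 1$: \emph{(i)} since $\frac{1}{m+1}+\frac{1}{2x}$ strictly decreases in $m$, its reciprocal $A_m(x)$ strictly increases in $m$. \emph{(ii)} Writing
\[
\frac{A_m(x)}{m+1}=\Bigl[(m+1)\Bigl(\tfrac{1}{m+1}+\tfrac{1}{2x}\Bigr)\Bigr]^{-1}=\Bigl(1+\tfrac{m+1}{2x}\Bigr)^{-1},
\]
and noting that $\frac{m+1}{2x}$ increases in $m$, the right-hand side decreases in $m$. \emph{(iii)} For $y\ge x\ge 1$,
\[
\frac{A_m(x)}{A_m(y)}=\frac{\frac{1}{m+1}+\frac{1}{2y}}{\frac{1}{m+1}+\frac{1}{2x}}=1-\frac{\frac{1}{2x}-\frac{1}{2y}}{\frac{1}{m+1}+\frac{1}{2x}};
\]
here the numerator $\frac{1}{2x}-\frac{1}{2y}$ is a nonnegative constant (using $y\ge x$) while the denominator $\frac{1}{m+1}+\frac{1}{2x}$ strictly decreases in $m$, so the subtracted fraction increases in $m$ and the whole expression decreases in $m$.

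To finish, I would verify that including $m=0$ preserves each monotonicity: from $A_0(x)=1$ and $A_1(x)=\frac{2x}{x+1}\ge 1$ for $x\ge 1$ we get $A_0(x)\le A_1(x)$, $\frac{A_0(x)}{1}=1\ge\frac{x}{x+1}=\frac{A_1(x)}{2}$, and $\frac{A_0(x)}{A_0(y)}=1\ge\frac{x(y+1)}{y(x+1)}=\frac{A_1(x)}{A_1(y)}$ (the last inequality again because $y\ge x$). The lemma has no genuine obstacle; the only point requiring a moment's care is exactly this $m=0$ convention, since substituting $m=0$ into $\bigl(\frac{1}{m+1}+\frac{1}{2x}\bigr)^{-1}$ yields $\bigl(1+\frac{1}{2x}\bigr)^{-1}\ne 1=A_0(x)$, so that index must be handled outside the main computation.
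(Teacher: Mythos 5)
Your proof is correct and rests on the same core device the paper uses: passing to the reciprocal form $A_m(x)^{-1}=\frac{1}{m+1}+\frac{1}{2x}$ so that all three monotonicities reduce to the observation that $\frac{1}{m+1}$ decreases in $m$. Two small points of divergence are worth noting, both to your credit: for claim \emph{(iii)} the paper first reduces to showing $\frac{A_{m+1}(x)}{A_m(x)}$ is increasing in $x$ (via the equivalence $\frac{A_m(x)}{A_m(y)}\geq \frac{A_{m+1}(x)}{A_{m+1}(y)}\iff\frac{A_{m+1}(y)}{A_m(y)}\geq\frac{A_{m+1}(x)}{A_m(x)}$) and then computes that ratio, whereas you rewrite $\frac{A_m(x)}{A_m(y)}$ directly as $1-\frac{\frac{1}{2x}-\frac{1}{2y}}{\frac{1}{m+1}+\frac{1}{2x}}$ and read off the monotonicity — slightly shorter; and you explicitly flag and verify the $m=0$ boundary, where $A_0\equiv 1$ is set by convention and does \emph{not} satisfy $A_0(x)=\bigl(\frac{1}{0+1}+\frac{1}{2x}\bigr)^{-1}$, a gap the paper's proof silently steps over (it happens to be harmless, as your direct check confirms).
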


\begin{lemma}
\label{lemma:potential_mono_bounds}
Fix any integer $m\geq 0$ and reals $\gamma, w \geq 1$. Then
\begin{equation}
\label{eq:potential_mono_bounds_1}
\frac{\gamma^{m+1}}{A_m(\gamma w)}\leq \frac{S_m(\gamma(x+w))-S_m(\gamma x)}{w(x+w)^m}\leq  \gamma^{m+1},
\qquad\text{for all} \;\; x\geq 0,
\end{equation}
and
\begin{equation}
\label{eq:potential_mono_bounds_2}
\frac{\gamma^{m+1}}{m+1}\leq\frac{S_m(\gamma x)}{x^{m+1}} \leq \frac{\gamma^{m+1}}{A_m(\gamma)},
\qquad\text{for all}\;\; x\geq 1.
\end{equation}
\end{lemma}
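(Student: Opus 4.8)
The plan is to reduce both \eqref{eq:potential_mono_bounds_1} and \eqref{eq:potential_mono_bounds_2} to \emph{exact} closed-form expressions for the two quotients and then finish with elementary monotonicity and convexity. The degenerate case $m=0$ is immediate: $S_0(y)=y$ and $A_0\equiv 1$ turn all four inequalities into the identity ``quotient $=\gamma$''. So fix $m\geq 1$ from now on.

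Inequality \eqref{eq:potential_mono_bounds_2} is the easy one. A one-line computation from \eqref{eq:potential_mono_def} and \eqref{eq:A_def} gives, for every $x\geq 1$,
\[
\frac{S_m(\gamma x)}{x^{m+1}}=\gamma^{m+1}\left(\frac{1}{m+1}+\frac{1}{2\gamma x}\right)=\frac{\gamma^{m+1}}{A_m(\gamma x)}.
\]
Since $x\geq 1$ we have $\gamma x\geq\gamma\geq 1$, so by monotonicity of $A_m$ together with the boundary values recorded in \eqref{eq:A_def_boundaries} we get $A_m(\gamma)\leq A_m(\gamma x)<m+1$; dividing $\gamma^{m+1}$ by this chain yields exactly \eqref{eq:potential_mono_bounds_2}.

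For \eqref{eq:potential_mono_bounds_1} I would first record, for $x\geq 0$, $w\geq 1$ and $r:=\tfrac{x}{x+w}\in[0,1)$, the elementary identities (both obtained from $a^k-b^k=(a-b)\sum_{i=0}^{k-1}a^ib^{k-1-i}$)
\[
\frac{(x+w)^{m+1}-x^{m+1}}{w(x+w)^m}=\sum_{j=0}^m r^j,\qquad
\frac{(x+w)^m-x^m}{w(x+w)^m}=\frac{1-r^m}{w},
\]
which, after expanding $S_m(\gamma(x+w))-S_m(\gamma x)$ via \eqref{eq:potential_mono_def}, give the exact value of the middle quotient,
\[
Q:=\frac{S_m(\gamma(x+w))-S_m(\gamma x)}{w(x+w)^m}=\frac{\gamma^{m+1}}{m+1}\sum_{j=0}^m r^j+\frac{\gamma^m(1-r^m)}{2w}.
\]
For the \emph{lower} bound, note $\frac{\gamma^{m+1}}{A_m(\gamma w)}=\frac{\gamma^{m+1}}{m+1}+\frac{\gamma^m}{2w}$ by \eqref{eq:A_def}; subtracting, the index-$0$ terms cancel and $Q-\frac{\gamma^{m+1}}{A_m(\gamma w)}=\frac{\gamma^{m+1}}{m+1}\sum_{j=1}^m r^j-\frac{\gamma^m r^m}{2w}\ge \gamma^m r^m\bigl(\tfrac{m\gamma}{m+1}-\tfrac1{2w}\bigr)$, where I used $r^j\geq r^m$; the last expression is nonnegative since $2wm\gamma\geq 2m\geq m+1$ for $w,\gamma\geq 1$ and $m\geq 1$. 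For the \emph{upper} bound, $\gamma\geq1$ and $w\geq1$ give $Q\leq\gamma^{m+1}\bigl(\tfrac1{m+1}\sum_{j=0}^m r^j+\tfrac{1-r^m}{2}\bigr)$, so it remains to check $\sum_{j=0}^m r^j\leq\tfrac{m+1}{2}(1+r^m)$; this says that the mean of $r^0,\dots,r^m$ does not exceed the mean of the two endpoints, which follows from the chord bound $r^j\leq\tfrac{m-j}{m}r^0+\tfrac{j}{m}r^m$ (convexity of $t\mapsto r^t$) summed over $j$.

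The only genuinely delicate step is this last inequality $\sum_{j=0}^m r^j\leq\tfrac{m+1}{2}(1+r^m)$. It is tight: $Q\to\gamma^{m+1}$ exactly as $x\to\infty$ (indeed $Q\equiv\gamma^{m+1}$ when $m=w=1$), so any lossy estimate---bounding each $r^j$ by $1$, or applying the midpoint/trapezoid rule to the increasing, convex derivative $S_m'$---overshoots $\gamma^{m+1}$ and cannot be repaired, whereas the chord bound for the convex sequence $r^j$ is exact at the endpoints and hence sharp enough. I would also stress that the hypothesis $w\geq 1$ is used essentially in both directions of \eqref{eq:potential_mono_bounds_1}: it is what lets $\tfrac1{2w}$ be replaced by $\tfrac12$ in the upper bound and keeps $2wm\gamma$ above $m+1$ in the lower bound.
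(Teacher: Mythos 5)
Your proof is correct. You first handle $m=0$ directly, then for $m\ge1$ you compute the quotient $Q$ exactly in terms of $r=x/(x+w)$, $\gamma$ and $w$, and reduce both sides of \eqref{eq:potential_mono_bounds_1} to elementary scalar inequalities; I verified both: for the lower bound, $\sum_{j=1}^m r^j\ge mr^m$ combined with $2wm\gamma\ge 2m\ge m+1$ gives nonnegativity of the difference, and for the upper bound, $\gamma,w\ge1$ reduce the claim to $\sum_{j=0}^m r^j\le\tfrac{m+1}{2}(1+r^m)$, which your chord (convexity) argument establishes. Your treatment of \eqref{eq:potential_mono_bounds_2} is identical in spirit to the paper's.

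The organization differs from the paper's in two modest but genuine ways. The paper first normalizes to $w=1$ by the change of variables $(\gamma,x)\mapsto(\gamma w,x/w)$, and then proves the two resulting scalar inequalities as separate technical lemmas, both by binomial expansion and termwise coefficient comparison: the lower bound reduces to $(1+1/x)^m\ge1+\tfrac{m+1}{2x}$ (Lemma~\ref{lemma:helper_1}), and the upper bound to $(x+1)^{m+1}-x^{m+1}\le\tfrac{m+1}{2}[(x+1)^m+x^m]$ (Lemma~\ref{lemma:helper_2}). You avoid the $w=1$ reduction entirely by passing to $r$, which slightly tidies the bookkeeping, and replace the binomial coefficient comparison for the upper bound by the observation that the sequence $(r^j)_j$ is convex, so its average over $\{0,\dots,m\}$ is at most the average of the endpoints; this is exactly equivalent to Lemma~\ref{lemma:helper_2} but arguably more conceptual. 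Your lower-bound estimate $\sum_{j=1}^m r^j\ge mr^m$ is also a different, slightly cruder (yet sufficient) route than the paper's Lemma~\ref{lemma:helper_1}. Net effect: same strategy, marginally cleaner packaging; no gap.
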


\subsection{The Upper Bound}
\label{sec:upper_bound_therem}
Now we are ready to state our main positive result:
\begin{theorem}
\label{th:PoS_upper_general}
At any congestion game with polynomial latency functions of degree at most $d\geq 1$
and player weights ranging in $[1,W]$, for any
$\frac{2(d+1)W}{2W+d+1}\leq \alpha \leq d+1$
there exists an $\alpha$--approximate pure Nash equilibrium that, furthermore, has Price of Stability at
most
$$
1+\left(\frac{d+1}{\alpha}-1\right)W.
$$
\end{theorem}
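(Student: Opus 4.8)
The plan is to invoke \cref{lemma:approximate_PoS_ratios}: it suffices to construct, for each resource $e$ with cost $c_e(x)=\sum_{j=0}^{d}a_{e,j}x^{j}$, a function $\phi_e:\R_{\ge0}\to\R$ and positive constants $\alpha_1,\alpha_2,\beta_1,\beta_2$ satisfying \eqref{eq:cond_approx_eq} and \eqref{eq:cond_approx_cost} with $\alpha_2/\alpha_1=\alpha$ and $\beta_2/\beta_1=1+\bigl(\tfrac{d+1}{\alpha}-1\bigr)W$; the lemma then hands us an $\alpha$-approximate pure Nash equilibrium (a global minimizer of $\varPhi(\vecc s)=\sum_e\phi_e(x_e(\vecc s))$) whose Price of Stability is exactly the claimed value. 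The right endpoint $\alpha=d+1$ needs no potential: by \cref{th:PoS_OPT} the socially optimal profile is already a $(d+1)$-approximate equilibrium, with Price of Stability $1=1+\bigl(\tfrac{d+1}{d+1}-1\bigr)W$. So I may assume $\tfrac{2(d+1)W}{2W+d+1}\le\alpha<d+1$ henceforth.

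The heart of the argument is the choice of $\phi_e$, built from the Faulhaber monomials $S_j$ of \eqref{eq:potential_mono_def}. By \eqref{eq:A_def} and \eqref{eq:A_def_boundaries}, $A_d$ is a strictly increasing bijection of $[1,\infty)$ onto $\bigl[\tfrac{2(d+1)}{d+3},\,d+1\bigr)$, so I would put $\gamma{:=}A_d^{-1}(\alpha)/W$, which by \eqref{eq:A_inverse} equals $\tfrac{(d+1)\alpha}{2W(d+1-\alpha)}$; the hypothesis $\alpha\ge\tfrac{2(d+1)W}{2W+d+1}$ is exactly what makes $\gamma\ge1$ (so that \cref{lemma:potential_mono_bounds} applies), while $A_d(\gamma W)=\alpha$ by construction. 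I would then set $\phi_e(x){:=}\sum_{j=0}^{d}a_{e,j}\,S_j(\gamma x)/S_j(\gamma)$; note $S_j(\gamma)>0$, that $S_0(\gamma x)/S_0(\gamma)=x$, and that for $\gamma=1$ this is a Rosenthal-type potential (for affine latencies it is exactly the leading two terms of Rosenthal's polynomial). Writing $\lambda_j{:=}1/S_j(\gamma)$, so $\phi_e(x)=\sum_j a_{e,j}\lambda_j S_j(\gamma x)$, one has $\lambda_j\gamma^{j+1}=A_j(\gamma)$ by \eqref{eq:A_def}, which is the quantity that will control everything.

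For \eqref{eq:cond_approx_eq}: fixing $e$ and a player weight $w\le W$, \eqref{eq:potential_mono_bounds_1} applied to each monomial shows the degree-$j$ marginal ratio lies in $\bigl[A_j(\gamma)/A_j(\gamma w),\,A_j(\gamma)\bigr]$, and since $\tfrac{\phi_e(x+w)-\phi_e(x)}{w\,c_e(x+w)}$ is a nonnegative weighted average of these (with weights $a_{e,j}(x+w)^j$), it lies between $\min_j A_j(\gamma)/A_j(\gamma w)$ and $\max_j A_j(\gamma)$. By \cref{lemma:potential_mono_seq_monotonicity}, $A_j(\gamma)$ increases in $j$ while $A_j(\gamma)/A_j(\gamma W)$ decreases in $j$, so the worst degree is $j=d$ and the worst weight $w=W$: one may take $\alpha_1=A_d(\gamma)/A_d(\gamma W)$, $\alpha_2=A_d(\gamma)$, whence $\alpha_2/\alpha_1=A_d(\gamma W)=\alpha$. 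Condition \eqref{eq:cond_approx_cost} is handled the same way via \eqref{eq:potential_mono_bounds_2} (legitimate since the smallest load considered is $\min_i w_i\ge1$): the degree-$j$ contribution to $\phi_e(x)/(x\,c_e(x))$ lies in $\bigl[A_j(\gamma)/(j+1),\,1\bigr]$, and as $A_j(\gamma)/(j+1)$ decreases in $j$ (\cref{lemma:potential_mono_seq_monotonicity}) one gets $\beta_1=A_d(\gamma)/(d+1)$, $\beta_2=1$. Finally $\beta_2/\beta_1=\tfrac{d+1}{A_d(\gamma)}$, which by \eqref{eq:A_def} equals $\tfrac{2\gamma+d+1}{2\gamma}=1+\tfrac{d+1}{2\gamma}$; combined with $2\gamma W=\tfrac{(d+1)\alpha}{d+1-\alpha}$ (from $A_d(\gamma W)=\alpha$ and \eqref{eq:A_inverse}), this is $1+\bigl(\tfrac{d+1}{\alpha}-1\bigr)W$, as required.

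I expect the main obstacle to be exactly the coupling in the middle: choosing the coefficients of $\phi_e$ so that \emph{both} premises of \cref{lemma:approximate_PoS_ratios} close simultaneously with the \emph{sharp} constants. Concretely one has to check that, for every degree $j\le d$, the interval forced on $\lambda_j\gamma^{j+1}$ by \eqref{eq:potential_mono_bounds_1} (uniformly over $w\le W$) and the one forced by \eqref{eq:potential_mono_bounds_2} genuinely overlap, and that the extremal degree is always $j=d$; both boil down to the monotonicity relations of \cref{lemma:potential_mono_seq_monotonicity} (notably $\tfrac{A_d(\gamma)}{d+1}\le\tfrac{A_j(\gamma)}{j+1}$ and $\tfrac{A_d(\gamma)}{A_d(\gamma W)}\le\tfrac{A_j(\gamma)}{A_j(\gamma W)}$), and it is precisely this that produces $A_d(\gamma W)$ and $\tfrac{d+1}{A_d(\gamma)}$ rather than a crude $\varTheta(d)$.
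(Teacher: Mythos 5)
Your proposal is correct and follows essentially the same route as the paper: the scaled Faulhaber potential $\phi_e(x)=a_e\,S_{m_e}(\gamma x)/S_{m_e}(\gamma)$ fed into \cref{lemma:approximate_PoS_ratios}, with the constants controlled exactly by \cref{lemma:potential_mono_bounds,lemma:potential_mono_seq_monotonicity}. The only cosmetic differences are that the paper first decomposes each polynomial latency into monomial facilities (so each $\phi_e$ involves a single $S_{m_e}$) and proves the $\gamma$-parametrized \cref{claim:PoS_upper_general} before translating to $\alpha$ in \cref{app:restatement_main_upper}, whereas you keep the polynomial whole and use the mediant inequality to reduce to the per-degree bounds, and you fix $\gamma=A_d^{-1}(\alpha)/W$ up front — both choices are equivalent; your explicit appeal to \cref{th:PoS_OPT} for the endpoint $\alpha=d+1$ is a clean touch, since $\gamma\to\infty$ there.
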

Observe that, as the approximation parameter $\alpha$ increases, the Price of
Stability decreases, in a smooth way, from $\frac{d+3}{2}$ down to the optimal value
of $1$. Furthermore, notice how the interval within which $\alpha$ ranges, shrinks
as the range of player weights $W$ grows; in particular, its left boundary
$\frac{2(d+1)W}{2W+d+1}$ goes from $2\frac{d+1}{d+3}=2-\frac{4}{d+3}$ (for $W=1$) up
to $d+1$ (for $W\to\infty$).

As a result, \cref{th:PoS_upper_general} has two interesting
corollaries, one for $\alpha=\frac{2(d+1)W}{2W+d+1}$ and one for $W=1$ (unweighted games):
\begin{corollary}
\label{th:PoS_upper_general_exact_only}
At any congestion game with polynomial latencies of degree at most
$d\geq 1$ where player weights lie within the range $[1,W]$, there is
an $\frac{2(d+1)W}{2W+d+1}$-approximate pure Nash equilibrium with Price of Stability
at most $\frac{d+3}{2}$.
\end{corollary}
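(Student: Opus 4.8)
The plan is to obtain the corollary as the boundary case $\alpha=\frac{2(d+1)W}{2W+d+1}$ of \cref{th:PoS_upper_general}, so essentially no new idea is needed beyond an algebraic simplification. First I would check that this choice of $\alpha$ is admissible, i.e.\ that it actually lies in the interval $\left[\frac{2(d+1)W}{2W+d+1},\,d+1\right]$ to which \cref{th:PoS_upper_general} applies: it is trivially equal to the left endpoint, and it is at most $d+1$ because $2W+d+1>2W$ (as $d\geq 1$), so $\frac{2W}{2W+d+1}<1$ and hence $\frac{2(d+1)W}{2W+d+1}<d+1$. Thus the interval is nonempty and \cref{th:PoS_upper_general} guarantees an $\frac{2(d+1)W}{2W+d+1}$-approximate pure Nash equilibrium with Price of Stability at most $1+\left(\frac{d+1}{\alpha}-1\right)W$ for this particular $\alpha$.

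Next I would substitute $\alpha=\frac{2(d+1)W}{2W+d+1}$ into the bound $1+\left(\frac{d+1}{\alpha}-1\right)W$ and simplify. Computing $\frac{d+1}{\alpha}=\frac{(d+1)(2W+d+1)}{2(d+1)W}=\frac{2W+d+1}{2W}$, so $\frac{d+1}{\alpha}-1=\frac{d+1}{2W}$, and therefore
\begin{equation*}
1+\left(\frac{d+1}{\alpha}-1\right)W = 1+\frac{d+1}{2W}\cdot W = 1+\frac{d+1}{2}=\frac{d+3}{2}.
\end{equation*}
This yields exactly the claimed PoS bound.

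There is no real obstacle here: the corollary is a direct specialization, and the only thing to verify carefully is the admissibility of the chosen $\alpha$ (done above) and the arithmetic. The substantive content is entirely in \cref{th:PoS_upper_general}, whose proof builds the Faulhaber-based approximate potential and invokes \cref{lemma:approximate_PoS_ratios,lemma:potential_mono_bounds}; the corollary merely reads off the most balanced point of that result, where the approximation parameter is pushed as small as the construction allows and the PoS simultaneously attains its largest value $\frac{d+3}{2}$ over the admissible range.
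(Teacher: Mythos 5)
Your proof is correct and matches what the paper intends: the corollary is obtained by specializing \cref{th:PoS_upper_general} at the left endpoint $\alpha=\frac{2(d+1)W}{2W+d+1}$, and your admissibility check together with the algebra $1+\bigl(\frac{d+1}{\alpha}-1\bigr)W=1+\frac{d+1}{2}=\frac{d+3}{2}$ is exactly the verification needed.
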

It is interesting to point out here that, in light of \cref{th:PoS_lower_approx}, the above result
of \cref{th:PoS_upper_general_exact_only} is almost asymptotically tight as
far as the Price of Stability is concerned (see the discussion preceding \cref{th:PoS_lower_approx}).
\begin{corollary}
\label{th:PoS_upper_general_unweighted}
At any \emph{unweighted} congestion game with polynomial latencies of degree at most $d\geq
1$, the Price of Stability of $\alpha$-approximate equilibria is at most
$\frac{d+1}{\alpha}$, for any $2\frac{d+1}{d+3}\leq \alpha \leq d+1$.
\end{corollary}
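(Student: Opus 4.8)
The plan is to obtain \cref{th:PoS_upper_general_unweighted} as the $W=1$ specialization of \cref{th:PoS_upper_general}. First I would recall that an \emph{unweighted} congestion game is exactly one in which every player carries the same weight; invoking the scaling normalization from the Model section (divide every weight by $\min_i w_i$ and rescale each cost function accordingly, an operation that leaves the Price of Stability unchanged), we may assume this common weight equals $1$. Hence the player weights ``range in $[1,W]$'' with $W=1$, so \cref{th:PoS_upper_general} applies verbatim, with $d$ being the maximum degree of the (polynomial, nonnegative-coefficient) latencies.

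It then remains only to substitute $W=1$ into the conclusion of \cref{th:PoS_upper_general}. The admissibility interval $\frac{2(d+1)W}{2W+d+1}\le\alpha\le d+1$ becomes $\frac{2(d+1)}{d+3}\le\alpha\le d+1$, i.e.\ precisely $2\frac{d+1}{d+3}\le\alpha\le d+1$; and the guaranteed Price of Stability bound $1+\left(\frac{d+1}{\alpha}-1\right)W$ simplifies to $1+\left(\frac{d+1}{\alpha}-1\right)=\frac{d+1}{\alpha}$. Since \cref{th:PoS_upper_general} already produces, for each such $\alpha$, an $\alpha$-approximate pure Nash equilibrium attaining this bound, and the social cost of any $\alpha$-approximate equilibrium is by definition an upper bound on the $\alpha$-approximate Price of Stability of the game, the claim follows.

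I expect no genuine obstacle here: all the substantive content — building the Faulhaber-based approximate potential, checking the two premises of \cref{lemma:approximate_PoS_ratios} with the appropriate parameters $\alpha_1,\alpha_2,\beta_1,\beta_2$, and optimizing the resulting ratios $\alpha_2/\alpha_1$ and $\beta_2/\beta_1$ — is carried out inside the proof of \cref{th:PoS_upper_general}. The only points worth double-checking when writing up the corollary are the boundary arithmetic (that $\frac{2(d+1)\cdot 1}{2\cdot 1+d+1}=2\frac{d+1}{d+3}$ and that $1+\left(\frac{d+1}{\alpha}-1\right)=\frac{d+1}{\alpha}$) and the monotone, ``graceful'' behaviour of the bound: it decreases from $\frac{d+3}{2}$ at $\alpha=2\frac{d+1}{d+3}$ down to $1$ at $\alpha=d+1$, and, read formally in the regime $\alpha\to 1$, it matches the exact unweighted Price of Stability of $d+1$ due to \citet{Christodoulou2015}.
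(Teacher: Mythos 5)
Your proof is correct and is exactly the paper's intended derivation: the corollary is obtained by specializing \cref{th:PoS_upper_general} at $W=1$, with the boundary arithmetic $\frac{2(d+1)}{2+d+1}=2\frac{d+1}{d+3}$ and the bound $1+\left(\frac{d+1}{\alpha}-1\right)=\frac{d+1}{\alpha}$ falling out immediately. Nothing further is needed.
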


Before proving~\cref{th:PoS_upper_general}, we first restate it in the following
equivalent form, that parametrizes the approximation factor of the equilibrium, as
well as its Price of Stability guarantee, with respect to an ``external'', seemingly
artificial parameter $\gamma\in[1,\infty)$. The equivalence of the two formulations
is formally proven in \cref{app:restatement_main_upper}.
\begin{claim}[Restatement of~\cref{th:PoS_upper_general}]
\label{claim:PoS_upper_general}
For any $\gamma \geq 1$ there exists an $A_d(\gamma W)$-approximate pure Nash
equilibrium, which furthermore has Price of Stability at most
$\frac{d+1}{A_d(\gamma)}$, where $A_d$ is the strictly increasing
function\footnote{See \cref{fig:pot_mono}.} taking values within
$[2\frac{d+1}{d+3},d+1)$ defined in \eqref{eq:A_def}.
\end{claim}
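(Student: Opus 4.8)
The plan is to apply \cref{lemma:approximate_PoS_ratios}: it suffices to produce resource potentials $\phi_e$ and parameters $\alpha_1,\alpha_2,\beta_1,\beta_2$ satisfying \eqref{eq:cond_approx_eq} and \eqref{eq:cond_approx_cost} with $\alpha_2/\alpha_1=A_d(\gamma W)$ and $\beta_2/\beta_1=(d+1)/A_d(\gamma)$ (that $A_d$ maps $[1,\infty)$ into $\bigl[\tfrac{2(d+1)}{d+3},d+1\bigr)$ is \eqref{eq:A_def_boundaries}, so the resulting approximation parameter is automatically $\ge1$). Using the identity $\tfrac{m+1}{A_m(y)}=1+\tfrac{m+1}{2y}$ that follows from \eqref{eq:A_def}, the natural normalization is to aim for $\alpha_1=\tfrac{\gamma^{d+1}}{A_d(\gamma W)}$, $\alpha_2=\gamma^{d+1}$, $\beta_1=\tfrac{\gamma^{d+1}}{d+1}$, $\beta_2=\tfrac{\gamma^{d+1}}{A_d(\gamma)}$; concretely, I would build $\phi_e$ so that
\[
\frac{\gamma^{d+1}}{A_d(\gamma W)}\;\le\;\frac{\phi_e(x+w)-\phi_e(x)}{w\,c_e(x+w)}\;\le\;\gamma^{d+1}
\qquad\text{and}\qquad
\frac{\gamma^{d+1}}{d+1}\;\le\;\frac{\phi_e(x)}{x\,c_e(x)}\;\le\;\frac{\gamma^{d+1}}{A_d(\gamma)}
\]
for every facility $e$, every relevant load $x$, and every player weight $w\in[1,W]$.

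For the potential I would work monomial by monomial, setting $\phi_e(x)=\sum_{j=0}^{d}a_{e,j}\,\varphi_j(x)$ for a latency $c_e(x)=\sum_{j=0}^{d}a_{e,j}x^{j}$, where each $\varphi_j$ is built from the Faulhaber function $S_j$ of \eqref{eq:potential_mono_def} by a $\gamma$-dependent rescaling of the argument and a multiplicative renormalization --- both dictated by the algebra of \eqref{eq:A_def} --- chosen so that (i) the leading coefficient of every $\varphi_j$ is a \emph{common} multiple of $x^{j+1}$ carrying the factor $\gamma^{d+1}$ (this is what lets the $\gamma$-scaling combine cleanly across facilities of heterogeneous degree, and is precisely why \cref{claim:PoS_upper_general} handles general polynomials rather than just monomials), and (ii) the slack coming from the lower-order $\tfrac12 x^{j}$ Faulhaber term is calibrated against $A_d$ rather than against $A_j$; for the top degree this collapses to $\varphi_d(x)=S_d(\gamma x)$, and plugging $W=\gamma=1$ into the top-degree monomial recovers the first two significant terms of Rosenthal's potential. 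I would then verify the two displayed estimates termwise: \eqref{eq:potential_mono_bounds_1} of \cref{lemma:potential_mono_bounds} bounds the first-difference ratio of each $\varphi_j$ above by $\gamma^{d+1}$ and below by $\tfrac{\gamma^{d+1}}{A_j(\gamma_j w)}$ for the appropriate rescaled weight $\gamma_j$, while \eqref{eq:potential_mono_bounds_2} bounds its cost ratio. Since $\tfrac{\phi_e(x+w)-\phi_e(x)}{w c_e(x+w)}$ and $\tfrac{\phi_e(x)}{x c_e(x)}$ are convex combinations (with nonnegative weights $a_{e,j}(x+w)^{j}$, respectively $a_{e,j}x^{j+1}$) of these per-degree quantities, the global parameters are the worst per-degree bounds over all $0\le j\le d$ and $1\le w\le W$; here the monotonicity facts of \cref{lemma:potential_mono_seq_monotonicity} --- $A_m(x)$ increasing in $m$, $A_m$ increasing in its argument, $\tfrac{A_m(x)}{m+1}$ and $\tfrac{A_m(x)}{A_m(y)}$ decreasing in $m$ --- are exactly what forces the extremal degree to be $m=d$ (and the extremal weight to be $w=W$), giving $\alpha_2/\alpha_1=A_d(\gamma W)$ and $\beta_2/\beta_1=\tfrac{d+1}{A_d(\gamma)}$. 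Feeding these into \cref{lemma:approximate_PoS_ratios} produces the claimed $A_d(\gamma W)$-approximate pure Nash equilibrium with Price of Stability at most $\tfrac{d+1}{A_d(\gamma)}$.

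The main obstacle is pinning down the per-degree normalization $\varphi_j$ exactly, since the two conditions are in genuine tension: forcing the cost ratio of every degree into $\bigl[\tfrac{\gamma^{d+1}}{d+1},\tfrac{\gamma^{d+1}}{A_d(\gamma)}\bigr]$ pulls the leading term of $\varphi_j$ toward $\tfrac{\gamma^{d+1}}{d+1}x^{j+1}$ for all $j$, whereas forcing the first-difference ratios into $\bigl[\tfrac{\gamma^{d+1}}{A_d(\gamma W)},\gamma^{d+1}\bigr]$ uniformly pulls it toward $\tfrac{\gamma^{d+1}}{j+1}x^{j+1}$ --- and these disagree precisely for small $j$ when $\gamma W$ is small (in particular in the unweighted case $W=1$). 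Threading this needle exploits the $\tfrac12 x^{j}$ correction term of $S_j$ together with the exact identities $A_m(1)=\tfrac{2(m+1)}{m+3}$, $\lim_{x\to\infty}A_m(x)=m+1$ and $\tfrac{m+1}{A_m(y)}=1+\tfrac{m+1}{2y}$ from \eqref{eq:A_def}; but once $\varphi_j$ is fixed, bounding its first-difference ratio rigorously in the regime where the rescaled argument of $S_j$ would drop below $1$ --- so that \cref{lemma:potential_mono_bounds} cannot be invoked directly --- requires a hands-on calculation: substituting $r=\tfrac{x}{x+w}\in[0,1)$ one obtains a ratio of the shape $\tfrac{\gamma^{d+1}}{d+1}\sum_{i=0}^{j}r^{i}+(\text{lower-order in }r)$, which one bounds by checking monotonicity in $r$ and using $w\ge1$ and $\gamma\ge1$. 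That computation, together with the convex-combination bookkeeping needed to aggregate the per-monomial estimates into uniform $\alpha_i,\beta_i$, is the delicate heart of the argument; everything downstream is a direct substitution into \cref{lemma:approximate_PoS_ratios,lemma:potential_mono_bounds}.
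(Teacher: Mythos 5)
Your high-level plan --- instantiate Lemma \ref{lemma:approximate_PoS_ratios} with a Faulhaber-based resource potential and use the monotonicity facts of Lemma \ref{lemma:potential_mono_seq_monotonicity} to identify the extremal degree as $m=d$ and extremal weight as $w=W$ --- is the right one, and the convex-combination observation (the per-facility ratios are weighted averages of per-degree ratios) is a valid way to aggregate, though the paper sidesteps it entirely with a simpler WLOG: split each polynomial facility into one facility per monomial and reason facility-by-facility. The genuine gap is in your choice of normalization, which is not just a detail to be ``pinned down'' later --- it is simply the wrong normalization, and no calibration of the lower-order Faulhaber term can save it.

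Concretely, you fix the leading coefficient of every $\varphi_j$ to be the common quantity $\tfrac{\gamma^{d+1}}{d+1}x^{j+1}$ so as to hit $\alpha_1=\tfrac{\gamma^{d+1}}{A_d(\gamma W)},\ \alpha_2=\gamma^{d+1},\ \beta_1=\tfrac{\gamma^{d+1}}{d+1},\ \beta_2=\tfrac{\gamma^{d+1}}{A_d(\gamma)}$. Take $j=0$. Then $\varphi_0$ is affine, $\varphi_0(x)=\tfrac{\gamma^{d+1}}{d+1}x+\text{const}$, so the first-difference ratio $\tfrac{\varphi_0(x+w)-\varphi_0(x)}{w\cdot 1}$ is identically $\tfrac{\gamma^{d+1}}{d+1}$. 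But $A_d(\gamma W)<d+1$ for every $\gamma W\ge 1$, so $\alpha_1=\tfrac{\gamma^{d+1}}{A_d(\gamma W)}>\tfrac{\gamma^{d+1}}{d+1}$, and condition \eqref{eq:cond_approx_eq} fails outright. The same mismatch recurs for every $j<d$ in the limit $x\to\infty$, where the first-difference ratio of $\varphi_j$ tends to $\tfrac{(j+1)\gamma^{d+1}}{d+1}<\alpha_1$. Neither adjusting $\kappa_j$ in the $\tfrac{1}{2}x^j$ term nor rescaling the argument of $S_j$ helps, because both affect only the subleading behaviour, not these limiting values. The ``argument dropping below $1$'' obstruction you anticipate is an artifact of this bad normalization; it never arises in the correct construction.

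The fix the paper uses is a \emph{degree-dependent} multiplicative scaling that deliberately does \emph{not} preserve a common $\gamma^{d+1}$ factor: take $\bar\phi_e(x)=\tfrac{a_e}{S_{m_e}(\gamma)}S_{m_e}(\gamma x)=a_e\tfrac{A_{m_e}(\gamma)}{\gamma^{m_e+1}}S_{m_e}(\gamma x)$. Lemma \ref{lemma:potential_mono_bounds} then gives, per degree $m$, the \emph{degree-dependent} bounds $\tfrac{A_m(\gamma)}{A_m(\gamma w)}\le\tfrac{\bar\phi(x+w)-\bar\phi(x)}{w\,c(x+w)}\le A_m(\gamma)$ and $\tfrac{A_m(\gamma)}{m+1}\le\tfrac{\bar\phi(x)}{x\,c(x)}\le 1$, and the monotonicity facts you cite (but never deploy to this end) are exactly what uniformize these across $m\le d$ and $w\le W$ to $\alpha_1=\tfrac{A_d(\gamma)}{A_d(\gamma W)},\ \alpha_2=A_d(\gamma),\ \beta_1=\tfrac{A_d(\gamma)}{d+1},\ \beta_2=1$. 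The ratios $\alpha_2/\alpha_1=A_d(\gamma W)$ and $\beta_2/\beta_1=\tfrac{d+1}{A_d(\gamma)}$ are what the claim asks for; the individual $\alpha_i,\beta_i$ do not carry $\gamma^{d+1}$, and insisting that they should is what breaks your argument. With this scaling, the argument of $S_m$ is always $\gamma x$ with $\gamma\ge1$ and $x\ge 0$ (resp.\ $x\ge1$ for the cost bound, since weights are at least $1$), so Lemma \ref{lemma:potential_mono_bounds} applies verbatim and the ``hands-on calculation'' you flag as the delicate heart of the argument is not needed.
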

The statement of~\cref{claim:PoS_upper_general} may at first seem a bit cryptic,
compared to~\cref{th:PoS_upper_general}. Nevertheless, it brings forth some
important aspects of our upper bound construction that are not immediately obvious
from~\cref{th:PoS_upper_general}. In particular, notice how the weight range $W$ has
\emph{no} effect in the Price of Stability guarantee in the statement
of~\cref{claim:PoS_upper_general}, but appears only in the approximation factor of
the equilibrium. Furthermore, as it will become more clear in
\cref{sec:full_faulhaber_note}, this formulation provides a good degree of high-level
abstraction that helps with generalizing and improving our result in certain cases,
in a unified way. We believe this is important, since it is a promising direction
for future work (see also the discussion in~\cref{sec:EulerMaclaurin}). 

\begin{proof}[Proof of~\cref{claim:PoS_upper_general}]
Without loss of generality, it is enough to consider only weighted congestion games with
\emph{monomial} latency functions (of degree at most $d$); any polynomial is a sum
of monomials, so we can just simulate the polynomial latency of a facility by
introducing monomial-latency facilities for each one of its summands. More formally,
if a facility $e$ has latency function $ c_e(x)=\sum_{j=0}^d a_{e,j}x^j $, with
constants $a_{e,0},a_{e,1},\dots,a_{e,d}\geq 0$, we can replace $e$ by facilities
$e_0,\dots,e_d$ with latencies $c_{e_j}(x)=a_{e,j}x^j$, without any change to the
costs of the players. Furthermore, we can safely ignore all such facilities $e_j$ with $a_{e,j}=0$, since they have absolutely no effect in the players' costs.

So, from now on assume that for each facility $e\in E$ there exists a real constant
$a_{e}> 0$ and an nonnegative integer $m_e\leq d$ such that
$$
c_e(x)=a_{e}x^{m_e}.
$$
Then, in order to utilize \cref{lemma:approximate_PoS_ratios}, we choose functions
\begin{equation}
\label{eq:pot_def_poly}
\phi_e(x)
=a_{e} \cdot S_{m_e}(\gamma x),
\end{equation}
were $\gamma$ is a real parameter, free to range in $[1,\infty)$.
Recall here that functions $S_m$ and $A_m$ are defined in
\eqref{eq:potential_mono_def} and \eqref{eq:A_def}. 
To simplify notation, from now on we fix an arbitrary facility $e$ and drop the $e$-subscripts from $\phi_e$, $c_e$, $a_e$ and $m_e$.

From
\eqref{eq:potential_mono_bounds_1} of \cref{lemma:potential_mono_bounds} we get that, for any $x\geq
0$ and $w\in[1,W]$,
\begin{align*}
\frac{\gamma^{m+1}}{A_m(\gamma w)}\leq \frac{\phi(x+w)-\phi(x)}{w\cdot c(x+w)}
&=\frac{a\left[S_m(\gamma(x+w))-S_m(\gamma x)\right]}{w\cdot a\cdot  (x+w)^m}
\leq \gamma^{m+1}.
\end{align*}
Similarly, from \eqref{eq:potential_mono_bounds_2} we have that for any $x\geq 1$,
\begin{align*}
\frac{\gamma^{m+1}}{m+1}\leq \frac{\phi(x)}{x\cdot c(x)}
= \frac{a\cdot S_m(\gamma x)}{x\cdot a\cdot x^m}=\frac{S_m(\gamma x)}{x^{m+1}}\leq \frac{\gamma^{m+1}}{A_m(\gamma)}.
\end{align*}

Now let us just scale the functions $\phi_e$ we defined in \eqref{eq:pot_def_poly}
by a factor of $\frac{1}{S_{m}(\gamma)}$ and define a potential function
\begin{equation*}
\bar \phi(x)=\frac{\phi(x)}{S_m(\gamma)}=\frac{A_m(\gamma)}{\gamma^{m+1}}\cdot \phi(x)=a\frac{A_m(\gamma)}{\gamma^{m+1}} S_m(\gamma x).
\end{equation*}
Our previous bounds for $\phi$ show us that $\bar\phi$ satisfies
the requirements of \cref{lemma:approximate_PoS_ratios} with parameters
\begin{align*}
\alpha_1 &=\frac{\gamma^{m+1}}{A_m(\gamma w)} \cdot \frac{A_m(\gamma)}{\gamma^{m+1}}=\frac{A_m(\gamma)}{A_m(\gamma w)}\geq \frac{A_d(\gamma)}{A_d(\gamma w)}\geq \frac{A_d(\gamma)}{A_d(\gamma W)}\\
\alpha_2 &= \gamma^{m+1}\cdot \frac{A_m(\gamma)}{\gamma^{m+1}}=A_m(\gamma)\leq A_d(\gamma)\\
\beta_1 &= \frac{\gamma^{m+1}}{m+1} \cdot \frac{A_m(\gamma)}{\gamma^{m+1}}=\frac{A_m(\gamma)}{m+1}\geq \frac{A_d(\gamma)}{d+1}\\
\beta_2 &=\frac{\gamma^{m+1}}{A_m(\gamma)} \cdot \frac{A_m(\gamma)}{\gamma^{m+1}}=1,
\end{align*}
where the inequalities hold due to \cref{lemma:potential_mono_seq_monotonicity},
taking into consideration the fact that $\gamma w\geq \gamma\geq 1$ and $m\leq d$;
specifically for the last inequality on the bound of $\alpha_1$ we also used the
fact that $A_d$ is monotonically increasing.

Putting everything together, from \cref{lemma:approximate_PoS_ratios} we deduce that
indeed there exists an $A_d(\gamma W)$--approximate pure Nash equilibrium with Price
of Stability at most $\frac{d+1}{A_d(\gamma)}$. The fact that $A_d(\gamma)$ ranges
(monotonically) in $[2\frac{d+1}{d+3},d+1)$ is a consequence
of~\eqref{eq:A_def_boundaries}.
\end{proof}

\subsection{Small vs Large Degree Polynomials}
\label{sec:full_faulhaber_note}

One can argue that our choice to keep only the first two terms in Faulhaber's
formula \eqref{eq:faulhaber_formula_expand}, when defining our approximate potential
in \eqref{eq:potential_mono_def}, is suboptimal. To some extent, this is correct; it
is exactly the reason why this seemingly ``unnatural'' lower bound of
$2\frac{d+1}{d+3}=2-\frac{4}{d+3}$ for the approximation parameter $\alpha$ appears
in \cref{th:PoS_upper_general_unweighted} (or, more generally,
$\frac{2(d+1)W}{2W+d+1}$ in \cref{th:PoS_upper_general}). It would be nicer if
$\alpha$ could simply start from $1$ instead. Indeed, this can be achieved for small
values of $d$, as described below.

Considering the entire right-hand side expression in
\eqref{eq:faulhaber_formula_expand}, one can take the full, exact version of
Faulhaber's formula, that can be written\footnote{See, e.g.,
\citep[p.~288]{Knuth1993} or \citep[Eq.~23.1.4]{Abramowitz1964}.} in a very elegant
way as
\begin{equation}
\sum_{k=1}^n k^m
	= \frac{1}{m+1}\left[B_{m+1}(n+1)-B_{m+1}(0) \right],\label{eq:faulhaber_formula_full}
\end{equation}
where
$$
B_m(y)=\sum_{k=0}^m\binom{m}{k}B_ky^{m-k}, \qquad y\geq 0,
$$
are the Bernoulli polynomials, and coefficients $B_k=B_k(0)$ are the standard
Bernoulli numbers we used before. Now we can use \eqref{eq:faulhaber_formula_full}
to define a more fine-tuned version for $S_m$, that is, for $m\geq 1$ set $\hat
S_m(x)=\frac{1}{m+1}\left[B_{m+1}(x+1)-B_{m+1} \right]$ instead of
\eqref{eq:potential_mono_def}. For example, for degrees up to $m\leq 4$ these new
polynomials are:
\begin{gather*}
\hat S_0(x)=x,\qquad \hat S_1(x)= \frac{1}{2} x (x+1),\qquad \hat S_2(x)= \frac{1}{6} x (2 x^2+3 x+1)\\
\hat S_3(x)= \frac{1}{4} x^2 (x+1)^2,
\qquad
\hat S_4(x)= \frac{1}{30} x (6 x^4+15 x^3+10 x^2-1)
\end{gather*}
Using these values, one can verify that for up to $m\leq 4$, all our critical
technical requirements for the proof of \cref{claim:PoS_upper_general} (and thus,
\cref{th:PoS_upper_general} itself) are satisfied: most notably
\cref{lemma:potential_mono_seq_monotonicity,lemma:potential_mono_bounds}, and the
monotonicity of $\hat A_m(x)=\frac{x^{m+1}}{\hat S_m(x)}$ (with respect to $x\geq
1$). In particular, now we have that $\hat A_m(1)=\frac{1^{m+1}}{\hat S_m(1)}=1$,
which is exactly what we wanted: it means that the critical quantities $\hat
A_d(\gamma W)$ and $\hat A_d(\gamma)$ in \cref{claim:PoS_upper_general} can start
taking values all the way down to $\hat A_d(W)$ and $\hat A_d(1)=1$, respectively.
This translates to the approximation ratio parameter $\alpha$ in our main result in
\cref{th:PoS_upper_general} starting to range from $\alpha\geq
\hat A_d(W)$.

Thus,
\begin{quotation}
\emph{\cref{th:PoS_upper_general} can be rewritten for $d\leq 4$, with the
approximation parameter $\alpha$ taking values in $\hat A_d(W)\leq
\alpha\leq d+1$. In particular, for unweighted games, this means that
\cref{th:PoS_upper_general_unweighted} can be rewritten with $\alpha$ taking values
within the entire range of $[1,d+1]$.}
\end{quotation}

However, there is a catch, that does not allow us to do that in general; as $m$
grows large, the Bernoulli polynomials, that now play a critical role in our
definition of functions $\hat S_m$ (see \eqref{eq:faulhaber_formula_full}), start to
behave in a rather erratic, non-smooth way within the interior of the real intervals
between consecutive integer values. For example, one can check that, for $d=14$
function $\hat A_{14}$ is \emph{not} monotonically increasing within $[1,2]$. Even
more disastrously, for $d=20,21$ functions $\hat S_d$ take \emph{negative} values in
$[1,2]$ !

\appendix

\section{Lower Bound Proofs}

\subsection{Technical Lemmas}
\begin{lemma}
\label{lemma:newtech1} For any $d\geq 9$,
\begin{equation*}
\left(1+\frac{\ln d}{d}\right)^d \geq \frac{d}{\ln d}.
\end{equation*}
\end{lemma}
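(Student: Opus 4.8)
The plan is to take logarithms of both sides and reduce the statement to an elementary estimate. Since $d\geq 9$ forces $\ln d>1>0$, both sides are positive, so the claimed inequality is equivalent to
\[
d\ln\!\left(1+\frac{\ln d}{d}\right)\;\geq\;\ln\frac{d}{\ln d}\;=\;\ln d-\ln\ln d.
\]
First I would lower-bound the left-hand side using the standard inequality $\ln(1+x)\geq x-\tfrac12 x^2$, valid for every $x\geq 0$ (it follows at once from the fact that $\frac{d}{dx}\bigl[\ln(1+x)-x+\tfrac12 x^2\bigr]=\frac{x^2}{1+x}\geq 0$ and the expression vanishes at $x=0$). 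Applying it with $x=\frac{\ln d}{d}$ gives $d\ln\!\bigl(1+\frac{\ln d}{d}\bigr)\geq \ln d-\frac{(\ln d)^2}{2d}$, so it suffices to prove the cleaner inequality $\frac{(\ln d)^2}{2d}\leq\ln\ln d$.

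To finish I would bound the two sides separately. For the left side, $\ln d\leq\sqrt d$ holds for all $d\geq 1$ (the function $\sqrt t-\ln t$ attains its minimum over $[1,\infty)$ at $t=4$, where it equals $2-\ln 4>0$), hence $(\ln d)^2\leq d$ and therefore $\frac{(\ln d)^2}{2d}\leq\tfrac12$. For the right side, $\ln\ln d$ is increasing in $d$, so for $d\geq 9$ we have $\ln\ln d\geq\ln\ln 9=\ln(2.197\ldots)>\tfrac12$. Combining the two estimates, $\frac{(\ln d)^2}{2d}\leq\tfrac12<\ln\ln d$, which is exactly what was needed; reversing the reductions yields the lemma.

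I do not expect any real obstacle here: the whole argument is a short chain of elementary inequalities, and the only quantitative input is the single numerical check $\ln\ln 9>\tfrac12$ (together with the trivial observation that $\ln\ln d$ is defined and positive for $d\geq 9$, since $\ln 9>1$). If one prefers to avoid invoking $\ln d\leq\sqrt d$, an equally fine alternative is to note that $\theta\mapsto\frac{(\ln\theta)^2}{2\theta}$ is decreasing for $\theta\geq e^{2}$ and simply evaluate it at $\theta=9$; but the route above seems the most transparent.
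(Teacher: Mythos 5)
Your proof is correct and takes a genuinely different, though philosophically related, route from the paper's. The paper invokes an inequality attributed to Mitrinovi\'c, $\left(1+\frac{x}{n}\right)^n \geq e^x\bigl(1-\frac{x^2}{n}\bigr)$ for $1\leq x\leq n$, applies it with $x=\ln d$ and $n=d$ to get $\left(1+\frac{\ln d}{d}\right)^d\geq d\bigl(1-\frac{\ln^2 d}{d}\bigr)$, and then checks numerically that $\frac{\ln^2 d}{d}+\frac{1}{\ln d}\leq 1$ at $d=9$ (with monotonicity handling $d>9$). You instead take logarithms immediately and use the elementary bound $\ln(1+x)\geq x-\tfrac{1}{2}x^2$, reducing the claim to $\frac{(\ln d)^2}{2d}\leq\ln\ln d$, which you settle by bounding the left side by $\tfrac12$ (via $\ln d\leq\sqrt d$) and the right side below by $\ln\ln 9>\tfrac12$. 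Your route has the advantage of being entirely self-contained --- the Taylor-type bound is proved in one line from $\frac{d}{dx}\bigl[\ln(1+x)-x+\tfrac12 x^2\bigr]=\frac{x^2}{1+x}\geq 0$ --- and avoids citing an external inequality; the final reduction also has more slack (at $d=9$ the left side is about $0.27$ and the right about $0.79$), so the numerical check is less delicate than the paper's, which lands within $0.01$ of the threshold. All steps check out, including $\ln d\leq\sqrt d$ (minimum of $\sqrt t-\ln t$ at $t=4$ is $2-2\ln 2>0$) and $\ln\ln 9=\ln(2\ln 3)\approx 0.787>\tfrac12$.
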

\begin{proof}
From~\citet[Eq.~(3), p.~267]{Mitrinovic1970} we know that the following inequality holds for all $n \geq 1$ and $1\leq x \leq n$:
$$
\left(1+\frac{x}{n}\right)^n \geq e^x\left(1-\frac{x^2}{n}\right).
$$
Applying it with $n=d$ and $x=\ln d$ we get that indeed
$$
\left(1+\frac{\ln d}{d}\right)^d 
\geq
e^{\ln d}\left(1-\frac{\ln^2 d}{d}\right)
\geq d \frac{1}{\ln d},
$$
the last inequality holding due to the fact that for $d\geq 9$, 
$
\frac{\ln^2 d}{d}+\frac{1}{\ln d} \leq \frac{\ln^2 (9)}{9}+\frac{1}{\ln (9)}\approx 0.992 \leq 1.
$
\end{proof}
\begin{lemma}
\label{lemma:ratio_monotone}
For any integer $d\geq 2$, the function
$f:(0,\infty)^2\map (0,\infty)$ defined by
$$
f(x,y)=\frac{(y+x+1)^d-(y+x)^d}{(y+1)^d-y^d}
$$
is monotonically decreasing with respect to $y$. Furthermore, for $d\geq 9$,
\begin{equation}
\label{eq:c_def_lemma_3}
\zeta^{d+1}\leq
f\left((\beta_d\Phi_d+1)(\zeta-1),\beta_d\Phi_d-(1-\beta_d)\zeta\right)
\qquad\text{for all}\;\;
 \zeta\in[1,2],
\end{equation}
where $\beta_d$ is defined in \cref{lemma:def_c}.
\end{lemma}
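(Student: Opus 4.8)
The plan is to treat the two assertions separately, the monotonicity being quick and the quantitative estimate being the substantial part.

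\emph{Monotonicity in $y$.} Set $g(t):=(t+1)^d-t^d$, so $g>0$ on $(0,\infty)$ and $f(x,y)=g(y+x)/g(y)$. Then $\partial_y\ln f(x,y)=(\ln g)'(y+x)-(\ln g)'(y)$, so it suffices to show that $g$ is log-concave, i.e.\ $(g')^2-gg''>0$ on $(0,\infty)$. Writing $g'=d[(t+1)^{d-1}-t^{d-1}]$ and $g''=d(d-1)[(t+1)^{d-2}-t^{d-2}]$, dividing the bracketed part by $t^{2d-2}$, and setting $r:=(t+1)/t>1$ and $u_k:=\frac{r^k-1}{r-1}$, this reduces to $d\,u_{d-1}^2>(d-1)u_{d-2}u_d$. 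The elementary identity $u_{d-1}^2-u_{d-2}u_d=r^{d-2}$ (a short computation after clearing denominators) rewrites the difference of the two sides as $(d-1)r^{d-2}+u_{d-1}^2>0$, which is clear. Hence $(\ln g)'$ is strictly decreasing and $f$ is strictly decreasing in $y$.

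\emph{Reduction of the quantitative bound.} Abbreviate $\beta:=\beta_d$, $\Phi:=\Phi_d$ and introduce $u:=\beta(\Phi+1)$. Expanding the definitions of the two arguments of $f$ in \eqref{eq:c_def_lemma_3} yields the clean identities $X+Y+1=\zeta u$ and $Y+1=u-(1-\beta)(\zeta-1)$; in particular at $\zeta=1$ the first argument of $f$ is $0$, so $f=1$ and the claimed inequality is an equality there. Therefore, with $\Psi(\zeta):=\ln g(X+Y)-\ln g(Y)-(d+1)\ln\zeta$, the statement $\Psi(\zeta)\ge 0$ on $[1,2]$ follows from $\Psi(1)=0$ together with $\Psi'(\zeta)\ge 0$. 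Differentiating and using the two identities gives
\[
\Psi'(\zeta)=u\,\frac{g'(p)}{g(p)}+(1-\beta)\,\frac{g'(q)}{g(q)}-\frac{d+1}{\zeta},\qquad p:=\zeta u-1,\quad q:=Y .
\]
I would then bound each $g'/g$ term via the elementary identity $(t+1)\frac{g'(t)}{g(t)}=d\,\frac{1-s^{d-1}}{1-s^d}\ge d\bigl(1-s^{d-1}\bigr)$ with $s:=\frac{t}{t+1}$ (the inequality because $1-s^d\le 1$); since $p+1=\zeta u$ and $q+1=u-(1-\beta)(\zeta-1)$, this turns $\Psi'(\zeta)\ge 0$ into an inequality involving only $d$, $\zeta$ and $u$.

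\emph{Closing the estimate, and the main obstacle.} It remains to verify that resulting inequality for every $d\ge 9$ and $\zeta\in[1,2]$. The crucial input from \cref{lemma:def_c} is \eqref{eq:c_def_lemma_1}: combined with $(\Phi+1)^d=\Phi^{d+1}$ it gives the convenient consequence $\Phi\le\bigl(1+\tfrac{1-\beta}{u}\bigr)^d$, equivalently $u\le(1-\beta)\Phi$, which controls $u$ in terms of $\Phi$; together with $0.38\le\beta\le\tfrac12$ and the asymptotics $\Phi_d=\varTheta(d/\ln d)$ (via \cref{lemma:newtech1} and the refined growth estimates for $\Phi_d$), the inequality is comfortable for large $d$, while the remaining small degrees $d=9,10,\dots$ up to an explicit threshold can be checked directly. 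I expect this last verification to be the crux: because the target is tight at $\zeta=1$, the cruder bound $g'(t)/g(t)\ge (d-1)/(t+1)$ is \emph{just} insufficient for the smallest degrees (it already fails numerically at $d=9$), so one genuinely needs the sharper lower bound $g'(t)/g(t)\ge\frac{d}{t+1}\bigl(1-(\tfrac{t}{t+1})^{d-1}\bigr)$, applied with $\zeta$ kept coupled in both the $p$- and $q$-terms rather than bounding them in isolation, in order to win against the actual (somewhat unwieldy) values of $\beta_d$ and $\Phi_d$.
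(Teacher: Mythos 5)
Your monotonicity argument is correct and is a clean, equivalent reformulation of the paper's: setting $g(t)=(t+1)^d-t^d$, you show $g$ is log-concave via the identity $u_{d-1}^2-u_{d-2}u_d=r^{d-2}$, whereas the paper establishes the equivalent fact that $h(t)=d\,g(t)/g'(t)$ is increasing (via the auxiliary function $\bar h$ and convexity of $t\mapsto t^d$). Your reduction of the quantitative estimate is also sound: the identities $X+Y+1=\zeta u$ and $Y+1=u-(1-\beta)(\zeta-1)$ with $u=\beta(\Phi_d+1)$ are correct, equality at $\zeta=1$ holds, and reducing to $\Psi'(\zeta)\ge 0$ on $[1,2]$ is exactly what the paper does in \eqref{eq:lemma_fraction_helper_3}.

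The substantive gap is that you never actually prove $\Psi'\ge 0$; you sketch a plan, observe that a crude bound fails, propose a sharper one, and defer the conclusion to ``comfortable for large $d$'' plus numerics for small $d$. That is precisely the hard part of the lemma, and it remains undone. Moreover, your diagnosis of the right route is off in two respects. First, the paper does \emph{not} keep the two terms coupled in $\zeta$: it bounds them by their separate worst cases, using monotonicity of $\bar h$ at $\zeta=2$ for the $p$-term (where $\zeta u\le\Phi_d+1$) and monotonicity of $h$ together with $(1-\beta)\zeta\ge 1/2$ at $\zeta=1$ for the $q$-term (where $q\le(\Phi_d-1)/2$), and the decoupled bound is still strong enough. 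Second, your proposed lower bound $g'(t)/g(t)\ge\tfrac{d}{t+1}\bigl(1-(\tfrac{t}{t+1})^{d-1}\bigr)$ drops the $1-s^d$ factor from the denominator and thus loses a multiplicative factor of roughly $\tfrac{\Phi_d-1}{\Phi_d}$ exactly where the estimate is tightest; the paper instead evaluates $g'/g$ \emph{exactly} at the two extremal arguments and simplifies using $(\Phi_d+1)^d=\Phi_d^{d+1}$. This reduces $\Psi'\ge 0$ to the single $\zeta$-free inequality \eqref{eq:asymptotic_phi_3},
\[
\frac{(\Phi_d+1)^{d-1}-(\Phi_d-1)^{d-1}}{(\Phi_d+1)^{d}-(\Phi_d-1)^{d}}-\frac{1}{\Phi_d^2-\Phi_d}\ \ge\ \frac{1}{d},
\]
which is then proved in \cref{lemma:lower_bound_phi_asymptotic}: analytically for $d\ge 18$ (using $\Phi_d>8$ and $\Phi_d\le 2d/\ln d$) and by direct verification for $d=9,\dots,17$. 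That technical lemma carries the real weight of the argument, and your proposal leaves it unsupplied.
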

\begin{proof} First let us define function $h:(0,\infty)\map (0,\infty)$ with
\begin{equation}
\label{eq:def_h} h(t)=\frac{(t+1)^d-t^d}{(t+1)^{d-1}-t^{d-1}}.
\end{equation} We will show that $h$ is increasing, which will suffice to prove the
desired monotonicity of $f$ since its derivative is
\begin{align*}
\frac{\partial f(x,y)}{\partial y}
&=
\frac{d \left[(x+y+1)^{d-1}-(x+y)^{d-1}\right]}{(y+1)^d-y^d}\\
&\qquad\qquad\qquad -
\frac{d
\left[(y+1)^{d-1}-y^{d-1}\right]
\left[(x+y+1)^d-(x+y)^d\right]}{\left[(y+1)^d-y^d\right]^2}
\\
&=\frac{d\left[(y+1)^{d-1}-y^{d-1}\right]\left[(x+y+1)^{d-1}-(x+y)^{d-1}\right]}{\left[(y+1)^d-y^d\right]^2}\left[h(y)-h(x+y)
\right],
\end{align*} which is negative due to the monotonicity of $h$. To prove that $h$ is
indeed increasing, we will show something stronger; namely that function $\bar h:(1,\infty)\map
(0,\infty)$ with
\begin{equation}
\label{eq:def_h_bar}
\bar h(t)=\frac{t^d-(t-1)^d}{t^{d}-t(t-1)^{d-1}}
\end{equation} is increasing. This will suffice to demonstrate that $h$ is
increasing as well, since $h(t)=(t+1)\cdot\bar h(t+1)$.
Taking its derivative we see that
$$
\frac{\partial \bar h(t)}{\partial t}=
\frac{\left[(t-1)^d-t^{d}+d t^{d-1}\right](t-1)^d }{\left[t^{d+1}-t^d-t
(t-1)^d\right]^2}>0
$$
since from the convexity of function $t\mapsto t^d$ we know that $t^d-(t-1)^d<
dt^{d-1}$.

Now let us prove the
remaining part of our lemma, that is \eqref{eq:c_def_lemma_3}. Observe that if we
set $\zeta = 1$ to \eqref{eq:c_def_lemma_3} it is satisfied, since $f(0,y)=1$ for
any $y>0$. So, it is enough if we prove that
\begin{multline*}
\zeta^{-(d+1)}f\left((\beta_d\Phi_d+1)(\zeta-1),\beta_d\Phi_d-(1-\beta_d)\zeta\right)
=\\
\zeta^{-(d+1)}\frac{\left[(\alpha+\beta)\zeta\right]^d-\left[(\alpha+\beta)\zeta-1\right]^d}{\left[\alpha+1-(1-\beta)\zeta\right]^d-\left[\alpha-(1-\beta)\zeta\right]^d}
\end{multline*}
is increasing with respect to $\zeta\in [1,2]$, where here we are using $\beta=\beta_d$ and 
$\alpha=\beta \Phi_d$. So, if we define
\begin{align*} f_1(\zeta ) &=
\left[(\alpha+\beta)\zeta\right]^d-\left[(\alpha+\beta)\zeta-1\right]^d\\ f_2(\zeta)
&= \left[\alpha+1-(1-\beta)\zeta\right]^d-\left[\alpha-(1-\beta)\zeta\right]^d
\end{align*} and we compute the derivative $
\frac{\partial}{\partial
\zeta}\left(\zeta^{-(d+1)}\frac{f_1(\zeta)}{f_2(\zeta)}\right) $ of the above
expression, we need to show that
\begin{equation}
\label{eq:lemma_fraction_helper_3}
\zeta\left[\frac{f_1'(\zeta)}{f_1(\zeta)}-\frac{f_2'(\zeta)}{f_2(\zeta)}\right] \geq
d+1.
\end{equation} Now notice that
$$
\zeta\frac{f_1'(\zeta)}{f_1(\zeta)}=d(\alpha+\beta)\zeta\frac{\left[(\alpha+\beta)\zeta\right]^{d-1}-\left[(\alpha+\beta)\zeta-1\right]^{d-1}}{\left[(\alpha+\beta)\zeta\right]^d-\left[(\alpha+\beta)\zeta-1\right]^d}=\frac{d}{\bar
h\left((\alpha+\beta)\zeta\right)},
$$
where $\bar h$ is the increasing function defined in \eqref{eq:def_h_bar}, so taking into consideration that
$$
(\alpha+\beta)\zeta=\beta(\Phi_d+1)\zeta\leq\frac{1}{2}(\Phi_d+1)2\leq \Phi_d+1,
$$
we can get that
\begin{align*}
\zeta\frac{f_1'(\zeta)}{f_1(\zeta)} &\geq \frac{d}{\bar
h(\Phi_d+1)} =d(\Phi_d+1)\frac{(\Phi_d+1)^{d-1}-\Phi_d^{d-1}}{(\Phi_d+1)^{d}-\Phi_d^{d}}\\
&= d\frac{\Phi_d^{d+1}-\Phi_d^{d}-\Phi_d^{d-1}}{\Phi_d^{d+1}-\Phi_d^{d}}
= d-\frac{d}{\Phi_d^2-\Phi_d}.
\end{align*}
Similarly, we can see that
$$
-\zeta\frac{f_2'(\zeta)}{f_2(\zeta)}
=
\frac{d(1-\beta)\zeta}{h(\alpha-(1-\beta)\zeta)},
$$
where $h$ is the increasing function defined in \eqref{eq:def_h}, so taking into consideration that
$$
\alpha-(1-\beta)\zeta\leq\beta\Phi_d-(1-\beta)\leq\frac{\Phi_d-1}{2}
\qquad\text{and}\qquad (1-\beta)\zeta\geq \frac{1}{2},
$$
we get that
$$
-\zeta\frac{f_2'(\zeta)}{f_2(\zeta)}
\geq
\frac{d/2}{h\left((\Phi_d-1)/2\right)}
=
d\frac{(\Phi_d+1)^{d-1}-(\Phi_d-1)^{d-1}}{(\Phi_d+1)^{d}-(\Phi_d-1)^{d}}.
$$
Putting everything together, in order to prove the desired
\eqref{eq:lemma_fraction_helper_3}, it now suffices to show that
$$
d\frac{(\Phi_d+1)^{d-1}-(\Phi_d-1)^{d-1}}{(\Phi_d+1)^{d}-(\Phi_d-1)^{d}}-\frac{d}{\Phi_d^2-\Phi_d}\geq
1,
$$
which we know holds from \eqref{eq:asymptotic_phi_3} of
\cref{lemma:lower_bound_phi_asymptotic}.
\end{proof}

\begin{lemma}
\label{lemma:lower_bound_phi_asymptotic}
For any integer $d\geq 9$,
\begin{equation}
\label{eq:asymptotic_phi_3}
\frac{(\Phi_d+1)^{d-1}-(\Phi_d-1)^{d-1}}{(\Phi_d+1)^{d}-(\Phi_d-1)^{d}}-\frac{1}{\Phi_d^2-\Phi_d}\geq
\frac{1}{d}.
\end{equation}
Furthermore, asymptotically $\Phi_d\sim\frac{d}{\ln d}$, i.e.
\begin{equation}
\label{eq:phiasy}
\lim_{d\to\infty}\frac{\Phi_d}{d/\ln d}=1.
\end{equation}
In particular, for any integer $d$,
\begin{equation}
\label{eq:asymptotic_phi_2}
 \Phi_d \leq \gamma_d\frac{d}{\ln d}
 \qquad\text{with}\quad
 \gamma_d{:=} \frac{\ln d}{\mathcal W(d)}\leq 1.368
 \quad\text{and}\quad
 \lim_{d\to\infty} \gamma_d=1,
\end{equation} where $\mathcal W(\cdot)$ denotes the (principal branch of the) Lambert--W
function\footnote{That is, for any positive real $x$, $\mathcal W(x)=z$ gives the unique
positive real solution $z$ to the equation $x=z\cdot e^z$.}.
\end{lemma}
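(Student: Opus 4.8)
The plan is to prove the three displayed claims in the order \eqref{eq:asymptotic_phi_2}, \eqref{eq:phiasy}, \eqref{eq:asymptotic_phi_3}, since each uses the previous. Throughout write $\Phi=\Phi_d$, and rewrite its defining equation $(\Phi+1)^d=\Phi^{d+1}$ in logarithmic form as
\[
d\ln\!\left(1+\tfrac1\Phi\right)=\ln\Phi .
\]
For the upper bound \eqref{eq:asymptotic_phi_2}: since $\ln(1+1/\Phi)<1/\Phi$, the identity gives $\ln\Phi<d/\Phi$, i.e.\ $\Phi\ln\Phi<d$. The map $x\mapsto x\ln x$ is strictly increasing for $x>1/e$ (and $\Phi>1$), while $x\ln x=d$ has the unique solution $x=e^{\mathcal W(d)}=d/\mathcal W(d)$ (using $\mathcal W(d)e^{\mathcal W(d)}=d$); hence $\Phi<d/\mathcal W(d)=\gamma_d\cdot\tfrac d{\ln d}$. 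Taking logarithms of $\mathcal W(d)e^{\mathcal W(d)}=d$ gives $\mathcal W(d)+\ln\mathcal W(d)=\ln d$, so $\gamma_d=\tfrac{\ln d}{\mathcal W(d)}=1+\tfrac{\ln\mathcal W(d)}{\mathcal W(d)}\le 1+\max_{t>0}\tfrac{\ln t}{t}=1+\tfrac1e<1.368$, and since $\mathcal W(d)\to\infty$ the same identity yields $\gamma_d\to1$.

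For the asymptotics \eqref{eq:phiasy} it remains to lower-bound $\Phi$. The function $\psi(x):=(1+1/x)^d-x$ is strictly decreasing on $(0,\infty)$ and vanishes exactly at $\Phi$, so $\psi(a)\ge0$ implies $a\le\Phi$. Taking $a=d/\ln d$, the required inequality $\psi(d/\ln d)=\left(1+\tfrac{\ln d}{d}\right)^d-\tfrac d{\ln d}\ge0$ for $d\ge9$ is precisely \cref{lemma:newtech1}; hence $\Phi\ge d/\ln d$. Combined with the upper bound, $1\le\Phi/(d/\ln d)\le\gamma_d\to1$, so \eqref{eq:phiasy} follows by squeezing.

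For \eqref{eq:asymptotic_phi_3} I would use the defining relation to eliminate all powers of $\Phi\pm1$. Set $q:=(1-\Phi^{-2})^d=(\Phi^2-1)^d/\Phi^{2d}$; then $(\Phi-1)^d=(\Phi^2-1)^d/(\Phi+1)^d=q\,\Phi^{d-1}$ and $(\Phi\pm1)^{d-1}=(\Phi\pm1)^d/(\Phi\pm1)$, so a routine simplification turns the left side of \eqref{eq:asymptotic_phi_3} into a rational function:
\[
\frac{(\Phi+1)^{d-1}-(\Phi-1)^{d-1}}{(\Phi+1)^{d}-(\Phi-1)^{d}}-\frac1{\Phi^2-\Phi}
=\frac{\Phi^2(\Phi^2-2\Phi-1)-q(\Phi^2-1)}{\Phi(\Phi^2-1)(\Phi^2-q)} .
\]
All factors on the right are positive, since $\Phi>1+\sqrt2$ for $d\ge9$ (by the previous parts) and $q<1<\Phi^2$. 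Clearing denominators and rearranging so that the $q$-terms partly cancel reduces \eqref{eq:asymptotic_phi_3} to $d\Phi^2(\Phi^2-2\Phi-1)-\Phi^3(\Phi^2-1)\ge q(\Phi^2-1)(d-\Phi)$; using $0<q<1$ and $\Phi<d$, it then suffices to establish the single polynomial inequality
\[
(\Phi_d^2-1)^2\,(d-\Phi_d)\ \ge\ 2d\,\Phi_d^3 .
\]

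This last inequality I would verify in two regimes. For large $d$ it is routine: its two sides have ratio $\tfrac{\Phi_d}{2}\bigl(1-\tfrac{\Phi_d}{d}\bigr)\bigl(1-\Phi_d^{-2}\bigr)^2$, which by $\tfrac d{\ln d}\le\Phi_d\le\tfrac d{\mathcal W(d)}$ is $\Theta\!\left(\tfrac d{\ln d}\right)\to\infty$. For each of the finitely many remaining small degrees $9\le d\le d_0$ one brackets $\Phi_d$ tightly by evaluating $\psi$ at nearby points — for instance $\psi(5.0)>0>\psi(5.1)$ gives $\Phi_9\in(5.0,5.1)$ — and checks the low-degree polynomial inequality directly on that short interval. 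I expect this small-$d$ step to be the main obstacle: the slack in \eqref{eq:asymptotic_phi_3} is razor-thin for the least admissible degrees (for $d=9$ the two sides differ by only about $0.003$), and the polynomial inequality above is not monotone in $\Phi$ over the a priori range $[d/\ln d,\,d/\mathcal W(d)]$, so one genuinely needs the sharp Lambert--W estimate on $\Phi_d$ — not merely $\Phi_d=\Theta(d/\ln d)$ — together with a sufficiently careful numerical bracketing of $\Phi_d$.
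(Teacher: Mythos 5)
Your proofs of \eqref{eq:asymptotic_phi_2} and \eqref{eq:phiasy} are correct and, while in the same spirit as the paper's (both ultimately rest on $(1+t)^n<e^{nt}$), are noticeably more direct: you rewrite the defining equation as $d\ln(1+1/\Phi)=\ln\Phi$ and apply $\ln(1+t)<t$ to get $\Phi\ln\Phi<d$ in one line, then invert via the Lambert--W function, whereas the paper routes through Aland et al.'s equivalence \eqref{eq:boundsphiequiv} and a binomial expansion. Your observation that the a priori lower bound for \eqref{eq:phiasy} is exactly \cref{lemma:newtech1}, via the decreasing function $\psi(x)=(1+1/x)^d-x$, is also precisely the paper's mechanism.

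For \eqref{eq:asymptotic_phi_3} your algebraic reduction is correct and in fact \emph{sharper} than the paper's. You substitute $(\Phi+1)^d=\Phi^{d+1}$ and keep the factor $q=(1-\Phi^{-2})^d$, so that discarding $q\le1$ yields the sufficient condition $(\Phi^2-1)^2(d-\Phi)\ge 2d\Phi^3$, i.e.\ $\frac{(\Phi^2-1)^2}{2\Phi^3}\ge\frac{1}{1-\Phi/d}$. The paper instead discards $(\Phi-1)^{d+1}\le\Phi^{d+1}$, arriving at $\frac{(\Phi+1)(\Phi-2)}{2\Phi}\ge\frac{1}{1-\Phi/d}$; the left-hand sides differ by the factor $\frac{(\Phi-1)^2(\Phi+1)}{\Phi^2(\Phi-2)}>1$, so the paper's sufficient condition is strictly harder — indeed, for $d=9$ the paper's fails (roughly $1.83<2.28$) while yours holds (roughly $2.33>2.28$), which is why the paper resorts to numerically verifying the \emph{original} inequality for $d=9,\dots,17$ and only uses its polynomial reduction for $d\ge18$.

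Where your write-up falls short of a complete proof is that you never actually close either regime. You observe that the ratio of the two sides is $\tfrac{\Phi}{2}(1-\Phi^{-2})^2(1-\Phi/d)\to\infty$ but give no explicit threshold $d_0$ beyond which the a priori sandwich $\tfrac{d}{\ln d}\le\Phi_d\le\tfrac{d}{\mathcal W(d)}$ certifies the inequality (it does, roughly from $d\approx12$ on, but this needs to be checked carefully since the polynomial $(\Phi^2-1)^2(d-\Phi)-2d\Phi^3$ is not monotone over that sandwich); and for the residual small degrees you correctly diagnose that one must bracket $\Phi_d$ by sign-checking $\psi$, but you only illustrate the bracketing for $d=9$ rather than carrying it out for the full finite range. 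The paper's proof carries the same character — it too relies on numerical verification for $d\le17$ — so this is not a conceptual gap, but your version must still explicitly (i) determine the threshold $d_0$ for which the a priori bounds suffice, and (ii) exhibit the brackets and verify the polynomial inequality for each $9\le d<d_0$ before the argument is complete.

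Finally, a minor remark: your sufficient condition requires not just $q<1$ but also that $\Phi_d<d$ and that the common denominator in your rational rewriting is positive (which uses $\Phi_d>1+\sqrt2$). You state these, and they indeed hold for $d\ge9$, but it is worth flagging that they are genuine prerequisites for the clearing-of-denominators step.
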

\begin{proof}
To show \eqref{eq:asymptotic_phi_3}, first one can simply
numerically verify that it indeed holds for all integers $d=9,10,\dots,17$, so let us
just focus on the case when $d\geq 18$. For simplicity, in the remainder of the
proof we denote $y=\Phi_d$. It is easy to see\footnote{Here we are silently using
the fact that $\Phi_n$ is an increasing function of the integer $n$. One can
formally prove this by, e.g.,\ combining Lemmas 5.1 and 5.2 of \citet{Aland2011}.}
then that
$$
y\geq \Phi_{18}\approx 8.11>8.
$$
Performing some elementary algebraic manipulations in \eqref{eq:asymptotic_phi_3},
we can equivalently write it as
$$
(y+1)^{d-1}\left[-y^3+dy^2-(2d-1)y-d\right]\geq (y-1)^{d+1}(d-y).
$$
Using the fact that $(y+1)^{d-1}=\frac{y^{d+1}}{y+1}$, and then that $y^{d+1}\geq (y-1)^{d+1}$, we
can see that it is enough to show that
$$
-y^3+dy^2-(2d-1)y-d \geq (d-y)(y+1),
$$
or equivalently,
\begin{gather}
y^2(d-y) -2dy + y -d \geq (d-y)(y+1) \notag\\
(d-y)\left[y^2-1-(y+1)\right] \geq 2dy \notag\\
(d-y)(y+1)(y-2) \geq 2y d \notag\\
\frac{(y+1)(y-2)}{2y} \geq \frac{1}{1-\frac{y}{d}}. \label{eq:helpernew1}
\end{gather}
For the last inequality we took into consideration that $\frac{y}{d}<1$. As a matter of fact, using an upper
bound of $y\leq \frac{2d}{\ln d}$ on $y=\Phi_d$ (see \citep[Lemma 5.4]{Aland2011} or~\eqref{eq:asymptotic_phi_2}), we have that
$$
\frac{y}{d}\leq \frac{2}{\ln d} \leq \frac{2}{\ln(18)}\approx 0.692.
$$
Due to the fact that function $z\mapsto\frac{1}{1-z}$ is increasing for $z\in
[0,1)$, this bound gives us that $\frac{1}{1-\frac{y}{d}} \leq \frac{1}{1-0.69} \leq
3.226$. In a similar way, noticing that function $z\mapsto\frac{(z+1)(z-2)}{2z}$ is
increasing for $z>2$, using the fact that $y>8$ we can derive that
$\frac{(y+1)(y-2)}{2y}>\frac{(8+1)(8-2)}{2\cdot 8}=\frac{27}{8}\approx 3.375$. This
establishes the validity of \eqref{eq:helpernew1}.

Next, we deal with upper-bounding the values of $\Phi_d$ and proving
\eqref{eq:asymptotic_phi_2}, since we'll need this for establishing the asymptotics
of $\Phi_d$ in~\eqref{eq:phiasy}. Here we will make use of the following property,
which was shown in~\citet[Lemma~5.2,Theorem~3.4]{Aland2011}: for any real
$\gamma>0$,
\begin{equation}
\label{eq:boundsphiequiv}
\Phi_d<\gamma \frac{d}{\ln d}
\quad
\ifif
\quad
\left(1+\frac{\ln d}{\gamma d}\right)^d < \frac{\gamma d}{\ln d}.
\end{equation}
In a similar way to the proof of~\citet[Theorem~3.4]{Aland2011}, using a binomial expansion we can compute:
\begin{align*}
\left(1+\frac{\ln d}{\gamma d}\right)^d 
&= \sum_{k=0}^d\binom{d}{k}\left(\frac{\ln d}{\gamma d}\right)^k
= \sum_{k=0}^d\frac{d!}{d^k(d-k)!}\left(\frac{\ln d}{\gamma}\right)^k\frac{1}{k!}\\
&= \sum_{k=0}^d\left(1-\frac{1}{d}\right)\left(1-\frac{2}{d}\right)\dots\left(1-\frac{k-1}{d}\right)\left(\frac{\ln d}{\gamma}\right)^k\frac{1}{k!}\\
&< \sum_{k=0}^\infty \frac{\left(\frac{\ln d}{\gamma}\right)^k}{k!}
=e^{\frac{\ln d}{\gamma}}=d^{1/\gamma},
\end{align*}
where for the second to last equality we used the power series representation of the
exponential function: $e^z=\sum_{k=0}^\infty\frac{z^k}{k!}$. Thus, from
\eqref{eq:boundsphiequiv} we can derive that, for any $\gamma>0$,
\begin{equation}
 \label{eq:lemma_aland} d^{1/\gamma}\leq \frac{\gamma\cdot d}{\ln d} \quad\then\quad
\Phi_d<\gamma \frac{d}{\ln d},
\end{equation} 
Using $\gamma_d=\frac{\ln d}{\mathcal W(d)}$
as defined in the statement of our lemma, we compute:
$$
d^{1/\gamma_d}=d^{\frac{\mathcal W(d)}{\ln d}}=e^{\mathcal W(d)}
$$
and
$$
\gamma_d\frac{d}{\ln d}=\frac{\ln d}{\mathcal W(d)}\frac{d}{\ln d}=\frac{d}{\mathcal W(d)}.
$$
Thus, since from the definition of function $\mathcal W$ we know that
\begin{equation}
\label{eq:lambert}
\mathcal W(d)e^{\mathcal W(d)}=d,
\end{equation}
we deduce that $\gamma=\gamma_d$ indeed satisfies the left hand
side of
\eqref{eq:lemma_aland}, giving us the desired upper bound for $\Phi_d$.

For the asymptotic behaviour of $\gamma_d$ when $d$ grows large, observe that by
taking logarithms in \eqref{eq:lambert} we get $$\mathcal W(d) + \ln \mathcal W(d) =
\ln d$$ and so
$$
\lim_{d\to\infty} \gamma_d =\lim_{d\to\infty}\frac{\ln d}{\mathcal W(d)}
=\lim_{d\to\infty}\left[\frac{\ln \mathcal W(d)}{\mathcal W(d)}+1
\right]=1+\lim_{z\to\infty}\frac{\ln z}{z}=1,
$$
since it is easy to see that $\lim_{d\to\infty} \mathcal W(d)=\infty$.

Finally, let us now establish~\eqref{eq:phiasy}. Due to~\eqref{eq:asymptotic_phi_2}
that we have already proved, it is enough to just show a lower bound of
$\lim_{d\to\infty}\frac{\Phi_d}{d/\ln d} \geq 1$. We will do this by showing that
$\Phi_d\geq \frac{d}{\ln d}$ for sufficiently large values of $d$. Indeed, by
\eqref{eq:boundsphiequiv} this is equivalent to proving that
\begin{equation*}
\left(1+\frac{\ln d}{d}\right)^d \geq \frac{d}{\ln d},
\end{equation*}
which from \cref{lemma:newtech1} we know it holds for all $d\geq 9$.
\end{proof}

\subsection{Proof of \texorpdfstring{\cref{lemma:def_c}}{Lemma~3.2}}
\label{app:def_c_proof}

To decongest notation a bit in the proof, we will drop the $d$
subscripts from $c_d$ and $\beta_d$ whenever this is causing no confusion. Starting
with
\eqref{eq:c_def_lemma_1}, if we solve with respect to $\beta$ we get
\begin{equation}
\label{eq:c_def_lemma_helper2}
\Phi_d+\frac{1}{\beta}\geq \Phi_d^{1+\frac{2}{d}}
\quad
\ifif 
\quad
\beta \leq \frac{1}{\Phi_d(\Phi_d^{2/d}-1)}
=\frac{\Phi_d}{2\Phi_d+1},
\end{equation}
where for the last equality we used the fact that
$$
\Phi_d^{2/d}=\left(1+\frac{1}{\Phi_d}\right)^2=\frac{1}{\Phi_d^2}+\frac{2}{\Phi_d}+1
$$
which is a direct consequence of the definition of $\Phi_d$:
$$
\Phi_d^{d+1}=(\Phi_d+1)^{d}
\quad\ifif \quad
\Phi_d^{1+\frac{1}{d}}=\Phi_d+1
\quad\ifif\quad
\Phi_d^{1/d}=1+\frac{1}{\Phi_d}.
$$
Substituting \eqref{eq:b_def} into \eqref{eq:c_def_lemma_helper2},
\begin{align*}
1-\Phi_d^{-c}\leq \frac{\Phi_d}{2\Phi_d+1}
& \ifif
\Phi_d^{-c}\geq \frac{\Phi_d+1}{2\Phi_d+1}\\
& \ifif c\leq \frac{\ln(2\Phi_d+1)-\ln(\Phi_d+1)}{\ln \Phi_d},
\end{align*}
which holds by the very definition of $c$ in \eqref{eq:c_def} if we relax the floor
operator.

Let us now move to \eqref{eq:c_def_lemma_2}, and in particular  lower-bound the
values of parameter $\beta$, as $d$ grows large. Due to the floor operator in
\eqref{eq:c_def}, parameter $c$ can be lower bounded by
\begin{equation*} c \geq
\frac{\ln(2\Phi_d+1)-\ln(\Phi_d+1)}{\ln \Phi_d}
-\frac{1}{d}=\log_{\Phi_d}\frac{2\Phi_d+1}{\Phi_d+1}
-\frac{1}{d}
\end{equation*} and since $\beta$ is increasing with respect to $c$,
\begin{equation}
\label{eq:def_c_helper_1}
\beta = 1-\Phi_d^{-c}
\geq 1- \frac{\Phi_d+1}{2\Phi_d+1}\Phi_d^{1/d}= 1- \frac{\Phi_d+1}{2\Phi_d+1}\left(1+\frac{1}{\Phi_d}\right).
\end{equation}
Taking limits and recalling that $\lim_{d\to\infty} \Phi_d=\infty$, we get the
desired lower bound of
$$
\lim_{d\to\infty} \beta\geq 1-\frac{1}{2}\cdot (1+0)=\frac{1}{2}.
$$

The upper bound of $\beta\leq \frac{1}{2}$ can be easily derived by \eqref{eq:c_def_lemma_helper2}:
$$
\beta \leq \frac{\Phi_d}{2\Phi_d+1} \leq \frac{\Phi_d}{2\Phi_d}=\frac{1}{2}.
$$

For small values of $d$, and in particular in order to prove that $\beta\geq0.38$,
one can numerically compute the values for $\beta$ directly from \eqref{eq:c_def}.
For example, for $d=9,\dots,100$, these values are shown in \cref{fig:plot_b}. The
lower and upper red lines in \cref{fig:plot_b} correspond to the relaxation of the
floor operator we used in the lower and upper bounds for $\beta$ in
\eqref{eq:def_c_helper_1} and
\eqref{eq:c_def_lemma_helper2}, respectively. The actual values of $\beta$ lie
between these two lines. Using these values and the resulting monotonicity for
$\beta$, one can also prove the lower bound of $3$ for $dc$, by observing that by
setting $d=9$ in \eqref{eq:c_def} we have that for any $d\geq 9$
$$
d\cdot c \geq \left\lfloor
9\frac{\ln(2\cdot\Phi_9+1)-\ln(\Phi_9+1)}{\ln(\Phi_9)}
\right\rfloor\approx\lfloor 3.368\rfloor=3,
$$
since $\Phi_9\approx 5.064$.
\subsection{Proof of \texorpdfstring{\cref{claim:best_s_i^*}}{Claim~3.3}}
\label{app:PoS_lower_6_proof}

Since we have fixed that $s_j=\tilde s_j$ for all
players $j< i$ and also the strategies of players $j>i+\mu$ have no effect on the
cost of player $i$ (and in particular in \eqref{prop:dominating_prof_3}), it is safe
if we briefly abuse notation and for now assume that $\vecc
s_{-i}=(s_{i+1},\dots,s_{i+\mu})$.

We generate the desired dominating profile $\vecc s'_{-i}$ inductively, by running
Procedure~$\algoname{Dominate}(\vecc s_{-i},i)$ described formally below, scanning and
modifying profile $\vecc s_{-i}$ from right to left.
\begin{algorithm2e}
\NoCaptionOfAlgo
\KwIn{Profile $\vecc s_{-i}=(s_{i+1},\dots,s_{i+\mu})$; Player
$i\in\sset{\mu+1,\dots,n}$}
\KwOut{Profile $\vecc s'_{-i}=(s_{i+1}',\dots,s_{i+\mu}')$ of the form described in
\cref{prop:dominating_prof_1,prop:dominating_prof_2,prop:dominating_prof_3} of
\cref{claim:best_s_i^*}, that satisfies~\cref{eq:helper_PoS_lower_6}}
\ShowLn $\vecc s_{-i}' \gets \vecc s_{-i}$\;
\ShowLn $s_{i+\mu}' \gets s_{i+\mu}^*$\; \label{proc_line:dominate_last_player}
\ShowLn $k \gets i+\mu -1$\;
\ShowLn \While{exists $j\in\sset{i+1,\dots,k-1}$ such that $s'_j= s_j^*$
\label{proc_line:dominate_while_1}}{
\ShowLn $s'_k \gets \tilde s_k$\; \label{proc_line:dominate_while_2}
\ShowLn $k \gets k-1$\; \label{proc_line:dominate_while_3}}
\caption{\textbf{Procedure} $\algoname{Dominate}(\vecc s_{-i},i)$}
\label{proc:dominate}
\end{algorithm2e}

First, it is not difficult to see that the output profile $\vecc s'_{-i}$ of
$\algoname{Dominate}(\vecc s_{-i},i)$ indeed has the desired format described in
\cref{prop:dominating_prof_1,prop:dominating_prof_2,prop:dominating_prof_3} of~\cref{claim:best_s_i^*}. In particular, after any execution of the
while-loop in lines
\ref{proc_line:dominate_while_1}--\ref{proc_line:dominate_while_3} of
Procedure~\algoname{Dominate}, $s_j'=\tilde s_j$ for any $j=k+1,\dots,i+\mu-1$.
Furthermore, it is also easy to see that switching player's $i+\mu$ strategy to
$s'_{i+\mu}= s^*_{i+\mu}$ can only increase player's $i$ cost,
i.e.~\eqref{eq:helper_PoS_lower_6} is satisfied after line
\ref{proc_line:dominate_last_player} of \algoname{Dominate}: if player $i+\mu$
chooses $\tilde s_{i+\mu}$ instead, she contributes nothing to the cost of player
$i$, since she does not put her weight in any of the facilities $i+1,\dots, i+\mu$
played by player $i$.

So, it remains to be shown that after every iteration of the while-loop, condition
\eqref{eq:helper_PoS_lower_6} is maintained. Since in any such loop only the
strategy of player $k$ is possibly switched from $s_k^*$ to $\tilde s_k$, it is
enough if we show that
$
C_i(\tilde s_k,\vecc s'_{-k}) \geq C_i(s^*_k,\vecc s'_{-k})
$
or, since for any facility $j<k$ it holds that $x_j(\tilde s_k,\vecc
s'_{-k})=x_j(s^*_k,\vecc s'_{-k})$, equivalently
$$
\sum_{j=k}^{i+\mu}c_j(x_j(\tilde s_k,\vecc s'_{-k}))
\geq
\sum_{j=k}^{i+\mu}c_j(x_j(s^*_k,\vecc s'_{-k})).
$$
If we let $z_j$, for any $j\geq k$, denote the load on facility $j$ induced by every
player \emph{except} from player $k$, that is formally
$$
z_j = \sum \sset{w_\ell\fwh{\ell\in\ssets{j-\mu,\dots,j}\setminus\ssets{k}
\;\;\land\;\; j\in s_\ell'}},
$$
the above can be written as
$$
\sum_{j=k+1}^{i+\mu} \left[ c_{j}(z_j+w_k)-c_j(z_j) \right]
\geq c_k(z_k+w_k)-c_k(z_k).
$$
Thus, it is sufficient only take $j=i+\mu$ in the above sum and just prove that
$$c_{i+\mu}(z_{i+\mu}+w_k)-c_{i+\mu}(z_{i+\mu}) \geq c_k(z_k+w_k)-c_k(z_k),$$ which
is equivalent to
\begin{equation}
\label{eq:helper_PoS_lower_3}
w^{-(i+\mu-k)(d+1)}\frac{(z_{i+\mu}+w_k)^d-z_{i+\mu}^d}{(z_k+w_k)^d-z_k^d}\geq 1.
\end{equation}

If we define
$$
A=\sset{j\in\ssets{i+1,\dots,k-1}\fwh{s_j=s_j^*}},
$$
that is, $A$ is the set of players below $k$ (and above $i$) that do \emph{not}
contribute with their weight to the cost of players $k$ and $i+\mu$, we have that
$$
z_k=\sum_{\underset{j\notin A}{j=k-\mu}}^{k-1}w_j=\alpha w_k -\sum_{j\in A} w_j
$$
and
$$
z_{i+\mu}=\sum_{\underset{j\notin A\union\ssets{k}}{j=i}}^{i+\mu}w_j=(\alpha+1)
w_{i+\mu} -w_k -\sum_{j\in A} w_j,
$$
because by our inductive process we know that $s_j'=\tilde s_j$ for all
$j=k+1,\dots,i+\mu-1$. Thus
$$
z_{i+\mu}=z_k+(\alpha+1)(w_{i+\mu}-w_k).
$$
Now we can rewrite the left hand side of \eqref{eq:helper_PoS_lower_3} as
$$
w^{-(i+\mu-k)(d+1)}\frac{\left[z_k+(\alpha+1)w_{i+\mu}-\alpha
w_k\right]^d-\left[z_k+(\alpha+1)w_{i+\mu}-(\alpha+1)w_k\right]^d}{(z_k+w_k)^d-z_k^d},
$$
and, if we additionally define for simplicity
$$
\zeta{:=} w^{\lambda},
\qquad\text{where}\qquad
\lambda{:=} \mu-k+i\in\ssets{1,\dots,\mu-2}
$$
and
$$
y_k{:=} \frac{z_k}{w_k},
$$
\eqref{eq:helper_PoS_lower_3} can be written as
$$
\zeta^{-(d+1)}\frac{\left[y_k+(\alpha+1)\zeta-\alpha
\right]^d-\left[y_k+(\alpha+1)\zeta-\alpha-1\right]^d}{(y_k+1)^d-y_k^d}\geq 1,
$$
or more simply,
$$
\zeta^{-(d+1)}f(x,y)\geq 1
$$
if we use function $f$ from \cref{lemma:ratio_monotone} with values
$$
x= (\alpha+1)w^{\lambda}-\alpha-1=(\alpha+1)(\zeta-1)>0
\qquad\text{and}\qquad y=y_k>0.
$$
Deploying the monotonicity of $f$ from \cref{lemma:ratio_monotone} and using that
$$
y=\frac{z_k}{w_k}=\alpha-\frac{1}{w_k}\sum_{j\in A} w_j\leq
\alpha-\frac{1}{w_k}w_{i+1}=\alpha-w^{i+1-k}=\alpha - w^{\lambda-\mu+1}\leq
\alpha-(1-\beta)\zeta,
$$
where the first inequality holds due to the fact that from the while-loop test in
line \ref{proc_line:dominate_while_1} of Procedure~\algoname{Dominate} we know that
$A\neq\emptyset$ and the last one because of $1-\beta=w^{-\mu}$, we finally get that it is enough if we show that
$$
\zeta^{d+1}\leq f\left((\alpha+1)(\zeta-1),\alpha-(1-\beta)\zeta\right)\qquad\text{for
all}\;\; \zeta\in\left[1,\frac{1}{1-\beta}\right],
$$
since $\zeta\geq w^1\geq 1$ and $\zeta\leq w^{\mu-2}\leq w^{\mu}=(1-\beta)^{-1}$.
But this is is satisfied, due to \eqref{eq:c_def_lemma_3} of \cref{lemma:ratio_monotone}, since we have selected our parameters $c=c_d$ and $\beta=\beta_d$ as in
\cref{lemma:def_c}.

\subsection{Proof of \texorpdfstring{\cref{eq:helper_PoS_lower_8}}{Equation (3.8)}}
\label{app:PoS_lower_8_proof}

In any such profile, player $i+\mu$ plays $
s_{i+\mu}^*$ and
\begin{itemize}
  \item Either all other players $j=i+1,\dots,i+\mu-1$ play $\tilde s_{j}$, in
  which case
  \begin{align*}
  C_i(\tilde s_i,\vecc s_{-i}') &= c_{i+\mu}\left(\sum_{\ell=i}^{i+\mu}
    w_\ell \right) + \sum_{j=i+1}^{i+\mu-1}c_j\left(\sum_{\ell=j-\mu}^{j-1}
    w_\ell\right)\\ &= c_{i+\mu}\left((\alpha+1)w^{i+\mu}\right) +
    \sum_{j=i+1}^{i+\mu-1}c_j\left(\alpha w^j\right)\\ &=
    w^{-(i+\mu)(d+1)}(\alpha+1)^dw^{d(i+\mu)} +
    \sum_{j=i+1}^{i+\mu-1}w^{-j(d+1)}\alpha^dw^{dj}\\
    &=w^{-(i+\mu)}(\alpha+1)^d+\sum_{j=i+1}^{i+\mu-1}w^{-j}\alpha^d\\
    &=w^{-i}\left[(\alpha+1)^dw^{-\mu}+\alpha^d\sum_{j=1}^{\mu-1}w^{-j}\right]\\
    &=w^{-i}\left[(\alpha+1)^dw^{-\mu}+\alpha^d(\alpha-w^{-\mu})\right],
  \end{align*}
  the last equality holding due to the definition of $\alpha$.
  \item Or there exists a \emph{single} player
  $k\in\ssets{i+1,\dots,i+\mu-1}$ that plays $s_k^*$ (instead of $\tilde s_k$
  which corresponds exactly to the previous case), in which case
  \begin{align*}
  C_i(\tilde s_i,\vecc s_{-i}') &\leq c_{k}\left(\sum_{\ell=k-\mu}^{k} w_\ell
    \right)+ c_{i+\mu}\left(\sum_{\ell=i}^{i+\mu} w_\ell -w_k \right) +
    \sum_{\underset{j\neq
    k,i+\mu}{j=i+1}}^{i+\mu}c_j\left(\sum_{\ell=j-\mu}^{j-1} w_\ell\right)\\ &=
    c_k\left((\alpha+1)w^{k} \right) +
    c_{i+\mu}\left((\alpha+1)w^{i+\mu}-w^k\right) + \sum_{\underset{j\neq
    k}{j=i+1}}^{i+\mu-1}c_j\left(\alpha w^j\right)\\
    &= w^{-i} \left[(\alpha+1)^dw^{-(k-i)}+(\alpha+1-w^{k-i-\mu})^dw^{-\mu}\right.\\
    &\qquad\qquad\qquad\qquad\qquad\qquad\left.+\alpha^d(\alpha-w^{-\mu}-w^{-(k-i)})\right],
  \end{align*} which is decreasing with respect to $k$, so taking the smallest
  possible value $k=i+1$ we have that
  \begin{multline*}
  C_i(\tilde s_i,\vecc s_{-i}') \leq \\
  w^{-i}\left[(\alpha+1)^dw^{-1}+(\alpha+1-w^{1-\mu})^dw^{-\mu}+\alpha^d(\alpha-w^{-\mu}-w^{-1})\right].
  \end{multline*}
\end{itemize} Considering both the above possible scenarios, in order to prove
\eqref{eq:helper_PoS_lower_8} it is thus sufficient to make sure that
\begin{equation}
\label{eq:helper_PoS_lower_10}
\alpha^d(\alpha-w^{-\mu})< (\alpha+1)^d(1-w^{-\mu})
\end{equation} and
\begin{equation}
\label{eq:helper_PoS_lower_11}
(\alpha+1-w^{1-\mu})^dw^{-\mu}+\alpha^d(\alpha-w^{-\mu}-w^{-1})<
(\alpha+1)^d(1-w^{-1}).
\end{equation} For \eqref{eq:helper_PoS_lower_10}, its left-hand side can be written
as
$$
(\beta\Phi_d)^d\left(\beta\Phi_d-(1-\beta)\right) <
(\beta\Phi_d)^d\left(\beta\Phi_d\right) =\beta^{d+1}\Phi_d^{d+1}
=\beta^{d+1}(\Phi_d+1)^{d},
$$
the first inequality holding because $\beta<1$, while the right-hand side is
\begin{equation*}
(\beta\Phi_d+1)^d(1-(1-\beta))=\beta^{d+1}\left(\Phi_d+\frac{1}{\beta}\right)^d.
\end{equation*} Thus it is enough to prove that
$$
\left(\Phi_d+1\right)^{d}\leq\left(\Phi_d+\frac{1}{\beta}\right)^d,
$$
which holds since $\beta\in (0,1)$.

For \eqref{eq:helper_PoS_lower_11}, the left-hand side is written as
\begin{align*}
&\left(\beta\Phi_d+1-\left(1+\frac{1}{\Phi_d}\right)(1-\beta)\right)^d(1-\beta)\\
&\qquad\qquad\qquad\qquad\qquad+ (\beta\Phi_d)^d\left(\beta\Phi_d-(1-\beta)-\left(1+\frac{1}{\Phi_d}\right)^{-1}\right)\\
  =& \left(\beta\Phi_d+\beta-\frac{1-\beta}{\Phi_d}\right)^d(1-\beta)
  +\beta^d\Phi_d^d\left(\beta\Phi_d-(1-\beta)-\frac{\Phi_d}{\Phi_d+1}\right)\\ <&
  \left(\beta\Phi_d+\beta\right)^d(1-\beta)
  +\beta^d\Phi_d^d\left(\beta\Phi_d-\frac{\Phi_d}{\Phi_d+1}\right)\\ =&
  \beta^d\left(\Phi_d+1\right)^d(1-\beta)
  +\beta^d\Phi_d^d\left(\beta\Phi_d-\frac{\Phi_d}{\Phi_d+1}\right)\\
  =&\beta^d(\Phi_d)^{d+1}(1-\beta)
  +\beta^d\Phi_d^d\left(\beta\Phi_d-\frac{\Phi_d}{\Phi_d+1}\right)\\
  =&\beta^d\Phi_d^d\left(\Phi_d-\beta\Phi_d+\beta\Phi_d-\frac{\Phi_d}{\Phi_d+1}\right)\\
  =&\beta^d\frac{\Phi_d^{d+2}}{\Phi_d+1}
\end{align*} and the right-hand side
$$
(\beta\Phi_d+1)^d\left(1-\left(1+\frac{1}{\Phi_d}\right)^{-1}\right)
=\beta^d\left(\Phi_d+\frac{1}{\beta} \right)^d\frac{1}{\Phi_d+1}.
$$
Thus it suffices to prove that
$$
\Phi_d^{d+2}\leq \left(\Phi_d+\frac{1}{\beta} \right)^d,
$$
which holds due to \eqref{eq:c_def_lemma_1} since we have already selected parameter
$c=c_d$ as in \eqref{eq:c_def}.

\section{Upper Bound Proofs}

\subsection{Technical Lemmas}

\begin{lemma}
\label{lemma:helper_1}
For any positive integer $m$ and real $x>0$,
$$
\left(1+\frac{1}{x}\right)^m \geq 1+\frac{m+1}{2x}
$$
\end{lemma}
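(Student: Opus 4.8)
The plan is to derive the bound straight from the binomial theorem, exploiting that $x>0$ makes every term of the expansion nonnegative, so that truncating the series only loses mass in the favourable direction. Concretely, I would first write
$$
\left(1+\frac{1}{x}\right)^{m}=\sum_{k=0}^{m}\binom{m}{k}\frac{1}{x^{k}}\geq 1+\frac{m}{x},
$$
where the inequality comes from discarding the summands with $k\geq 2$, each of which is nonnegative since $x>0$. It then remains to verify that $1+\frac{m}{x}\geq 1+\frac{m+1}{2x}$; clearing the positive factor $2x$, this is equivalent to $2m\geq m+1$, i.e.\ to $m\geq 1$, which holds because $m$ is a positive integer. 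Chaining the two inequalities yields the claim.

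I do not expect any genuine difficulty here; the one point deserving a moment's attention is that the inequality is \emph{tight} at $m=1$ (indeed $\left(1+\frac1x\right)^{1}=1+\frac1x$), so one must be careful not to ``over-bound''. The binomial route handles this automatically, since the discarded tail $\sum_{k\geq 2}\binom{m}{k}x^{-k}$ is exactly $0$ precisely when $m=1$, and simultaneously $2m=m+1$ there, so no slack is wasted in the boundary case. If one prefers to avoid the binomial theorem, a short induction on $m$ also works: the base case $m=1$ is an equality, and for the inductive step one multiplies the hypothesis $\left(1+\frac1x\right)^{m}\geq 1+\frac{m+1}{2x}$ by $1+\frac1x>1$ and checks that $\bigl(1+\frac{m+1}{2x}\bigr)\bigl(1+\frac1x\bigr)=1+\frac{m+2}{2x}+\frac{1}{2x}+\frac{m+1}{2x^{2}}\geq 1+\frac{m+2}{2x}$, which is immediate since the last two terms are nonnegative. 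Either way, the remaining computation is entirely routine, so the ``hard part'' is essentially nonexistent once the truncation idea is in place.
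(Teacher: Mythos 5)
Your binomial-truncation argument is exactly the paper's proof: expand $(1+1/x)^m$, drop the nonnegative tail with $k\geq 2$ to get $1+m/x$, and then check $m/x\geq (m+1)/(2x)$ reduces to $m\geq 1$. The inductive alternative you sketch is also valid, but the primary route matches the paper verbatim.
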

\begin{proof}
Expanding the power in the left hand side we get
$$
\left(1+\frac{1}{x}\right)^m
= \sum_{j=0}^m\binom{m}{j}\frac{1}{x^j}
\geq \sum_{j=0}^1\binom{m}{j}\frac{1}{x^j}
=1+\frac{m}{x}\geq 1+ \frac{m+1}{2x},
$$
since
$$
\frac{m}{x}\geq \frac{m+1}{2x}
\ifif m \geq \frac{m+1}{2}
\ifif
m \geq 1.
$$
\end{proof}
\begin{lemma}
\label{lemma:helper_2}
For any integer $m\geq 0$ and real $x\geq 0$,
$$
(x+1)^{m+1}-x^{m+1} \leq \frac{m+1}{2}\left[(x+1)^m+x^m\right].
$$
\end{lemma}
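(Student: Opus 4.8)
The plan is to recognise the claimed inequality as the elementary fact that the trapezoidal rule overestimates the integral of a convex function. The first step is to rewrite the left-hand side in integral form,
\[
(x+1)^{m+1}-x^{m+1}=(m+1)\int_x^{x+1}t^m\,dt,
\]
so that, after dividing through by $m+1$, the statement becomes equivalent to
\[
\int_x^{x+1}t^m\,dt\leq \frac{1}{2}\left[x^m+(x+1)^m\right].
\]

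The second step is to invoke convexity of the map $t\mapsto t^m$ on $[0,\infty)$. For $m\geq 1$ this is immediate (the second derivative $m(m-1)t^{m-2}$ is nonnegative there), while for $m=0$ both sides of the original inequality equal $1$ and there is nothing to prove. Given convexity, on the interval $[x,x+1]$ the graph of $t\mapsto t^m$ lies below the chord joining the endpoints $(x,x^m)$ and $(x+1,(x+1)^m)$; integrating this pointwise inequality over $[x,x+1]$ yields exactly $\int_x^{x+1}t^m\,dt\leq \frac12\left[x^m+(x+1)^m\right]$, because the integral of the linear chord over $[x,x+1]$ equals the area $\frac12\left[x^m+(x+1)^m\right]$ of the corresponding trapezoid. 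This finishes the argument.

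An alternative, entirely algebraic route avoids calculus: expand both sides by the binomial theorem and compare coefficients of $x^j$. The coefficients of $x^m$ agree (each equals $m+1$), and for every $0\leq j\leq m-1$ it suffices to verify the coefficient-wise bound $\binom{m+1}{j}\leq \frac{m+1}{2}\binom{m}{j}$; this follows from the identity $\binom{m+1}{j}=\frac{m+1}{m+1-j}\binom{m}{j}$ together with $m+1-j\geq 2$.

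I do not expect a genuine obstacle here: the only points needing a word of care are the trivial separate treatment of $m=0$ (an equality) and the observation that, since $x\geq 0$, the whole interval $[x,x+1]$ lies inside $[0,\infty)$, where $t\mapsto t^m$ is convex, so no further positivity assumptions are required.
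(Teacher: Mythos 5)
Your primary argument is correct and takes a genuinely different, more conceptual route than the paper. The paper's proof is exactly your ``alternative, entirely algebraic route'': it expands both sides by the binomial theorem, cancels the top-degree terms, and bounds each remaining coefficient via the identity $\binom{m+1}{j}=\frac{m+1}{m+1-j}\binom{m}{j}\leq\frac{m+1}{2}\binom{m}{j}$ for $0\leq j\leq m-1$. Your main proof instead packages the inequality as the trapezoidal-rule overestimate for the convex function $t\mapsto t^m$ on $[x,x+1]$, after writing $(x+1)^{m+1}-x^{m+1}=(m+1)\int_x^{x+1}t^m\,dt$. That version is shorter, immediately transparent (it makes clear that the factor $\frac{m+1}{2}$ is tight in the sense that the inequality degenerates to equality at $m=0$ and $m=1$), and generalizes effortlessly to other convex latency families, which is in the spirit of the paper's own remarks on going beyond polynomials. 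The algebraic route, by contrast, is self-contained and elementary, requiring no calculus, which is probably why the authors chose it. Both are sound; you correctly handle the $m=0$ edge case and the domain issue ($x\geq 0$ keeps $[x,x+1]\subset[0,\infty)$ where convexity holds).
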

\begin{proof}
Expanding the powers, our inequality can be rewritten equivalently as:
$$
\sum_{j=0}^{m+1}\binom{m+1}{j} x^j -x^{m+1}
\leq
\frac{m+1}{2}\left[\sum_{j=0}^m\binom{m}{j}x^j+x^m\right]
$$
$$
\sum_{j=0}^{m-1}\binom{m+1}{j} x^j+ (m+1)x^m
\leq
\frac{m+1}{2}\left[\sum_{j=0}^{m-1}\binom{m}{j}x^j+2x^m\right].
$$
$$
\sum_{j=0}^{m-1}\binom{m+1}{j} x^j
\leq
\frac{m+1}{2}\sum_{j=0}^{m-1}\binom{m}{j}x^j.
$$
Now, we can see that the above holds by bounding each term; for integers $j=0,1,\dots,m-1$:
\begin{align*}
 \binom{m+1}{j}
 &=
 \frac{(m+1)!}{(m+1-j)! j!}
 = \frac{m+1}{m+1-j}\frac{m!}{(m-j)! j!}\\
 &= \frac{m+1}{m+1-j}\binom{m}{j}
 \leq \frac{m+1}{2}\binom{m}{j}.
\end{align*}
\end{proof}

\subsection{Proof of \texorpdfstring{\cref{lemma:potential_mono_seq_monotonicity}}{Lemma~4.2}}
\label{app:potential_mono_seq_monotonicity_proof}

Observe that from the definition of $A_m$ in \eqref{eq:A_def},
$$
(A_m(x))^{-1}=\frac{1}{m+1}+\frac{1}{2x}
$$
which is decreasing with respect to $m$, and
$$
\left(\frac{A_m(x)}{m+1}\right)^{-1}
=1+\frac{m+1}{2x},
$$
which is increasing with respect to $m$. For the remaining sequence, observe that for any integer $m\geq 0$ and reals $y\geq x\geq 1$,
$$
\frac{A_m(x)}{A_m(y)}\geq \frac{A_{m+1}(x)}{A_{m+1}(y)}
\ifif
\frac{A_{m+1}(y)}{A_m(y)}\geq \frac{A_{m+1}(x)}{A_m(x)},
$$
so it is enough to show that function $\frac{A_{m+1}(x)}{A_m(x)}$ is monotonically increasing with respect to $x\geq 0$. Indeed,
\begin{align*}
\frac{A_{m+1}(x)}{A_m(x)}
&=\frac{\frac{1}{m+1}+\frac{1}{2x}}{\frac{1}{m+2}+\frac{1}{2x}}
=\frac{2x(m+2)+(m+1)(m+2)}{{2x(m+1)+(m+1)(m+2)}}\\
&=1+\frac{1}{m+1}\left(1+\frac{m+2}{2x}\right)^{-1}.
\end{align*}

\subsection{Proof of \texorpdfstring{\cref{lemma:potential_mono_bounds}}{Lemma~4.3}}
\label{app:potential_mono_bounds_proof}

First for \eqref{eq:potential_mono_bounds_2}, notice that it can be rewritten equivalently as
$$
\frac{1}{m+1}\leq\frac{S_m(\gamma x)}{(\gamma x)^{m+1}}=\frac{1}{A_m(\gamma x)} \leq \frac{1}{A_m(\gamma)},
$$
which holds, as an immediate consequence of the monotonicity of function $A_m$ (see \eqref{eq:A_def_boundaries}), given that $\gamma x\geq \gamma\geq 1$.
For \eqref{eq:potential_mono_bounds_1}, it is enough to prove just the special case when $w=1$, i.e.,
\begin{equation}
\label{eq:potential_mono_bounds_1_unweighted}
\frac{\gamma^{m+1}}{A_m(\gamma)}=S_m(\gamma)
\leq \frac{S_m(\gamma x+\gamma)-S_m(\gamma x)}{(x+1)^m}
\leq \gamma^{m+1},
\end{equation}
since then it is not difficult to check that we can recover the more general case in
\eqref{eq:potential_mono_bounds_1} by simply substituting $\gamma := \gamma w$ and
$x := \frac{x}{w}$ in \eqref{eq:potential_mono_bounds_1_unweighted}.

It is not difficult to check that \eqref{eq:potential_mono_bounds_1_unweighted}
holds for $m=0$, recalling that $S_0(x)=x$ for all $x\geq 0$. Next, assume for the
remainder of the proof that $m\geq 1$.

For the left-hand inequality of \eqref{eq:potential_mono_bounds_1_unweighted} first,
it can be equivalently rewritten as:
\begin{multline*}
(x+1)^m\left(\frac{1}{m+1}\gamma^{m+1}+\frac{1}{2}\gamma^m\right)
\leq \\
\frac{1}{m+1}\gamma^{m+1}\left[(x+1)^{m+1}-x^{m+1}\right]
\frac{1}{2}\gamma^m\left[(x+1)^m-x^m\right]
\end{multline*}
$$
\frac{1}{2}x^m \leq \frac{\gamma}{m+1}\left[(x+1)^{m+1}-(x+1)^m-x^{m+1} \right],
$$
and since $\gamma \geq 1$, it is sufficient to show that
$$
(m+1)x^m \leq 2\left[(x+1)^{m+1}-(x+1)^m-x^{m+1} \right]
$$
and thus, enough to show that
$$
(m+1)x^m \leq 2x\left[(x+1)^{m}-x^m \right].
$$
Now observe that the above trivially holds if $x=0$, while for $x>0$ it can be equivalently written as
$$
\frac{m+1}{2x} \leq \left(1+\frac{1}{x}\right)^m-1,
$$
which holds due to \cref{lemma:helper_1}.

For the right-hand inequality of \eqref{eq:potential_mono_bounds_1_unweighted}, it
can be equivalently written as:
$$
\frac{1}{m+1}\gamma\left[(x+1)^{m+1}-x^{m+1}\right]+\frac{1}{2}\left[(x+1)^m-x^m\right] \leq \gamma (x+1)^m
$$
$$
2\gamma \left[(x+1)^{m+1}-x^{m+1}\right] \leq (m+1)\left[(2\gamma-1)(x+1)^m+x^m\right]
$$
$$
(x+1)^{m+1}-x^{m+1} \leq (m+1)\left[\left(1-\frac{1}{2\gamma}\right)(x+1)^m+\frac{1}{2\gamma} x^m\right].
$$
Since $\gamma \geq 1$, we know that $\frac{1}{2\gamma}\in[0,\frac{1}{2}]$. Thus,
taking into consideration that $(x+1)^m>x^m\geq 0$, the linear combination on the
right-hand side of the above inequality is minimized for
$\frac{1}{2\gamma}=\frac{1}{2}$. So, it is enough to show that
$$
(x+1)^{m+1}-x^{m+1} \leq \frac{m+1}{2}\left[(x+1)^m+x^m\right],
$$
which holds due to \cref{lemma:helper_2}.

\subsection{Equivalence of \texorpdfstring{\cref{th:PoS_upper_general}}{Theorem~4.4} and \texorpdfstring{\cref{claim:PoS_upper_general}}{Claim~4.7}}
\label{app:restatement_main_upper}

To verify that \cref{claim:PoS_upper_general} gives indeed an equivalent restatement
of \cref{th:PoS_upper_general}, fix an arbitrary $W\geq 1$ and observe the
equivalence
\begin{equation*}
\alpha= A_d(\gamma W)=\frac{2(d+1)\gamma W}{2\gamma W+d+1}
\quad\ifif\quad
\gamma= \frac{1}{2W}\frac{\alpha(d+1)}{d+1-\alpha},
\end{equation*}
by using the definition of function $A_d$ from \eqref{eq:A_def}.
Therefore, it is not difficult to also compute that
\begin{align*}
\frac{d+1}{A_d(\gamma)}
&=(d+1)\left(\frac{1}{d+1}+\frac{1}{2\gamma}\right)=1+\frac{d+1}{2}\frac{1}{\gamma}\\
&= 1+\frac{d+1}{2}\cdot 2W \frac{d+1-\alpha}{\alpha(d+1)}
=1+W\left(\frac{d+1}{\alpha}-1\right).
\end{align*}

\section{Beyond Polynomial Latencies: Euler-Maclaurin}
\label{sec:EulerMaclaurin}
Our definition of the approximate potential function in
\cref{sec:faulhaber,sec:full_faulhaber_note} was based in Faulhaber's formula
\eqref{eq:faulhaber_formula_expand} for the sum of powers of positive integers. This
approach can be generalized further, by considering the Euler-Maclaurin summation
formula\footnote{See, e.g., \citep[Section 9.5]{Graham1989a} and \citep{Lehmer1940}.} :
\begin{equation}
\label{eq:eulermaclaurin}
\sum_{j=0}^{n} f(j)=\int_0^{n}f(t)\,dt+\frac{1}{2}[f(n)+f(0)]+\sum_{j=2}^m\frac{B_{j}}{j!}[f^{(j-1)}(n)-f^{(j-1)}(0)]+\mathcal{R}_m,
\end{equation}
for any infinitely differentiable function $f:[0,\infty)\map (0,\infty)$ (with
$f^{(j)}$ denoting the $j$-th order derivative of $f$) and integers $n,m\geq 1$,
where $B_j$ denotes the Bernoulli numbers we have already used in
\cref{sec:faulhaber} and the \emph{error-term} $\mathcal R_{m}$ can be bounded by
\begin{equation}
\label{eq:eulermaclaurin_error_bound}
\card{\mathcal{R}_m}\leq \frac{2\zeta(m)}{(2\pi)^m}\int_0^{n}\card{f^{(m)}(t)}\,dt,
\end{equation}
where $\zeta(m)=\sum_{j=1}^\infty\frac{1}{j^m}$ is Riemann's zeta function. Thus, if
function $f$ is such that the quantity in the right-hand side of
\eqref{eq:eulermaclaurin_error_bound} eventually vanishes, i.e. for any real $x\geq
0$,
\begin{equation}
\label{eq:eulermaclaurin_error_vanish}
\lim_{m\to\infty} \frac{\zeta(m)}{(2\pi)^m}\int_0^{x}\card{f^{(m)}(t)}\,dt =0,
\end{equation}
then we can define our approximate-potential candidate function on any real $x\geq 0$ by generalizing \eqref{eq:eulermaclaurin}:
\begin{equation}
\label{eq:pot_eulermaclaurin}
S(x)=S_f(x) = \int_0^{x}f(t)\,dt+\frac{1}{2}[f(x)+f(0)]+\sum_{j=2}^\infty\frac{B_{j}}{j!}[f^{(j-1)}(x)-f^{(j-1)}(0)].
\end{equation}
For example, it is not difficult to see that, for any monomial $f(x)=x^d$ of degree
$d\geq 1$, condition \eqref{eq:eulermaclaurin_error_vanish} is indeed satisfied
(since $f^{(m)}=0$ for all $m\geq d+1$) and, because also $f^{(m)}(0)=0$ and
$f^{(m)}(x)=\frac{d!}{(d-m)!}x^{d-m}$, one recovers exactly
\eqref{eq:faulhaber_formula_expand} from \eqref{eq:pot_eulermaclaurin} above.

Let us now demonstrate this general approach for latency functions $f$ that are not
polynomials. For the remaining of this section let $f(x)=e^x$ be an exponential
delay function. Then, for any $y\geq 0$,
$$
\lim_{m\to\infty}\frac{\zeta(m)}{(2\pi)^m}\int_0^{y}\card{f^{(m)}(t)}\,dt =(e^y-1) \lim_{m\to\infty}\frac{\zeta(m)}{(2\pi)^m}=0,
$$
since $\lim_{m\to\infty}\zeta(m)=1$ and $\lim_{m\to\infty}(2\pi)^m=\infty$. Thus, condition \eqref{eq:eulermaclaurin_error_vanish} is satisfied, and we can define from \eqref{eq:pot_eulermaclaurin}
\begin{align*}
S(x)&= \int_0^{x}e^t\,dt+\frac{1}{2}[e^x+e^0]+\sum_{j=2}^\infty\frac{B_{j}}{j!}[e^x-e^0]\\
&= (e^x-1)-\frac{1}{2}(e^x-1)+\sum_{j=2}^\infty\frac{B_{j}}{j!}(e^x-1) +e^x\\
&= (e^x-1)\sum_{j=0}^\infty\frac{B_j}{j!}+e^x.
\end{align*}
But since for the integer value $x=1$ we know that
$
S(1)=\sum_{j=0}^1f(j)=1+e,
$
it must be that
$$
e+1=(e^x-1)\sum_{j=0}^\infty\frac{B_j}{j!}+e
\quad\ifif\quad
\sum_{j=0}^\infty\frac{B_j}{j!}=\frac{1}{e-1}.
$$
So, we
finally have that
$$
S(x)=(e^x-1)\frac{1}{e-1}+e^x=\frac{e^{x+1}-1}{e-1}.
$$
From this, for all reals $x\geq 0$, $w>0$ we compute:
\begin{equation}
\label{eq:expo_potential}
\frac{S(x+w)-S(x)}{w f(x+w)}=\frac{1}{e-1}\frac{e^{x+w+1}-e^{x+1}}{w e^{x+w}}=\frac{e}{e-1}\frac{1-e^{-w}}{w},
\end{equation}
which does not depend on $x$. Thus, from \eqref{eq:cond_approx_eq} in
\cref{lemma:approximate_PoS_ratios} we deduce that \emph{exact} pure Nash equilibria
always exist for weighted congestion games with exponential latencies. The function
$S$ we defined in \eqref{eq:expo_potential} essentially serves as a weighted
potential \citep{Monderer:1996sp}; its global minimum is a pure Nash equilibrium.
Notice here that these results regarding exponential latency functions were already
known by the work of \citet{Panagopoulou2007}.

\section{Social Optimum is a \texorpdfstring{($d+1$)}{(d+1)}-Approximate Equilibrium}
In this section we show that the socially optimum solution is itself an
$(d+1)$-appproximate equilibrium, where $d$ is the maximum degree of the polynomial
latency functions. We must mention here that, the approximation factor on its own,
i.e.\ $d+1$, does not constitute a novel contribution: existence of
$(d+1)$-approximate equilibria was already known by the work of \citet{Harks2012a}.
However, the new element in \cref{th:PoS_OPT} below is that this can be achieved by
all optimum solutions.

Related to this, we would like to emphasize that, if we only cared about showing the
existence of \emph{an} optimal solution that is a $(d+1)$-approximate equilibrium,
this would have been an immediate corollary of our main upper bound result: by
simply setting $\alpha=d+1$ in \cref{th:PoS_upper_general} we get exactly what we
want. However, the following theorem demonstrates the stronger statement that
\emph{all} social cost minimizers have the property we want.

\begin{theorem}
\label{th:PoS_OPT}
Consider any weighted congestion game with polynomial latency functions of maximum
degree $d$ and let $\vecc{s}^*$ be a strategy profile that minimizes social cost.
Then $\vecc{s}^*$ is a $(d+1)$--approximate pure Nash equilibrium. As an immediate
consequence, the Price of Stability of $(d+1)$--approximate Nash equilibria is $1$.
\end{theorem}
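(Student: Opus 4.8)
The plan is to use nothing about $\vecc s^*$ beyond the defining property that it globally minimizes the social cost $C(\cdot)$; in particular, for every player $i$ and every alternative strategy $s_i'\in S_i$ we have $C(s_i',\vecc s^*_{-i})\ge C(\vecc s^*)$. The whole proof is then a matter of turning this single inequality about the \emph{social} cost into the desired inequality $C_i(\vecc s^*)\le (d+1)C_i(s_i',\vecc s^*_{-i})$ about the \emph{individual} cost of $i$, by means of two elementary, facility-wise estimates that hold for polynomials with nonnegative coefficients.

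Concretely, writing $x_e=x_e(\vecc s^*)$, a unilateral deviation of $i$ from $s_i^*$ to $s_i'$ raises the load on each $e\in s_i'\setminus s_i^*$ by $w_i$, lowers it on each $e\in s_i^*\setminus s_i'$ by $w_i$, and leaves all other loads unchanged, so optimality reads
\begin{multline*}
0\le C(s_i',\vecc s^*_{-i})-C(\vecc s^*)\\
=\sum_{e\in s_i'\setminus s_i^*}\bigl[(x_e+w_i)c_e(x_e+w_i)-x_ec_e(x_e)\bigr]-\sum_{e\in s_i^*\setminus s_i'}\bigl[x_ec_e(x_e)-(x_e-w_i)c_e(x_e-w_i)\bigr].
\end{multline*}
For the ``leaving'' sum I would use that every $e\in s_i^*$ already carries player $i$'s own weight, hence $x_e\ge w_i$, and then, working monomial by monomial via the factorization $x^{j+1}-(x-w)^{j+1}=w\sum_{t=0}^{j}x^{t}(x-w)^{j-t}\ge w\,x^{j}$ (all terms nonnegative since $x\ge w\ge 0$), conclude $x_ec_e(x_e)-(x_e-w_i)c_e(x_e-w_i)\ge w_i\,c_e(x_e)$. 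For the ``joining'' sum I would analogously bound $(x+w)^{j+1}-x^{j+1}=w\sum_{t=0}^{j}(x+w)^{t}x^{j-t}\le(j+1)\,w\,(x+w)^{j}\le(d+1)\,w\,(x+w)^{j}$ to get $(x_e+w_i)c_e(x_e+w_i)-x_ec_e(x_e)\le(d+1)\,w_i\,c_e(x_e+w_i)$.

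Substituting both estimates into the displayed inequality and dividing through by $w_i>0$ gives $\sum_{e\in s_i^*\setminus s_i'}c_e(x_e)\le(d+1)\sum_{e\in s_i'\setminus s_i^*}c_e(x_e+w_i)$; adding $\sum_{e\in s_i^*\cap s_i'}c_e(x_e)$ to the left and the (at least as large) quantity $(d+1)\sum_{e\in s_i^*\cap s_i'}c_e(x_e)$ to the right, and observing that the load on a common facility is unchanged, yields exactly $C_i(\vecc s^*)\le(d+1)C_i(s_i',\vecc s^*_{-i})$. Since this holds for all $i$ and all $s_i'$, $\vecc s^*$ is a $(d+1)$-approximate pure Nash equilibrium; and because its social cost equals $\opt(G)$, the Price of Stability of $(d+1)$-approximate equilibria is at most $1$, hence (being trivially at least $1$) equal to $1$.

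The argument is short and I do not anticipate a genuine obstacle. The one point demanding care is the ``leaving'' estimate: one must invoke $x_e\ge w_i$ on player $i$'s own facilities so that $c_e(x_e-w_i)$ is evaluated at a nonnegative load and the telescoping sum is term-by-term nonnegative; it is also worth noting that the factor $d+1$ enters solely through the bound $j+1\le d+1$ in the ``joining'' estimate, so it is the maximum degree, and not any special feature of the chosen optimum $\vecc s^*$, that governs it — which is exactly why the statement holds for \emph{every} social-cost minimizer.
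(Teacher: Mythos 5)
Your proof is correct and follows essentially the same strategy as the paper's: derive the key inequality $wc(x+w)\le (x+w)c(x+w)-xc(x)\le (d+1)\,wc(x+w)$ facility-wise and feed it into the optimality condition $C(s_i',\vecc s^*_{-i})\ge C(\vecc s^*)$. The only cosmetic differences are that you index by the full load $x_e(\vecc s^*)$ while the paper uses the load of others (so your two bounds are the same inequality applied at shifted arguments), and that you prove the polynomial inequality by the telescoping factorization $a^{j+1}-b^{j+1}=(a-b)\sum_t a^tb^{j-t}$ whereas the paper compares binomial coefficients $\binom{j+1}{k}/\binom{j}{k}$.
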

\begin{proof}
Let $c$ be an arbitrary cost function of maximum degree $d$ with non-negative
coefficients, i.e., $c(x)=\sum_{j=0}^d a_j x^j$, with $a_j\ge 0$ for all $j$. We
will first show that for all $w>0$ and $x\ge 0$:
\begin{align}\label{eq:optimum}
w \cdot c(x+w) \leq (x+w) \cdot c(x+w) - x \cdot c(x) \leq (d+1) \cdot w \cdot c(x+w).
\end{align}
To this end, with $z=\frac{x}{w}$, we get
\begin{align*}
(x+w) \cdot c(x+w) - x \cdot c(x)
&= \sum_{j=0}^d a_j \cdot \left[ (x+w)^{j+1} - x^{j+1} \right]\\
&= \sum_{j=0}^d a_j \cdot w^{j+1}\left[ (1+z)^{j+1} - z^{j+1} \right]\\
&= \sum_{j=0}^d a_j \cdot w^{j+1}\left[ \sum_{k=0}^j \binom{j+1}{k} z^k \right],
\end{align*}
and
\begin{align*}
w \cdot c(x+w)
&= \sum_{j=0}^d a_j \cdot w(x+w)^{j}\\
&= \sum_{j=0}^d a_j \cdot w^{j+1} \left[\sum_{k=0}^j \binom{j}{k} z^k \right].
\end{align*}
Clearly, $\binom{j+1}{k}\geq\binom{j}{k}$ for all integer $j\in[0,d], k\in[0,j]$,
which immediately implies the first inequality in \eqref{eq:optimum}. To see the
second inequality, observe that
\begin{align*}
\frac{\binom{j+1}{k}}{\binom{j}{k}} = \frac{j+1}{j+1-k} \le j+1 \le d+1.
\end{align*}

Since $\vecc{s}^*$ minimizes social cost, for all players $i\in[n]$ and strategies
$s_i\in S_i$,
\begin{align*}
C(\vecc{s}^*) \leq C(s_i,\vecc{s}^*_{-i}).
\end{align*}

Denoting $y_e=\sum_{j\in[n]\setminus\{i\}: e\in s^*_j} w_j$,
from \eqref{eq:optimum}, we get
\begin{align*}
 0 &\leq C(s_i,\vecc{s}^*_{-i}) -C(\vecc{s}^*) \\
 &= \sum_{e\in s_i\setminus s_i^*} \left[ (y_e+w_i) c_e(y_e+w_i) - y_e c_e(y_e) \right]\\
  &\qquad\qquad\qquad\qquad\qquad\qquad\qquad- \sum_{e\in s^*_i\setminus s_i}  \left[  (y_e+w_i) c_e(y_e+w_i)- y_e c_e(y_e) \right]\\
&= \sum_{e\in s_i} \left[ (y_e+w_i) c_e(y_e+w_i) - y_e c_e(y_e) \right]
  - \sum_{e\in s^*_i}  \left[  (y_e+w_i) c_e(y_e+w_i)- y_e c_e(y_e) \right]\\
 & \leq  (d+1) \sum_{e\in s_i} w_i  \cdot c_e(y_e+w_i) -  \sum_{e\in s^*_i} w_i  \cdot c_e(y_e+w_i)\\
 &= (d+1) C_i(s_i,\vecc{s}^*_{-i}) -C_i(\vecc{s}^*),
\end{align*}
or equivalently  $C_i(\vecc{s}^*)\leq (d+1) C_i(s_i,\vecc{s}^*_{-i})$. So $\vecc{s}^*$ is a $(d+1)$-approximate Nash equilibrium.
\end{proof}

\section*{Acknowledgments} We thank the anonymous reviewers for the careful and thorough reading of our manuscript, and for their valuable feedback.

\bibliographystyle{abbrvnat} 
\bibliography{weightedPoS}

\begin{thebibliography}{51}
\providecommand{\natexlab}[1]{#1}
\providecommand{\url}[1]{\texttt{#1}}
\expandafter\ifx\csname urlstyle\endcsname\relax
  \providecommand{\doi}[1]{doi: #1}\else
  \providecommand{\doi}{doi: \begingroup \urlstyle{rm}\Url}\fi

\bibitem[Abramowitz and Stegun(1964)]{Abramowitz1964}
M.~Abramowitz and I.~A. Stegun.
\newblock \emph{Handbook of Mathematical Functions with Formulas, Graphs, and
  Mathematical Tables}.
\newblock Dover, New York, 9th printing, 10th gpo printing edition, 1964.
\newblock ISBN 0486612724.
\newblock URL \url{http://people.math.sfu.ca/~cbm/aands/}.

\bibitem[Ackermann et~al.(2008)Ackermann, Röglin, and Vöcking]{Ackermann2008}
H.~Ackermann, H.~Röglin, and B.~Vöcking.
\newblock On the impact of combinatorial structure on congestion games.
\newblock \emph{Journal of the {ACM}}, 55\penalty0 (6):\penalty0 1--22, Dec.
  2008.
\newblock \doi{10.1145/1455248.1455249}.

\bibitem[Aland et~al.(2011)Aland, Dumrauf, Gairing, Monien, and
  Schoppmann]{Aland2011}
S.~Aland, D.~Dumrauf, M.~Gairing, B.~Monien, and F.~Schoppmann.
\newblock Exact price of anarchy for polynomial congestion games.
\newblock \emph{SIAM Journal on Computing}, 40\penalty0 (5):\penalty0
  1211--1233, Jan. 2011.
\newblock \doi{10.1137/090748986}.

\bibitem[Albers(2009)]{Albers2009}
S.~Albers.
\newblock On the value of coordination in network design.
\newblock \emph{SIAM Journal on Computing}, 38\penalty0 (6):\penalty0
  2273--2302, 2009.

\bibitem[Anshelevich et~al.(2008)Anshelevich, Dasgupta, Kleinberg, Tardos,
  Wexler, and Roughgarden]{ADKTWR04}
E.~Anshelevich, A.~Dasgupta, J.~Kleinberg, {\'{E}}.~Tardos, T.~Wexler, and
  T.~Roughgarden.
\newblock The price of stability for network design with fair cost allocation.
\newblock \emph{{SIAM} Journal on Computing}, 38\penalty0 (4):\penalty0
  1602--1623, Jan. 2008.
\newblock \doi{10.1137/070680096}.

\bibitem[Awerbuch et~al.(2013)Awerbuch, Azar, and Epstein]{AAE05}
B.~Awerbuch, Y.~Azar, and A.~Epstein.
\newblock The price of routing unsplittable flow.
\newblock \emph{SIAM Journal on Computing}, 42\penalty0 (1):\penalty0 160--177,
  Jan. 2013.
\newblock \doi{10.1137/070702370}.

\bibitem[Bernoulli(1713)]{bernoulli1713}
J.~Bernoulli.
\newblock \emph{Ars conjectandi, opus posthumum. Accedit Tractatus de seriebus
  infinitis, et epistola gallicé scripta de ludo pilae reticularis}.
\newblock impensis Thurnisiorum, fratrum, Basileæ, 1713.
\newblock URL \url{https://books.google.de/books?id=ebdcNfPz8l8C}.

\bibitem[Bhawalkar et~al.(2014)Bhawalkar, Gairing, and
  Roughgarden]{Bhawalkar2014a}
K.~Bhawalkar, M.~Gairing, and T.~Roughgarden.
\newblock Weighted congestion games: The price of anarchy, universal worst-case
  examples, and tightness.
\newblock \emph{ACM Trans. Econ. Comput.}, 2\penalty0 (4):\penalty0 141--1423,
  Oct. 2014.
\newblock \doi{10.1145/2629666}.

\bibitem[Bil{\`{o}}(2017)]{Bilo2017}
V.~Bil{\`{o}}.
\newblock A unifying tool for bounding the quality of non-cooperative solutions
  in weighted congestion games.
\newblock \emph{Theory of Computing Systems}, Dec. 2017.
\newblock \doi{10.1007/s00224-017-9826-1}.

\bibitem[Bil{\`{o}} et~al.(2013)Bil{\`{o}}, Caragiannis, Fanelli, and
  Monaco]{Bilo2013}
V.~Bil{\`{o}}, I.~Caragiannis, A.~Fanelli, and G.~Monaco.
\newblock {Improved Lower Bounds on the Price of Stability of Undirected
  Network Design Games}.
\newblock \emph{Theory of Computing Systems}, 52\penalty0 (4):\penalty0
  668--686, May 2013.
\newblock \doi{10.1007/s00224-012-9411-6}.

\bibitem[Bilò and Vinci(2017)]{Bilo2017a}
V.~Bilò and C.~Vinci.
\newblock On the impact of singleton strategies in congestion games.
\newblock 2017.
\newblock \doi{10.4230/lipics.esa.2017.17}.

\bibitem[Bilò et~al.(2014)Bilò, Flammini, and Moscardelli]{Bilo2014}
V.~Bilò, M.~Flammini, and L.~Moscardelli.
\newblock {The price of stability for undirected broadcast network design with
  fair cost allocation is constant}.
\newblock \emph{Games and Economic Behavior}, 2014.

\bibitem[Caragiannis et~al.(2010)Caragiannis, Flammini, Kaklamanis,
  Kanellopoulos, and Moscardelli]{Caragiannis2010a}
I.~Caragiannis, M.~Flammini, C.~Kaklamanis, P.~Kanellopoulos, and
  L.~Moscardelli.
\newblock Tight bounds for selfish and greedy load balancing.
\newblock \emph{Algorithmica}, 61\penalty0 (3):\penalty0 606--637, 2010.
\newblock \doi{10.1007/s00453-010-9427-8}.

\bibitem[Caragiannis et~al.(2011)Caragiannis, Fanelli, Gravin, and
  Skopalik]{Caragiannis2011}
I.~Caragiannis, A.~Fanelli, N.~Gravin, and A.~Skopalik.
\newblock Efficient computation of approximate pure {N}ash equilibria in
  congestion games.
\newblock In \emph{2011 {IEEE} 52nd Annual Symposium on Foundations of Computer
  Science}, Oct. 2011.
\newblock \doi{10.1109/focs.2011.50}.

\bibitem[Caragiannis et~al.(2015)Caragiannis, Fanelli, Gravin, and
  Skopalik]{Caragiannis2015a}
I.~Caragiannis, A.~Fanelli, N.~Gravin, and A.~Skopalik.
\newblock Approximate pure {N}ash equilibria in weighted congestion games:
  Existence, efficient computation, and structure.
\newblock \emph{ACM Trans. Econ. Comput.}, 3\penalty0 (1):\penalty0 2:1--2:32,
  Mar. 2015.
\newblock \doi{10.1145/2614687}.

\bibitem[Chen and Roughgarden(2008)]{Chen2008}
H.-L. Chen and T.~Roughgarden.
\newblock Network design with weighted players.
\newblock \emph{Theory of Computing Systems}, 45\penalty0 (2):\penalty0 302,
  July 2008.
\newblock \doi{10.1007/s00224-008-9128-8}.

\bibitem[Christodoulou and Gairing(2015)]{Christodoulou2015}
G.~Christodoulou and M.~Gairing.
\newblock Price of stability in polynomial congestion games.
\newblock \emph{ACM Transactions on Economics and Computation}, 4\penalty0
  (2):\penalty0 1--17, Dec. 2015.
\newblock \doi{10.1145/2841229}.

\bibitem[Christodoulou and Koutsoupias(2005{\natexlab{a}})]{CK05a}
G.~Christodoulou and E.~Koutsoupias.
\newblock {The price of anarchy of finite congestion games}.
\newblock In \emph{STOC '05: Proceedings of the 37th annual ACM symposium on
  Theory of computing}, pages 67--73, New York, NY, USA, 2005{\natexlab{a}}.
  ACM.
\newblock \doi{10.1145/1060590.1060600}.

\bibitem[Christodoulou and Koutsoupias(2005{\natexlab{b}})]{CK05b}
G.~Christodoulou and E.~Koutsoupias.
\newblock On the price of anarchy and stability of correlated equilibria of
  linear congestion games.
\newblock In \emph{Algorithms - ESA 2005, 13th Annual European Symposium},
  pages 59--70, 2005{\natexlab{b}}.

\bibitem[Christodoulou et~al.(2011)Christodoulou, Koutsoupias, and
  Spirakis]{Christodoulou2011a}
G.~Christodoulou, E.~Koutsoupias, and P.~G. Spirakis.
\newblock On the performance of approximate equilibria in congestion games.
\newblock \emph{Algorithmica}, 61\penalty0 (1):\penalty0 116--140, 2011.
\newblock \doi{10.1007/s00453-010-9449-2}.

\bibitem[Christodoulou et~al.(2018)Christodoulou, Gairing, Giannakopoulos, and
  Spirakis]{cggs2018}
G.~Christodoulou, M.~Gairing, Y.~Giannakopoulos, and P.~G. Spirakis.
\newblock The price of stability of weighted congestion games.
\newblock In \emph{45th International Colloquium on Automata, Languages, and
  Programming, {ICALP} 2018, July 9-13, 2018, Prague, Czech Republic}, pages
  150:1--150:16, 2018.
\newblock \doi{10.4230/LIPIcs.ICALP.2018.150}.

\bibitem[Conway and Guy(1996)]{Conway1996}
J.~H. Conway and R.~K. Guy.
\newblock \emph{The Book of Numbers}.
\newblock Springer New York, 1996.
\newblock \doi{10.1007/978-1-4612-4072-3}.

\bibitem[Dunkel and Schulz(2008)]{Dunkel2008}
J.~Dunkel and A.~S. Schulz.
\newblock On the complexity of pure-strategy {N}ash equilibria in congestion
  and local-effect games.
\newblock \emph{Mathematics of Operations Research}, 33\penalty0 (4):\penalty0
  851--868, 2008.
\newblock ISSN 0364765X, 15265471.

\bibitem[Fabrikant et~al.(2004)Fabrikant, Papadimitriou, and
  Talwar]{Fabrikant2004a}
A.~Fabrikant, C.~Papadimitriou, and K.~Talwar.
\newblock The complexity of pure {N}ash equilibria.
\newblock In \emph{Proceedings of the Thirty-sixth Annual ACM Symposium on
  Theory of Computing}, STOC '04, pages 604--612, New York, NY, USA, 2004.
\newblock \doi{10.1145/1007352.1007445}.

\bibitem[Faulhaber(1631)]{faulhaber1631}
J.~Faulhaber.
\newblock \emph{Academia Algebræ: Darinnen die miraculosische Inventiones zu
  den höchsten Cossen weiters continuirt und profitiert werden}.
\newblock Johann Ulrich Schönigs, Augspurg, 1631.
\newblock URL \url{https://books.google.de/books?id=0pw\_AAAAcAAJ}.

\bibitem[Feldotto et~al.(2017)Feldotto, Gairing, Kotsialou, and
  Skopalik]{Feldotto2017}
M.~Feldotto, M.~Gairing, G.~Kotsialou, and A.~Skopalik.
\newblock Computing approximate pure {N}ash equilibria in {S}hapley value
  weighted congestion games.
\newblock In \emph{Web and Internet Economics}, pages 191--204. Springer
  International Publishing, 2017.
\newblock \doi{10.1007/978-3-319-71924-5_14}.

\bibitem[Fiat et~al.(2006)Fiat, Kaplan, Levy, Olonetsky, and Shabo]{Fiat2006}
A.~Fiat, H.~Kaplan, M.~Levy, S.~Olonetsky, and R.~Shabo.
\newblock On the price of stability for designing undirected networks with fair
  cost allocations.
\newblock In M.~Bugliesi, B.~Preneel, V.~Sassone, and I.~Wegener, editors,
  \emph{Automata, Languages and Programming: 33rd International Colloquium
  (ICALP)}, pages 608--618, 2006.
\newblock \doi{10.1007/11786986_53}.

\bibitem[Fotakis et~al.(2005)Fotakis, Kontogiannis, and Spirakis]{Fotakis2005a}
D.~Fotakis, S.~Kontogiannis, and P.~Spirakis.
\newblock {Selfish unsplittable flows}.
\newblock \emph{Theoretical Computer Science}, 348\penalty0 (2):\penalty0
  226--239, 2005.

\bibitem[Fotakis et~al.(2009)Fotakis, Kontogiannis, Koutsoupias, Mavronicolas,
  and Spirakis]{Fotakis2009a}
D.~Fotakis, S.~Kontogiannis, E.~Koutsoupias, M.~Mavronicolas, and P.~Spirakis.
\newblock {The structure and complexity of {N}ash equilibria for a selfish
  routing game}.
\newblock \emph{Theoretical Computer Science}, 410\penalty0 (36):\penalty0
  3305--3326, 2009.
\newblock \doi{10.1016/j.tcs.2008.01.004}.

\bibitem[Goemans et~al.(2005)Goemans, Mirrokni, and Vetta]{Goemans2005}
M.~Goemans, V.~Mirrokni, and A.~Vetta.
\newblock Sink equilibria and convergence.
\newblock In \emph{46th Annual IEEE Symposium on Foundations of Computer
  Science (FOCS'05)}, pages 142--151, Oct. 2005.
\newblock \doi{10.1109/SFCS.2005.68}.

\bibitem[Graham et~al.(1989)Graham, Knuth, and Patashnik]{Graham1989a}
R.~L. Graham, D.~E. Knuth, and O.~Patashnik.
\newblock \emph{Concrete Mathematics: A Foundation for Computer Science}.
\newblock Addison-Wesley Longman Publishing Co., Inc., Boston, MA, USA, 1989.

\bibitem[Hansknecht et~al.(2014)Hansknecht, Klimm, and
  Skopalik]{Hansknecht2014}
C.~Hansknecht, M.~Klimm, and A.~Skopalik.
\newblock Approximate pure {N}ash equilibria in weighted congestion games.
\newblock In K.~Jansen, J.~D.~P. Rolim, N.~R. Devanur, and C.~Moore, editors,
  \emph{Approximation, Randomization, and Combinatorial Optimization.
  Algorithms and Techniques (APPROX/RANDOM 2014)}, volume~28 of \emph{Leibniz
  International Proceedings in Informatics (LIPIcs)}, pages 242--257. Schloss
  Dagstuhl--Leibniz-Zentrum fuer Informatik, 2014.
\newblock \doi{10.4230/LIPIcs.APPROX-RANDOM.2014.242}.

\bibitem[Harks and Klimm(2012)]{Harks2012a}
T.~Harks and M.~Klimm.
\newblock On the existence of pure {N}ash equilibria in weighted congestion
  games.
\newblock \emph{Mathematics of Operations Research}, 37\penalty0 (3):\penalty0
  419--436, 2012.
\newblock \doi{10.1287/moor.1120.0543}.

\bibitem[Harks et~al.(2011)Harks, Klimm, and M{\"{o}}hring]{Harks2011}
T.~Harks, M.~Klimm, and R.~H. M{\"{o}}hring.
\newblock Characterizing the existence of potential functions in weighted
  congestion games.
\newblock \emph{Theory Comput Syst}, 49:\penalty0 46--70, 2011.
\newblock \doi{10.1007/s00224-011-9315-x}.

\bibitem[Harks et~al.(2012)Harks, Klimm, and M{\"{o}}hring]{Harks2012}
T.~Harks, M.~Klimm, and R.~H. M{\"{o}}hring.
\newblock {Strong equilibria in games with the lexicographical improvement
  property}.
\newblock \emph{International Journal of Game Theory}, 42\penalty0
  (2):\penalty0 461--482, 2012.
\newblock \doi{10.1007/s00182-012-0322-1}.

\bibitem[Jacobi(1834)]{jacobi1834}
C.~G.~J. Jacobi.
\newblock De usu legitimo formulae summatoriae maclaurinianae.
\newblock \emph{Journal für die reine und angewandte Mathematik}, 12:\penalty0
  263--272, 1834.
\newblock URL
  \url{http://www.digizeitschriften.de/dms/img/?PID=GDZPPN00214008X}.

\bibitem[Knuth(1993)]{Knuth1993}
D.~E. Knuth.
\newblock Johann {F}aulhaber and sums of powers.
\newblock \emph{Mathematics of Computation}, 61\penalty0 (203):\penalty0
  277--277, Sept. 1993.
\newblock \doi{10.1090/s0025-5718-1993-1197512-7}.

\bibitem[Koutsoupias and Papadimitriou(1999)]{KP99}
E.~Koutsoupias and C.~Papadimitriou.
\newblock {Worst-case equilibria}.
\newblock In \emph{16th Annual Symposium on Theoretical Aspects of Computer
  Science}, STACS '99, pages 404--413, 1999.

\bibitem[Lee and Ligett(2013)]{Lee2013}
E.~Lee and K.~Ligett.
\newblock Improved bounds on the price of stability in network cost sharing
  games.
\newblock In \emph{Proceedings of the 14th ACM Conference on Electronic
  Commerce}, EC '13, pages 607--620. ACM, 2013.
\newblock \doi{10.1145/2482540.2482562}.

\bibitem[Lehmer(1940)]{Lehmer1940}
D.~H. Lehmer.
\newblock On the maxima and minima of {B}ernoulli polynomials.
\newblock \emph{The American Mathematical Monthly}, 47\penalty0 (8):\penalty0
  533, Oct. 1940.
\newblock \doi{10.2307/2303833}.

\bibitem[Libman and Orda(2001)]{Libman2001}
L.~Libman and A.~Orda.
\newblock Atomic resource sharing in noncooperative networks.
\newblock \emph{Telecommunication Systems}, 17\penalty0 (4):\penalty0 385--409,
  Aug. 2001.
\newblock \doi{10.1023/A:1016770831869}.

\bibitem[Mitrinovi{\'{c}}(1970)]{Mitrinovic1970}
D.~S. Mitrinovi{\'{c}}.
\newblock \emph{Analytic Inequalities}.
\newblock Springer Berlin Heidelberg, 1970.
\newblock \doi{10.1007/978-3-642-99970-3}.

\bibitem[Monderer and Shapley(1996)]{Monderer:1996sp}
D.~Monderer and L.~S. Shapley.
\newblock Potential games.
\newblock \emph{Games and economic behavior}, 14\penalty0 (1):\penalty0
  124--143, 1996.

\bibitem[Nisan et~al.(2007)Nisan, Roughgarden, Tardos, and Vazirani]{2007a}
N.~Nisan, T.~Roughgarden, {\'{E}}.~Tardos, and V.~Vazirani, editors.
\newblock \emph{{Algorithmic Game Theory}}.
\newblock Cambridge University Press, 2007.

\bibitem[Panagopoulou and Spirakis(2007)]{Panagopoulou2007}
P.~N. Panagopoulou and P.~G. Spirakis.
\newblock Algorithms for pure {N}ash equilibria in weighted congestion games.
\newblock \emph{Journal of Experimental Algorithmics}, 11:\penalty0 27, Feb.
  2007.
\newblock \doi{10.1145/1187436.1216584}.

\bibitem[Rosenthal(1973{\natexlab{a}})]{Rosenthal1973a}
R.~W. Rosenthal.
\newblock {A class of games possessing pure-strategy {N}ash equilibria}.
\newblock \emph{International Journal of Game Theory}, 2\penalty0 (1):\penalty0
  65--67, 1973{\natexlab{a}}.

\bibitem[Rosenthal(1973{\natexlab{b}})]{Rosenthal1973b}
R.~W. Rosenthal.
\newblock {The network equilibrium problem in integers}.
\newblock \emph{Networks}, 3\penalty0 (1):\penalty0 53--59, 1973{\natexlab{b}}.

\bibitem[Roughgarden(2015)]{Rou09}
T.~Roughgarden.
\newblock Intrinsic robustness of the price of anarchy.
\newblock \emph{J. ACM}, 62\penalty0 (5):\penalty0 32:1--32:42, Nov. 2015.
\newblock \doi{10.1145/2806883}.

\bibitem[Roughgarden and Tardos(2002)]{RT02}
T.~Roughgarden and {\'{E}}.~Tardos.
\newblock {How bad is selfish routing?}
\newblock \emph{J. ACM}, 49\penalty0 (2):\penalty0 236--259, Mar. 2002.
\newblock \doi{10.1145/506147.506153}.

\bibitem[Roughgarden and Tardos(2004)]{RT04}
T.~Roughgarden and {\'{E}}.~Tardos.
\newblock Bounding the inefficiency of equilibria in nonatomic congestion
  games.
\newblock \emph{Games and Economic Behavior}, 47\penalty0 (2):\penalty0
  389--403, May 2004.
\newblock \doi{10.1016/j.geb.2003.06.004}.

\bibitem[Schulz and Moses(2003)]{Schulz2003}
A.~S. Schulz and N.~S. Moses.
\newblock On the performance of user equilibria in traffic networks.
\newblock In \emph{Proceedings of the Fourteenth Annual ACM-SIAM Symposium on
  Discrete Algorithms}, SODA '03, pages 86--87, Philadelphia, PA, USA, 2003.
  Society for Industrial and Applied Mathematics.

\end{thebibliography}
\end{document}